\documentclass[onecolumn,10pt]{IEEEtran}
\usepackage{amsmath}
\usepackage{indentfirst,mathrsfs}
\usepackage{amsfonts,caption2}
\usepackage{amsmath,amsthm,amssymb}
\usepackage{color}
\usepackage{cite}

\usepackage{tabularx}
\usepackage{multirow}
\usepackage{booktabs}
\usepackage{tabularx}
\usepackage{array}
\usepackage{float}
\usepackage{ragged2e}
\usepackage{placeins}
\usepackage{multirow}
\usepackage{booktabs}
\usepackage{makecell}
\usepackage{titlesec}
\newcolumntype{L}[1]
{>{\RaggedRight\arraybackslash}p{#1}}

\newtheorem{Theorem}{Theorem}
\newtheorem{Corollary}{Corollary}
\newtheorem{Definition}{Definition}
\newtheorem{Example}{Example}
\newtheorem{Remark}{Remark}
\newtheorem{Proposition}{Proposition}

\newtheorem{Lemma}{Lemma}
\newcommand{\F}{\mathbb{F}}

\allowdisplaybreaks[4]

\begin{document}
\title{Explicit Constructions of Maximum-Cardinality Families of Disjoint Spectra Plateaued Functions Without Nonzero Linear Structures }
\author{
Chen Wang, 
Xiaoyan Zhang, 
Chunming Tang, 
and Zhengchun Zhou,
\thanks{
The authors are with the School of Information Science and Technology, and the School of Mathematics at Southwest Jiaotong University, Chengdu 611756, China (E-mail: 
wangchen1131707473@163.com,
zhangxiaoyan@my.swjtu.edu.cn,
tangchunmingmath@163.com,
zzc@swjtu.edu.cn)
}
}

\maketitle

\begin{abstract}
Families of disjoint spectra plateaued Boolean functions are useful in
secondary constructions of cryptographic Boolean functions. Of
particular interest are maximum-cardinality families of such functions
without nonzero linear structures. To our knowledge, the only previous general construction achieving both properties is spectral.
In this paper, we present two new explicit algebraic constructions
within a unified framework, one based on linear functions and the other
on partially linear functions with bent components. Let \(p\geq 2\)
and \(q\geq 0\) satisfy \(q<2^p-p-1\), and set \(m=p+q\). Both
constructions yield maximum-cardinality families of \(2^{q+1}\)
disjoint spectra \((q+1)\)-plateaued Boolean functions without nonzero
linear structures. Every member has an explicit generalized
Maiorana--McFarland representation.
The first construction produces functions in \(m+p+1\) variables and
realizes any prescribed common algebraic degree
\(3\leq d\leq p+1\), provided that
\(q<\sum_{i=2}^{d-1}\binom{p}{i}\); its maximum attainable degree
\(p+1\) is optimal. The second construction produces functions in
\(n+p+1\) variables, where \(n>m\) and \(n-m\) is even, and realizes
any prescribed common algebraic degree
\(3\leq d\leq p+(n-m)/2\), provided that
\(q<\sum_{i=2}^{\min\{d-1,p\}}\binom{p}{i}\); its maximum attainable
degree \(p+(n-m)/2\) is next-to-optimal.
As an application, given a binary \([m,q,\delta]\) code with
\(\delta\geq 2t+1\), we use it to specify the first construction
and apply a common invertible linear transformation to obtain
a family of \(2^{q+1}-2\) \(t\)-resilient disjoint spectra
\((q+1)\)-plateaued functions without nonzero linear structures.
Based on an existing framework, we further obtain strictly almost optimal (SAO)
\(t\)-resilient Boolean functions without nonzero linear
structures and with optimal algebraic degree.
\end{abstract}

\noindent{\bfseries Keywords}:
Maximum-cardinality families; disjoint spectra plateaued functions; 
generalized Maiorana--McFarland class; linear structures; 
algebraic degree; resilient Boolean functions

\section{Introduction}
Plateaued Boolean functions were introduced by Zheng and Zhang \cite{plateauedchushi} as a generalization of bent functions and have since received considerable attention because of their spectral regularity and applications in cryptography\cite{cry1,crp1,crp2}, coding theory\cite{code1,code2,code3,code4}, and sequence design \cite{sequence1,sequence2,sequence3}. Their Walsh spectra take values in $\{0,\pm 2^{(n+s)/2}\}$ for some admissible integer $s$, which makes them a natural class for studying the interaction among spectral support, algebraic degree, and linear structures.

Over the past two decades, numerous constructions of plateaued functions have been developed \cite{carletbook,framework,plateaued1,plateaued2,plateaued3,plateaued4,plateaued5,cry1,cry2,cry3}. Among them, the generalized Maiorana--McFarland $(\mathrm{GMM})$ construction, obtained by modifying the classical Maiorana--McFarland construction of bent functions, produces plateaued functions under suitable conditions \cite{carletbook,plateaued2}. Another important approach is the indirect-sum construction, which combines constituent Boolean functions satisfying appropriate spectral compatibility conditions \cite{carletbook}.

Disjoint spectra plateaued functions constitute an important class of plateaued Boolean functions. In recent years, such spectral-disjointness structures have been used in several secondary constructions. For example, maximum-cardinality families of $s$-plateaued functions can be concatenated to produce bent functions \cite{bent1,bent2}. Totally disjoint spectra plateaued functions have also been used to construct five-valued-spectrum Boolean functions \cite{5}. The formal study of families of disjoint spectra plateaued
functions dates back to 2009, when Zhang and Xiao introduced
the notion and developed a general framework for constructing
almost optimal resilient Boolean functions from large families
of such functions.
 Since their construction relied mainly
on partially linear functions, which possess nonzero linear
structures, they asked whether large such families could be constructed
with members inequivalent to partially linear functions.

In 2014, Zhang, Carlet, Hu, and Cao modified a generalized
indirect-sum construction to obtain quadruples of disjoint spectra plateaued functions
 and gave sufficient conditions
for all four functions to have no nonzero linear structures
\cite{2014disjoint}. This provided an important first step toward
addressing the problem of Zhang and Xiao, although the family size
was limited to four. Subsequently, Zhang, Wei, Pasalic, and Xia
developed a recursive method that progressively reduces the linear
kernel and uses basis conditions realized through primitive
polynomials and companion matrices to eliminate nonzero linear
structures \cite{disjoint1}. Their method produced large families
of disjoint spectra plateaued functions that are inequivalent to
partially linear functions, thereby resolving the 2009 problem.
Nevertheless, the resulting families did not attain the maximum
possible cardinality, and the construction required repeated
iterations together with nontrivial ordering and basis conditions.

In 2019, Hodžić, Pasalic, Wei, and Zhang developed a spectral
method that specifies the Walsh support together with the signs of
its nonzero coefficients, rather than starting from an algebraic
normal form \cite{disjoint2}. Their method yields a maximum-cardinality family of disjoint spectra
plateaued functions without nonzero linear structures. They also distinguished trivial plateaued
functions, whose affine Walsh supports make them partially bent and
force nonzero linear structures, from nontrivial plateaued functions,
whose Walsh supports are non-affine and which are EA-inequivalent to
the trivial class. In particular, a nontrivial plateaued function is
guaranteed to have no nonzero linear structures whenever its Walsh support affinely spans the ambient space. Thus, the 2019
work resolved the maximum-cardinality problem at the spectral level.
However, in the spectral construction, explicit algebraic normal forms are not generally provided, and the algebraic degrees of the family members remain undetermined; the degrees of the family members in the 2018 construction \cite{disjoint1} were likewise not determined. 

Very recently, Pasalic, Kudin, Hodžić, and Khan gave an explicit algebraic construction of
individual plateaued functions in $\mathcal{GMM}$ that
simultaneously attain the maximum possible algebraic degree and admit
no nonzero linear structures~\cite{GMMNLSO}. More precisely, they
considered functions of the form
$f(x,y)=x\cdot\varphi(y)\oplus h(y)$,
where suitable algebraic conditions on the injective mapping
$\varphi$ ensure plateauedness and maximum algebraic degree, while
the nonconstancy of $a\cdot\varphi(y)$ for every nonzero $a$ excludes
nonzero linear structures. Their construction therefore provides
explicit individual plateaued functions belonging to $\mathcal{GMM}$ with both properties. It does not, however, organize such functions into a
maximum-cardinality family of disjoint spectra plateaued functions. Thus, to the best of our knowledge, no explicit algebraic construction
is known that produces a maximum-cardinality family of disjoint spectra
plateaued functions whose members admit no nonzero linear structures
and have a prescribed common algebraic degree.

Building on these developments, we introduce two new explicit
algebraic constructions within a unified framework, based
respectively on linear functions and on partially linear functions
with bent components. Both constructions yield maximum-cardinality
families of \(2^{q+1}\) disjoint spectra \((q+1)\)-plateaued Boolean functions without nonzero linear structures. The first construction produces
functions in \(m+p+1\) variables and realizes every prescribed common
algebraic degree \(3\leq d\leq p+1\), whenever
$q<\sum_{i=2}^{d-1}\binom{p}{i}$;
in particular, its maximum attainable degree \(p+1\) is optimal.
The second construction produces functions in \(n+p+1\) variables,
where \(n>m\) and \(n-m\) is even, and realizes every prescribed
common algebraic degree
$3\leq d\leq p+(n-m)/2$,
whenever
$q<\sum_{i=2}^{\min\{d-1,p\}}\binom{p}{i}$;
its maximum attainable degree \(p+(n-m)/2\) is next-to-optimal.
In both constructions, every member has an explicit $\mathrm{GMM}$ representation.

Moreover, for either construction, if $q\neq0$ and
$\gamma_k\neq\gamma_{k'}$ for some $k\neq k'$, then the Walsh
supports of the resulting family are not all translations of a
single fixed support. In contrast, every family obtained from
\cite[Theorem~4.4]{disjoint2} has the common-translation property.
Since this property is preserved under a common EA transformation,
any family obtained from either of our constructions under the
preceding condition is not commonly EA-equivalent to any family
obtained from that spectral construction. We further show that the
Walsh-support self-sum cardinality is invariant under EA
transformations and use this invariant to give concrete examples
from both constructions that are not memberwise EA-equivalent to
any family obtained from the spectral construction.

As a further application of the first construction, suppose that
a binary \([m,q,\delta]\) linear code with \(\delta\ge 2t+1\) exists.
Using this code and a common invertible linear transformation,
we obtain a family of \(2^{q+1}-2\) \(t\)-resilient disjoint
spectra \((q+1)\)-plateaued functions without nonzero linear
structures. A general counting upper bound shows that no such family in
\(m+p+1\) variables can contain more than \(2^{q+1}-1\)
functions. Consequently, under the above code-existence
assumption, the maximum attainable cardinality is either
\(2^{q+1}-2\) or \(2^{q+1}-1\), so our family is either
optimal or falls short of the optimum by only one function.
If, in addition,
$\sum_{i=0}^{t}\binom{m+p+1}{i}\le 2^{q+1}-2$,
we then select
$\sum_{i=0}^{t}\binom{m+p+1}{i}$ members from this family,
index them by the vectors in $\mathbb F_2^{m+p+1}$ of
Hamming weight at most $t$, and use them as constituents
in the framework of \cite{2009disjoint}. This
produces two classes of SAO
\(t\)-resilient Boolean functions without nonzero linear
structures. The first class has the higher guaranteed
nonlinearity but does not in general attain the optimal
algebraic degree, whereas the second attains the optimal
algebraic degree at the cost of a lower guaranteed
nonlinearity. Explicit instances for \(t=1,2,\) and \(3\)
demonstrate the applicability of this approach to different
resiliency orders.

Table~\ref{tab:comparison} summarizes the main 
differences between the 
proposed constructions and 
the most closely related 
works.

The remainder of the paper is organized as follows.
Section~\ref{pre} reviews the necessary background.
Section~\ref{sec-0} introduces a unified framework for constructing
maximum-cardinality families of disjoint spectra plateaued functions. Section~\ref{sec-plateauedwithoutlinear} develops the first
construction based on linear functions and compares the resulting
families with the spectral construction.
Explicit coordinate forms are then used to determine the
attainable common algebraic degrees.
Section~\ref{sec-partially-bent} presents the second
construction, based on partially linear functions with bent
components. As in the preceding section, it compares the
resulting families with the spectral construction and uses
coordinate forms to determine their attainable common
algebraic degrees. As an application of the first construction,
Section~\ref{sec-SAO} obtains a large family of
\(t\)-resilient disjoint spectra plateaued functions without
nonzero linear structures and constructs SAO \(t\)-resilient
Boolean functions without nonzero linear structures and with
optimal algebraic degree. Finally, Section~\ref{con} concludes the paper.
\begin{table}[H]\footnotesize
\centering
\caption{Comparison of the algebraic and structural properties of closely related constructions.}
\label{tab:comparison}
\setlength{\tabcolsep}{3pt}
\renewcommand{\arraystretch}{1.15}
\renewcommand{\tabularxcolumn}[1]{m{#1}}

\begin{tabularx}{\textwidth}{
    @{}
    >{\raggedright\arraybackslash}m{0.22\textwidth}
    *{5}{>{\centering\arraybackslash}X}
    @{}
}
\toprule
\multicolumn{1}{c}{\multirow{2}{*}{Property}}
& \multicolumn{3}{c}{Related works}
& \multicolumn{2}{c}{This work} \\
\cmidrule(lr){2-4}
\cmidrule(lr){5-6}
&
2018 [1]
&
2019 [2]
&
2026 [20]
&
$\Omega_R^{\mathcal{F}_0}$
&
$\Omega_R^{\mathcal{F}_1}$ \\
\midrule

Construction scope
& Family
& Family
& Individual
& Family
& Family \\

Family cardinality
& $<2^s$
& $2^s$
& N/A
& $2^{q+1}$
& $2^{q+1}$ \\

Maximum cardinality attained
& No
& Yes
& N/A
& Yes
& Yes \\

\addlinespace[2pt]

Construction method
& Recursive
& Spectral
& \shortstack{Explicit $\mathrm{GMM}$}
& \shortstack{Explicit algebraic}
& \shortstack{Explicit algebraic} \\

Family of disjoint spectra functions
& Yes
& Yes
& N/A
& Yes
& Yes \\

No nonzero linear structures
& Yes
& Yes
& Yes
& Yes
& Yes \\

\addlinespace[2pt]

Algebraic-degree control
& \shortstack{Not established}
& \shortstack{Not established}
& Optimal
& $3\leq d\leq p+1$
& \shortstack{$3\leq d\leq p+\tfrac{n-m}{2}$} \\

Relationship to $\mathcal{GMM}$
& \shortstack{Not established}
& \shortstack{Not established}
& Direct
& Direct
& Direct \\

\bottomrule
\end{tabularx}

\smallskip
\begin{minipage}{\textwidth}
\footnotesize
Note. For both constructions presented in this work,
$m=p+q$ and $s=q+1$.  The families
$\Omega_R^{\mathcal F_0}$ and $\Omega_R^{\mathcal F_1}$
are obtained from linear functions and partially linear
functions with bent components, respectively. Their maximum
attainable algebraic degrees are $p+1$ (optimal) and
$p+(n-m)/2$ (next-to-optimal), respectively.
\end{minipage}
\end{table}
\FloatBarrier

\section{Preliminaries}\label{pre}
An $n$-variable Boolean function is a mapping
$f:\mathbb F_2^n\rightarrow\mathbb F_2$.
The set of all $n$-variable Boolean functions is denoted by
$\mathcal B_n$.

For each integer $0\leq i\leq 2^n-1$, write
$i=\sum_{j=0}^{n-1}i_j2^j$ with $i_j\in\mathbb F_2$,
and associate $i$ with the vector
$\boldsymbol{i}=(i_0,\ldots,i_{n-1})\in\mathbb F_2^n$.
The truth table of $f$ is the binary vector
$\bigl(f(\boldsymbol{0}),f(\boldsymbol{1}),\ldots,
f(\boldsymbol{2^n-1})\bigr)$.

Every $f\in\mathcal B_n$ has a unique algebraic normal form (ANF)
\[
f(x)=
\bigoplus_{u\in\mathbb F_2^n}
a_u x^u,
\qquad
x^u=\prod_{j=0}^{n-1} x_j^{u_j},
\]
where $x=(x_0,\ldots,x_{n-1})$, and the coefficient $a_u\in\mathbb F_2$ is given by
\begin{equation}\label{Mob}
a_u=\bigoplus_{v\preceq u} f(v),
\end{equation}
with $v\preceq u$ denoting $v_j\leq u_j$ for all $0\leq j\leq n-1$.
For a nonzero Boolean function $f$, its algebraic degree is
$\deg(f)=
\max\bigl\{\operatorname{wt}(u)\,|\,a_u\neq 0\bigr\}$,
where $\operatorname{wt}(u)$ denotes the Hamming weight of $u$.
By convention, $\deg(0)=0$.

For \(f\in\mathcal B_n\), its Walsh transform at
\(\omega\in\mathbb F_2^n\) is defined by
\[
W_f(\omega)=
\sum_{x\in\mathbb F_2^n}
(-1)^{f(x)\oplus\omega\cdot x}.
\]
The Walsh support of \(f\) is
$\operatorname{supp}(W_f)=
\{\omega\in\mathbb F_2^n\mid W_f(\omega)\neq 0\}$.
In terms of Walsh spectra, the nonlinearity of $f$ is given by
\[
N_f
=2^{n-1}-\frac{1}{2}
\max_{\omega\in\mathbb F_2^n}|W_f(\omega)|.
\]
An \(n\)-variable Boolean function \(f\) is called strictly
almost optimal (SAO) if
\(N_f>2^{n-1}-2^{\lfloor n/2\rfloor}\).

The Walsh transform satisfies Parseval's identity
\(\sum_{\omega\in\mathbb F_2^n}W_f^2(\omega)=2^{2n}\).
Hence,
\(\max_{\omega\in\mathbb F_2^n}|W_f(\omega)|
\geq 2^{n/2}\).
For even \(n\), a Boolean function \(f\in\mathcal B_n\) is
called bent if
\(W_f(\omega)\in\{-2^{n/2},2^{n/2}\}\) for every
\(\omega\in\mathbb F_2^n\).
A bent function has nonlinearity
\(N_f=2^{n-1}-2^{n/2-1}\), the maximum possible for even \(n\).
For a bent function \(f\), its dual
\(\widetilde f\in\mathcal B_n\) is uniquely defined by
\(W_f(\omega)=2^{n/2}(-1)^{\widetilde f(\omega)}\),
where \(\omega\in\mathbb F_2^n\).
The dual \(\widetilde f\) is also bent.

Bent functions also admit an equivalent characterization in terms
of their derivatives. Specifically, \(f\) is bent if and only if
\(D_\alpha f(x)=f(x)\oplus f(x\oplus\alpha)\) is balanced for
every \(\alpha\in\mathbb F_2^n\setminus\{\boldsymbol{0}_n\}\)
\cite{carletbook}. Consequently, a bent function
admits no nonzero linear structure.
If \(n \geq 4\), the algebraic degree of a bent function is at most
\(n/2\), and every integer degree between \(2\) and \(n/2\) is
attainable. Every bent function in two variables has algebraic
degree \(2\).

More generally, \(f\) is called \(s\)-plateaued if
$W_f(\omega)\in\{0,\pm2^{(n+s)/2}\}$
for every \(\omega\in\mathbb F_2^n\), where
\(0\leq s\leq n\) and \(n+s\) is even. In this case,
$|\operatorname{supp}(W_f)|=2^{n-s}$ and
$\deg(f)\leq (n-s)/2+1$.
A plateaued function is called semi-bent when \(s=1\) for odd
\(n\) and \(s=2\) for even \(n\). An \(s\)-plateaued function
with \(s>0\) is said to have optimal algebraic degree if
$\deg(f)=(n-s)/2+1$.

We next recall the generalized Maiorana--McFarland
$(\mathrm{GMM})$ construction. 

\begin{Definition}
Let $r,t$ be positive integers. A Boolean function
$g\in\mathcal{B}_{r+t}$ of the form
\[
g(x,y)=x\cdot\pi(y)\oplus h(y),
\qquad
x\in\mathbb{F}_2^r,\quad y\in\mathbb{F}_2^t,
\]
where $\pi:\mathbb{F}_2^t\to\mathbb{F}_2^r$ is a mapping and
$h\in\mathcal{B}_t$, is called a generalized
Maiorana--McFarland $(\mathrm{GMM})$ function. The set of all such functions,
as $\pi$ and $h$ vary, is called the generalized
Maiorana--McFarland class and is denoted by
$\mathcal{GMM}$.
\end{Definition}

Two standard sufficient conditions for a $\mathrm{GMM}$ function
to be plateaued are as follows. If $\pi$ is injective, which
necessarily requires $t\leq r$, then $g$ is $(r-t)$-plateaued.
If $\pi$ is exactly two-to-one onto its image, which necessarily
requires $t\leq r+1$, then $g$ is $(r-t+2)$-plateaued
~\cite{carletbook,plateaued2}.

Since the constructions developed below depend on prescribed
relations among Walsh supports, we adopt the following terminology.
\begin{Definition}
Let $\mathcal F=\{f_i\mid i\in I\}\subseteq\mathcal B_n$
be a family of Boolean functions. We call $\mathcal F$ a
family of disjoint spectra functions if
$\operatorname{supp}(W_{f_i})\cap
\operatorname{supp}(W_{f_j})=\varnothing$
for all distinct $i,j\in I$. If every $f_i$ is plateaued,
we call $\mathcal F$ a
family of disjoint spectra plateaued functions.
\end{Definition}

\begin{Definition}
    Let $1\leq m<n$, and write
$x=(x',x'')\in\mathbb F_2^{\,n-m}\times\mathbb F_2^m$.
A Boolean function $f\in\mathcal B_n$ is called a
partially linear function if it can be written as
    $f(x',x'')=h(x')\oplus c\cdot x''$,
where $h\in\mathcal B_{n-m}$ and $c\in\mathbb F_2^m$.
\end{Definition}
\begin{Lemma}[{\!\!\cite[Lemma~3]{2009disjoint}}]
For each \(c\in\mathbb F_2^m\), choose
\(h_c\in\mathcal B_{n-m}\) and define
    $f_c(x',x'')=h_c(x')\oplus c\cdot x''$.
Then \(\{f_c\mid c\in\mathbb F_2^m\}\) is a family
of disjoint spectra functions.
\end{Lemma}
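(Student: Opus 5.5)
The plan is to compute the Walsh transform of each \(f_c\) directly and show that its support lies in a coset-type set determined entirely by \(c\). Write \(\omega=(\omega',\omega'')\in\mathbb F_2^{\,n-m}\times\mathbb F_2^m\), matching the splitting \(x=(x',x'')\). Since \(f_c(x',x'')\oplus\omega\cdot x=\bigl(h_c(x')\oplus\omega'\cdot x'\bigr)\oplus (c\oplus\omega'')\cdot x''\), the sum defining \(W_{f_c}(\omega)\) separates into a product:
\[
W_{f_c}(\omega',\omega'')
=\Bigl(\sum_{x'\in\mathbb F_2^{\,n-m}}(-1)^{h_c(x')\oplus\omega'\cdot x'}\Bigr)
\Bigl(\sum_{x''\in\mathbb F_2^{m}}(-1)^{(c\oplus\omega'')\cdot x''}\Bigr)
=W_{h_c}(\omega')\cdot 2^m\,\delta_{\omega'',c},
\]
where \(\delta_{\omega'',c}\) equals \(1\) if \(\omega''=c\) and \(0\) otherwise.

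For the second factor I use the standard character-sum identity: \(\sum_{x''}(-1)^{a\cdot x''}\) equals \(2^m\) if \(a=\boldsymbol{0}_m\) and \(0\) otherwise. It follows that
\[
\operatorname{supp}(W_{f_c})\subseteq \mathbb F_2^{\,n-m}\times\{c\}.
\]
The sets \(\mathbb F_2^{\,n-m}\times\{c\}\) are pairwise disjoint as \(c\) ranges over \(\mathbb F_2^m\). Hence for distinct \(c\neq c'\) we get \(\operatorname{supp}(W_{f_c})\cap\operatorname{supp}(W_{f_{c'}})=\varnothing\), which is exactly the definition of a family of disjoint spectra functions.

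There is no real obstacle in this argument. The only point to state carefully is the vanishing of the nontrivial character sum over \(\mathbb F_2^m\). To justify it, pick an index \(j\) with \(a_j=1\) and pair each \(x''\) with \(x''\oplus e_j\), where \(e_j\) is the \(j\)-th unit vector; the two terms in each pair cancel. The choices of the \(h_c\) are arbitrary and play no role in the argument.
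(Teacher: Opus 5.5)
Your proof is correct and is essentially the argument the paper relies on. The lemma itself is quoted from the 2009 reference without proof, but in Section~5 the paper carries out exactly your factorization, $W_{f_{kr}}(\omega',\omega'')=W_{f^{*}_{kr}}(\omega')\sum_{x''\in\mathbb{F}_2^m}(-1)^{(c_{kr}\oplus\omega'')\cdot x''}$, and concludes that the Walsh support lies in $\mathbb{F}_2^{\,n-m}\times\{c_{kr}\}$.
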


If $n-m$ is even and $h_c$ is bent for every
$c\in\mathbb F_2^m$, then each $f_c$ is an $n$-variable
$m$-plateaued function. Hence,
$\{f_c\,|\,c\in\mathbb F_2^m\}$ is a maximum-cardinality family
of $2^m$ $n$-variable disjoint spectra $m$-plateaued functions.

\begin{Definition}
Let \(0\leq t<n\). An \(n\)-variable Boolean function
\(f\in\mathcal B_n\) is \(t\)-resilient if and only if
\(W_f(\omega)=0\) for every \(\omega\in\mathbb F_2^n\)
with \(\operatorname{wt}(\omega)\leq t\).
\end{Definition}

A \(t\)-resilient function
\(f\in\mathcal B_n\), where \(0\leq t\leq n-2\), satisfies
\(\deg(f)\leq n-t-1\). It is said to have
optimal algebraic degree if
\(\deg(f)=n-t-1\).

To determine whether the constructed functions admit nonzero
linear structures, we recall derivatives, linear structures and their behavior
under EA-equivalence.

Let $f\in\mathcal{B}_n$. For $\gamma\in\mathbb{F}_2^n$, the derivative
of $f$ in the direction $\gamma$ is the Boolean function
$D_\gamma f\in\mathcal{B}_n$ defined by
$D_\gamma f(x)
=
f(x)\oplus f(x\oplus\gamma),
$ for $x\in\F_2^n$.
A vector \(\gamma\in\mathbb F_2^n\) is called a linear
structure of \(f\) if \(D_\gamma f\) is constant.

\begin{Definition}
Two Boolean functions $f,g\in\mathcal{B}_n$ are said to be
 EA-equivalent if there exist
$A\in GL(n,2)$, $a,b\in\mathbb{F}_2^n$, and $c\in\mathbb{F}_2$
such that
$g(x)=f(Ax\oplus a)\oplus b\cdot x\oplus c$
for every $x\in\mathbb{F}_2^n$.
In particular, when $a=b=0$ and $c=0$, EA-equivalence reduces
to linear equivalence. Thus, linear equivalence is a special
case of EA-equivalence.
\end{Definition}

\begin{Proposition}\label{Prop-linear-equivalence-invariants}
Let $f,g\in\mathcal{B}_n$ satisfy
$g(x)=f(Ax\oplus a)\oplus b\cdot x\oplus c$
for some $A\in GL(n,2)$, $a,b\in\mathbb{F}_2^n$, and
$c\in\mathbb{F}_2$. Then the following statements hold.
\begin{enumerate}
\item For every $\omega\in\mathbb{F}_2^n$,
$W_g(\omega)
=
(-1)^{c\oplus(\omega\oplus b)\cdot A^{-1}a}
W_f\bigl(A^{-T}(\omega\oplus b)\bigr)$.
Consequently,
$\operatorname{supp}(W_g)
=
A^T\operatorname{supp}(W_f)\oplus b$.

\item If $\deg(f)\geq 2$, then
$\deg(g)=\deg(f)$.

\item For every $\alpha\in\mathbb{F}_2^n$,
$D_\alpha g(x)
=
D_{A\alpha}f(Ax\oplus a)\oplus b\cdot\alpha$.
\end{enumerate}
\end{Proposition}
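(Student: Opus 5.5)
The plan is to prove the three items in order by direct substitution, using only that $A$ is invertible, that $u\cdot Av=A^Tu\cdot v$, and that algebraic degree is invariant under affine changes of variables.

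\textbf{Item 1.} Write
\[
W_g(\omega)=\sum_{x\in\mathbb F_2^n}(-1)^{f(Ax\oplus a)\oplus b\cdot x\oplus c\oplus\omega\cdot x}.
\]
Substitute $y=Ax\oplus a$, which is a bijection of $\mathbb F_2^n$ because $A\in GL(n,2)$. Then $x=A^{-1}(y\oplus a)$, so
\[
(\omega\oplus b)\cdot x=(\omega\oplus b)\cdot A^{-1}y\oplus(\omega\oplus b)\cdot A^{-1}a.
\]
Moving the transpose to the other side gives
\[
(\omega\oplus b)\cdot A^{-1}y=A^{-T}(\omega\oplus b)\cdot y.
\]
Pulling out the constant sign $(-1)^{c\oplus(\omega\oplus b)\cdot A^{-1}a}$, the remaining sum is exactly $W_f\bigl(A^{-T}(\omega\oplus b)\bigr)$. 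Hence $W_g(\omega)\neq0$ if and only if $A^{-T}(\omega\oplus b)\in\operatorname{supp}(W_f)$. Equivalently, $\omega\in A^T\operatorname{supp}(W_f)\oplus b$, which is the support statement.

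\textbf{Item 2.} First, composing with the affine bijection $x\mapsto Ax\oplus a$ preserves degree. Each monomial $y^u$ becomes a product of $\operatorname{wt}(u)$ affine forms, so $\deg f(Ax\oplus a)\le\deg f$. Applying the same bound to the inverse map gives the reverse inequality, so the degrees are equal. Second, adding the affine function $b\cdot x\oplus c$ changes only ANF coefficients of degree at most $1$. Since $\deg f\ge2$, the top-degree monomials are untouched, and therefore $\deg g=\deg f$. The hypothesis $\deg f\ge2$ is needed precisely here: for example, $f$ could be linear and cancel with $b\cdot x$.

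\textbf{Item 3.} Compute
\[
D_\alpha g(x)=f(Ax\oplus a)\oplus f(Ax\oplus a\oplus A\alpha)\oplus b\cdot x\oplus b\cdot(x\oplus\alpha).
\]
The first two terms give $D_{A\alpha}f(Ax\oplus a)$, and the last two give $b\cdot\alpha$.

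No step is a real obstacle. The only places needing care are the transpose bookkeeping in item 1 and the role of the $\deg f\ge2$ hypothesis in item 2.
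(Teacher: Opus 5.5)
Your proposal is correct and follows the paper's proof essentially step for step: the same substitution $z=Ax\oplus a$ with the transpose moved across the inner product for item 1, invariance of degree under invertible affine substitution plus the fact that an affine summand cannot affect monomials of degree at least $2$ for item 2, and the same direct expansion for item 3. Your item 2 is slightly more detailed than the paper's one-line justification, since you prove the two inequalities via the inverse map and say explicitly why the hypothesis $\deg(f)\geq 2$ is needed.
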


\begin{proof}
For every $\omega\in\mathbb{F}_2^n$, under the change of variables $z=Ax\oplus a$, we have
\[
\begin{aligned}
W_g(\omega)
&=
\sum_{x\in\mathbb{F}_2^n}
(-1)^{f(Ax\oplus a)\oplus b\cdot x\oplus c
      \oplus\omega\cdot x}\\
&=
(-1)^c
\sum_{z\in\mathbb{F}_2^n}
(-1)^{f(z)\oplus(\omega\oplus b)\cdot
      A^{-1}(z\oplus a)}\\
&=
(-1)^{c\oplus(\omega\oplus b)\cdot A^{-1}a}
\sum_{z\in\mathbb{F}_2^n}
(-1)^{f(z)\oplus
      \left(A^{-T}(\omega\oplus b)\right)\cdot z}\\
&=
(-1)^{c\oplus(\omega\oplus b)\cdot A^{-1}a}
W_f\bigl(A^{-T}(\omega\oplus b)\bigr).
\end{aligned}
\]
Since $A\in GL(n,2)$,
it follows that
$\operatorname{supp}(W_g)
=
A^T\operatorname{supp}(W_f)\oplus b$.

Since invertible affine substitutions preserve algebraic degree and
$b\cdot x\oplus c$ is affine, we have $\deg(g)=\deg(f)$ whenever
$\deg(f)\geq 2$.

Finally, for every $\alpha,x\in\mathbb{F}_2^n$,\begin{equation*}
    D_\alpha g(x)
=
g(x)\oplus g(x\oplus\alpha)
=
f(Ax\oplus a)
\oplus f(Ax\oplus A\alpha\oplus a)
\oplus b\cdot\alpha=
D_{A\alpha}f(Ax\oplus a)\oplus b\cdot\alpha.
\end{equation*}
Therefore, $D_\alpha g$ is constant if and only if
$D_{A\alpha}f$ is constant.
\end{proof}

Proposition~\ref{Prop-linear-equivalence-invariants} implies that
EA-equivalence preserves plateauedness, the absence of nonzero linear
structures, and the algebraic degree of functions of degree at least
two, while a common EA transformation also preserves pairwise
disjointness of Walsh supports. These conclusions apply in particular
to linear equivalence.

Next, we introduce two notions of EA-equivalence for families of disjoint spectra plateaued Boolean functions.
\begin{Definition}
Let \(\mathcal{F}=\{f_i\mid i\in I\}\) and
\(\mathcal{G}=\{g_i\mid i\in I\}\) be two indexed families
of disjoint spectra plateaued Boolean functions on
\(\mathbb F_2^n\).
The families $\mathcal{F}$ and $\mathcal{G}$ are said to be
memberwise EA-equivalent if there exist a permutation $\rho$ of
$I$ and, for every $i\in I$, a matrix
$A_i\in\operatorname{GL}(n,2)$, vectors
$a_i,b_i\in\mathbb F_2^n$, and $c_i\in\mathbb F_2$ such that
\[
g_i(x)
=
f_{\rho(i)}(A_i x\oplus a_i)
\oplus b_i\cdot x\oplus c_i .
\]
If the same parameters can be used for all members, that is,
$A_i=A$, $a_i=a$, $b_i=b$, and $c_i=c$ for every $i\in I$, then
$\mathcal{F}$ and $\mathcal{G}$ are said to be commonly
EA-equivalent.
\end{Definition}
The following example exhibits two memberwise EA-equivalent
families of disjoint spectra plateaued Boolean functions.
\begin{Example}\label{Ex-memberwise-EA}
Let $x=(x_0,x_1)\in\mathbb F_2^2$ and
$y=(y_0,y_1)\in\mathbb F_2^2$. Define
$\mathcal{F}=\{f_0,f_1,f_2,f_3\}$ by
\[
\begin{aligned}
f_0(x,y)&=x_0x_1,\qquad\qquad\,
f_1(x,y)=x_0x_1\oplus y_0,\\
f_2(x,y)&=x_0x_1\oplus y_1,\qquad
f_3(x,y)=x_0x_1\oplus y_0\oplus y_1,
\end{aligned}
\]
and define $\mathcal{G}=\{g_0,g_1,g_2,g_3\}$ by
\[
\begin{aligned}
g_0(x,y)
&=x_0x_1\oplus x_0y_1\oplus x_1y_0\oplus y_0y_1
  \oplus x_1, \qquad
g_1(x,y)=x_0x_1\oplus x_0y_0\oplus x_1y_1\oplus y_0y_1
  \oplus x_0\oplus y_1\oplus1,\\
g_2(x,y)
&=x_0x_1\oplus x_0y_0\oplus x_1y_1\oplus y_0y_1
  \oplus x_1\oplus y_1, \qquad
  g_3(x,y)
=x_0x_1\oplus x_0y_1\oplus x_1y_0\oplus y_0y_1
  \oplus x_0.
\end{aligned}
\]
A direct calculation shows that both $\mathcal{F}$ and
$\mathcal{G}$ are families of four disjoint spectra $2$-plateaued Boolean functions.
To verify their memberwise EA-equivalence, let
\[
C_0=C_3=I_2,
\qquad
C_1=C_2=
\begin{pmatrix}
0&1\\
1&0
\end{pmatrix},
\qquad
A_i=
\begin{pmatrix}
I_2&C_i\\
0_2&I_2
\end{pmatrix}
\in\operatorname{GL}(4,2).
\]
Choose
\[
\begin{aligned}
a_0&=(1,0,0,0),&
a_1&=(0,1,0,0),&
a_2&=(0,0,1,0),&
a_3&=(0,0,0,1),\\
b_0&=(0,0,0,1),&
b_1&=(0,0,1,0),&
b_2&=(0,1,0,0),&
b_3&=(1,0,1,1),
\end{aligned}
\]
and $(c_0,c_1,c_2,c_3)=(0,1,0,1)$. Then, for every
$0\leq i\leq3$,
$g_i(x,y)
=
f_i\left(
A_i
\begin{pmatrix}
x\\
y
\end{pmatrix}
\oplus a_i
\right)
\oplus
b_i\cdot
\begin{pmatrix}
x\\
y
\end{pmatrix}
\oplus c_i$.
Hence, \(\mathcal{F}\) and \(\mathcal{G}\) are memberwise
EA-equivalent families of disjoint spectra plateaued functions.
\end{Example}

\section{The framework of constructing families of disjoint spectra plateaued functions}\label{sec-0}
In this section, we develop a unified framework based on an
auxiliary selector space and offset functions for constructing
maximum-cardinality families of disjoint spectra plateaued Boolean functions.

For each integer $k$ with $0\le k\le 2^q-1$, let
$S_k=\{s_{k0},\ldots,s_{k,2^p-1}\}$ be a $p$-dimensional
subspace of $\mathbb F_2^{p+1}$. Choose
$\gamma_k\in\mathbb F_2^{p+1}\setminus S_k$. Then
$\mathbb F_2^{p+1}=S_k\cup(\gamma_k\oplus S_k)$.
For every $0\le r\le 2^p-1$, set
$s'_{kr}=\gamma_k\oplus s_{kr}$.

For each $0\le k\le 2^q-1$, define the selector map
\begin{equation}\label{smap}
    \sigma_k:\mathbb F_2^{p+1}\rightarrow
\{0,1,\ldots,2^p-1\}
\end{equation}
by letting $\sigma_k(y)$ be the
unique integer $r$ such that $y\in\{s_{kr},s'_{kr}\}$.
This map is well defined because the pairs
$\{s_{kr},s'_{kr}\}$, where $0\le r\le 2^p-1$, form a
partition of $\mathbb F_2^{p+1}$.

Let $m=p+q$, and assume that $n\ge m$ and that $n+m$ is
even. For every $0\le k\le 2^q-1$ and $0\le r\le 2^p-1$,
let $f_{kr}$ be an $n$-variable $m$-plateaued Boolean
function. Suppose that these functions form a family of
disjoint spectra plateaued functions; that is,
$\operatorname{supp}(W_{f_{kr}})
\cap\operatorname{supp}(W_{f_{k'r'}})=\varnothing$
whenever $(k,r)\ne(k',r')$.

Let
\begin{equation}\label{Set-F}
\mathcal{F}
=
\left\{
f_{kr}
\ \middle|\
0\leq k\leq 2^q-1,\;
0\leq r\leq 2^p-1
\right\}
\end{equation}
denote the indexed family of constituent plateaued functions with
disjoint Walsh supports.

For offset parameters $a_{kr},a'_{kr},b_{kr},b'_{kr}\in\mathbb{F}_2$,
define
\begin{equation}\label{Set-R}
R
=
\left(
a_{kr},a'_{kr},b_{kr},b'_{kr}
\right)_{\substack{
0\leq k\leq 2^q-1\\
0\leq r\leq 2^p-1
}},
\end{equation}
where $a_{kr}\oplus a'_{kr}\oplus b_{kr}\oplus b'_{kr}
=
1$ for $0\leq k\leq 2^q-1$, $0\leq r\leq 2^p-1$.
For the offset parameters in $R$, define the offset
functions
\begin{equation*}
a_{k}(y)
=
\begin{cases}
a_{k,\sigma_k(y)}, & y\in S_k,\\
a'_{k,\sigma_k(y)}, & y\in\gamma_k\oplus S_k,
\end{cases}
\qquad
b_{k}(y)
=
\begin{cases}
b_{k,\sigma_k(y)}, & y\in S_k,\\
b'_{k,\sigma_k(y)}, & y\in\gamma_k\oplus S_k.
\end{cases}
\end{equation*}
We then define the Boolean functions
\begin{equation}\label{Eq-g_k}
\begin{aligned}
g_{k}(x,y)
&=
f_{k,\sigma_k(y)}(x)
\oplus a_{k}(y),\\
g'_{k}(x,y)
&=
f_{k,\sigma_k(y)}(x)
\oplus b_{k}(y),
\end{aligned}
\end{equation}
where
$(x,y)\in\mathbb{F}_2^n\times\mathbb{F}_2^{p+1}$.

Using the constituent functions in $\mathcal{F}$ and the offset
parameters in $R$, define the following family of Boolean functions
in $n+p+1$ variables:
\begin{equation}\label{Set-Omega}
\Omega_{R}^{\mathcal{F}}
=
\left\{
g_{k}\,|\, 0\leq k\leq 2^q-1
\right\}
\cup
\left\{
g'_{k}\,|\,0\leq k\leq 2^q-1
\right\},
\end{equation}
where $g_{k}$ and $g'_{k}$ are defined in~\eqref{Eq-g_k}.

\begin{Remark}
The subspaces \(S_k\) are not required to be pairwise distinct;
repetitions among them are allowed. For each \(k\), it suffices 
that \(S_k\) be a \(p\)-dimensional subspace of
\(\mathbb{F}_2^{p+1}\) and that \(\gamma_k\notin S_k\), so that
$\mathbb{F}_2^{p+1}
=
S_k\mathbin{\cup}(\gamma_k\oplus S_k)$. Explicit  constituent families $\mathcal{F}$
will be given in Sections \ref{sec-plateauedwithoutlinear} and \ref{sec-partially-bent}.
\end{Remark}
The following proposition establishes that the resulting
functions form a family of disjoint spectra plateaued functions.
\begin{Proposition}\label{The-disjoint}
   The family \(\Omega_{R}^{\mathcal F}\) defined in
\eqref{Set-Omega} is a maximum-cardinality family of \(2^{q+1}\) disjoint spectra
\((q+1)\)-plateaued functions in \(n+p+1\) variables.
\end{Proposition}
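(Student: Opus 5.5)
Our plan is to compute the Walsh transform of $g_k$ and $g'_k$ directly, splitting the sum over $y\in\mathbb F_2^{p+1}$ into the pairs $\{s_{kr},s'_{kr}\}$ selected by $\sigma_k$. For $(\omega,v)\in\mathbb F_2^n\times\mathbb F_2^{p+1}$ we have
\[
W_{g_k}(\omega,v)=\sum_{r=0}^{2^p-1}W_{f_{kr}}(\omega)\Bigl((-1)^{a_{kr}\oplus v\cdot s_{kr}}+(-1)^{a'_{kr}\oplus v\cdot s'_{kr}}\Bigr)
=\sum_{r=0}^{2^p-1}(-1)^{a_{kr}\oplus v\cdot s_{kr}}W_{f_{kr}}(\omega)\bigl(1+(-1)^{a_{kr}\oplus a'_{kr}\oplus v\cdot\gamma_k}\bigr),
\]
and the same formula holds for $g'_k$ with $b_{kr},b'_{kr}$ in place of $a_{kr},a'_{kr}$. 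Because the supports of the $f_{kr}$ are pairwise disjoint, for any fixed $\omega$ at most one index $(k,r)$ has $W_{f_{kr}}(\omega)\neq0$. Hence for fixed $k$ the sum reduces to one term: if $\omega\in\operatorname{supp}(W_{f_{kr}})$, then $W_{g_k}(\omega,v)$ is either $0$ or $\pm2\cdot 2^{(n+m)/2}=\pm2^{(n+p+1+q+1)/2}$, since $m=p+q$. The value is nonzero exactly when $v\cdot\gamma_k=a_{kr}\oplus a'_{kr}$; otherwise $W_{g_k}(\omega,v)=0$. This shows that $g_k$ (and likewise $g'_k$) is $(q+1)$-plateaued in $n+p+1$ variables, with
\[
\operatorname{supp}(W_{g_k})=\bigcup_{r}\operatorname{supp}(W_{f_{kr}})\times\{v: v\cdot\gamma_k=a_{kr}\oplus a'_{kr}\},
\]
and $\operatorname{supp}(W_{g'_k})$ given by the same expression with the condition $v\cdot\gamma_k=b_{kr}\oplus b'_{kr}$.

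Disjointness then follows in two cases. For $k\neq k'$, the $\omega$-components already lie in disjoint sets $\bigcup_r\operatorname{supp}(W_{f_{kr}})$ and $\bigcup_r\operatorname{supp}(W_{f_{k'r}})$, so any two members indexed by $k$ and $k'$ have disjoint supports. For the same $k$, compare $g_k$ with $g'_k$: over a given $\omega\in\operatorname{supp}(W_{f_{kr}})$, the $v$-conditions are $v\cdot\gamma_k=a_{kr}\oplus a'_{kr}$ and $v\cdot\gamma_k=b_{kr}\oplus b'_{kr}$. The constraint $a_{kr}\oplus a'_{kr}\oplus b_{kr}\oplus b'_{kr}=1$ makes these right-hand sides different, so the two hyperplane cosets are complementary and the supports are disjoint. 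The only step needing care is this one, namely noting that $\gamma_k\neq0$ (since $\gamma_k\notin S_k$), so that each condition really cuts out a coset of size $2^p$; this is also what gives the correct support size.

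For maximality, we note that each $(q+1)$-plateaued function on $n+p+1$ variables has support of size $2^{n+p+1-(q+1)}=2^{n+p-q}$. By Parseval, pairwise disjoint supports can number at most $2^{n+p+1}/2^{n+p-q}=2^{q+1}$. The family has exactly $2^{q+1}$ members, which are distinct functions because their supports are nonempty and disjoint, so the bound is attained.
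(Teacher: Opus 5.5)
Your proposal is correct and matches the paper's proof essentially step by step: you split the Walsh sum over the pairs $\{s_{kr},s'_{kr}\}$, use disjointness of the constituent supports to isolate a single term equal to $0$ or $\pm 2^{(n+m+2)/2}$, separate $g_k$ from $g'_k$ via $a_{kr}\oplus a'_{kr}\oplus b_{kr}\oplus b'_{kr}=1$, separate different indices $k\neq k'$ via the constituent supports, and conclude with the counting bound $2^{n+p+1}/2^{n+p-q}=2^{q+1}$. Your extra remark that the $2^{q+1}$ members are distinct functions (their supports are nonempty and pairwise disjoint) is a small point the paper leaves implicit.
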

\begin{proof}
Since
$\mathbb{F}_2^{p+1}
=
S_k\mathbin{\cup}(\gamma_k\oplus S_k)$ and $S_k\mathbin{\cap}(\gamma_k\oplus S_k)=\emptyset$,
each \(y\in\mathbb{F}_2^{p+1}\) is exactly one of the
points \(s_{kr}\) and \(s'_{kr}\), where $0\leq r\leq2^p-1$.
Therefore,
 for arbitrary $(\omega',\omega'')\in\F_2^{n}\times\F_2^{p+1}$ and $g_{k}\in\Omega^{\mathcal{F}}_{R}$, we have
\begin{equation*}
     \begin{aligned}
         &W_{g_{k}}(\omega',\omega'')\\
=&
\sum_{(x,y)\in\F_2^{n}\times\mathbb{F}_2^{p+1}}
(-1)^{f_{k,\sigma_k(y)}(x)\oplus a_{k}(y)\oplus\omega'\cdot x\oplus\omega''\cdot y}\\
=& \sum_{x\in\mathbb{F}_2^n}(-1)^{\omega'\cdot x}
\sum_{r=0}^{2^p-1}
\left(
(-1)^{f_{kr}(x)\oplus a_{kr}\oplus \omega''\cdot s_{kr}}+(-1)^{f_{kr}(x)\oplus a'_{kr}\oplus \omega''\cdot s'_{kr}}
\right)\\
=& \sum_{r=0}^{2^p-1}
\left(
(-1)^{s_{kr}\cdot\omega''\oplus a_{kr}}
+(-1)^{s'_{kr}\cdot\omega''\oplus a'_{kr}}
\right)
\sum_{x\in\mathbb{F}_2^n}(-1)^{f_{kr}(x)\oplus\omega'\cdot x} \notag\\
=& \sum_{r=0}^{2^p-1}
\left(
(-1)^{s_{kr}\cdot\omega''\oplus a_{kr}}
+(-1)^{s'_{kr}\cdot\omega''\oplus a'_{kr}}
\right)
W_{f_{kr}}(\omega').
    \end{aligned}
\end{equation*}
   
For each fixed \(\omega'\), at most one term in the above sum is
nonzero, because the Walsh supports of the functions \(f_{kr}\)
are pairwise disjoint. Moreover,
$|
1+
(-1)^{
\gamma_k\cdot\omega''
\oplus a_{kr}\oplus a'_{kr}}
|
=2$
if and only if
$\gamma_k\cdot\omega''
\oplus a_{kr}\oplus a'_{kr}=0.$
Hence
\[
W_{g_{k}}(\omega',\omega'')
=
\begin{cases}
\pm2^{\frac{n+m+2}{2}},
&\omega'\in\displaystyle\operatorname{supp}(W_{f_{kr}}),\,\gamma_k\cdot\omega''\oplus a_{kr}\oplus a'_{kr}=0,\text{for some}  \,\,r,\,0\le r\le2^p-1,\\[6pt]
0,&\text{otherwise}.
\end{cases}
\]
Similarly, we have
\begin{equation*}
\begin{aligned}
   W_{g'_{k}}(\omega',\omega'')=\begin{cases}
\pm2^{\frac{n+m+2}{2}},
&\omega'\in\displaystyle\operatorname{supp}(W_{f_{kr}}),\,\gamma_k\cdot\omega''\oplus b_{kr}\oplus b'_{kr}=0,\text{for some}  \,\,r,\,0\le r\le2^p-1,\\[6pt]
0,&\text{otherwise}.
\end{cases}
\end{aligned}
\end{equation*}
Since $m=p+q$, each
$g_{k}$ and $g_{k}'$ is a $(q+1)$-plateaued function in $n+p+1$ variables, where  $0\leq k\leq 2^{q}-1$. 
Moreover, for each $k$, their Walsh supports are given by
\begin{equation}\label{Suppgg'}
    \begin{aligned}
&\mathrm{supp}(W_{g_{k}})=\bigcup_{r=0}^{2^p-1}\{(\omega',\omega'')\,|\,\omega'\in\operatorname{supp}(W_{f_{kr}}),\,\gamma_k\cdot\omega''\oplus a_{kr}\oplus a'_{kr}=0\},\\&\mathrm{supp}(W_{g'_{k}})=\bigcup_{r=0}^{2^p-1}\{(\omega',\omega'')\,|\,\omega'\in\operatorname{supp}(W_{f_{kr}}),\,\gamma_k\cdot\omega''\oplus b_{kr}\oplus b'_{kr}=0\}.
    \end{aligned}
\end{equation}

Fix \(k\). We first show that \(g_{k}\) and \(g'_{k}\) have
disjoint Walsh supports. Compare the part indexed by \(r\) in
the support representation of \(g_{k}\) with the part indexed by
\(r'\) in that of \(g_{k}'\). If \(r\neq r'\), these two parts
are disjoint because
\(\operatorname{supp}(W_{f_{kr}})
\cap\operatorname{supp}(W_{f_{kr'}})=\varnothing\).
Suppose now that \(r=r'\). If the two corresponding parts had
a common point, that point would satisfy both defining
conditions involving \(\omega''\). Taking the XOR of these two conditions
would yield
\(a_{kr}\oplus a'_{kr}\oplus b_{kr}\oplus b'_{kr}=0\),
contradicting the defining equality
\(a_{kr}\oplus a'_{kr}\oplus b_{kr}\oplus b'_{kr}=1\).
Thus every part in the first union is disjoint from every part
in the second union. Taking the unions, we obtain \begin{equation*}
    \operatorname{supp}(W_{g_{k}})
\cap\operatorname{supp}(W_{g'_{k}})=\varnothing.
\end{equation*}

It remains to compare functions corresponding to distinct indices.
Let \(0\leq k <k'\leq 2^{q}-1\).
Suppose that there exist
\(h\in\{g_{k},g'_{k}\}\) and
\( h'\in\{g_{k'},g'_{k'}\}\) such that
\(\operatorname{supp}(W_h)\cap
\operatorname{supp}(W_{h'})\neq\varnothing\).
By the support descriptions established above, there would then
exist \(r,r'\) satisfying $0\leq r,r'\leq 2^p-1$ and
\(\omega'\in\mathbb F_2^n\) such that
\(\omega'\in\operatorname{supp}(W_{f_{kr}})
\cap\operatorname{supp}(W_{f_{k'r'}})\).
Since $k\ne k'$, the pairs $(k,r)$ and $(k',r')$ are
distinct, contradicting the assumption that the constituent
functions form a disjoint spectra family. Hence,
\[
\operatorname{supp}(W_h)\cap
\operatorname{supp}(W_{h'})=\varnothing
\quad\text{for all}\quad
h\in\{g_{k},g'_{k}\},\ \
h'\in\{g_{k'},g'_{k'}\}.
\]

Combining the two cases, $\Omega_R^{\mathcal F}$ is a family
of disjoint spectra $(q+1)$-plateaued functions. Each
$(q+1)$-plateaued function in $n+p+1$ variables has Walsh
support of size $2^{n+p-q}$. Consequently, any disjoint
spectra family of such functions has at most
$2^{n+p+1}/2^{n+p-q}=2^{q+1}$ members. Since
$|\Omega_R^{\mathcal F}|=2^{q+1}$, this bound is attained.
\end{proof}

\section{Families of disjoint spectra plateaued functions
without nonzero linear structures based on linear functions}\label{sec-plateauedwithoutlinear}
The construction in Proposition~\ref{The-disjoint} yields a
maximum-cardinality family of disjoint spectra plateaued Boolean functions. We now impose additional conditions to ensure
that no member of the family admits a nonzero linear structure.

\subsection{A no-hyperplane partition and shifted nonconstancy sequences}\label{subse-partitionDk}
\begin{Definition}
A subset of $\mathbb F_2^m$ is said to have the
no-hyperplane property if it is not contained in any affine
hyperplane of $\mathbb F_2^m$.
\end{Definition}
We next construct a no-hyperplane partition and shifted
nonconstancy sequences, which will be used to exclude linear
structures in directions with
$\beta=\boldsymbol{0}_{p+1}$ and $\beta=\gamma_k$, respectively.

Let \(p\geq 2\), \(0\leq q<2^p-p-1\), and \(m=p+q\). Choose Boolean functions \(\phi_0,\ldots,\phi_{q-1}\in\mathcal{B}_p\) such that
\(1,y_0,\ldots,y_{p-1},\phi_0,\ldots,\phi_{q-1}\) is linearly independent in \(\mathcal{B}_p\), and put
\begin{equation}
U=\operatorname{span}
\{1,y_0,\ldots,y_{p-1},
  \phi_0,\ldots,\phi_{q-1}\}.
\label{spanU}
\end{equation}
Define \(h:\mathbb{F}_2^p\to\mathbb{F}_2^q\) by
    $h(y')=(\phi_0(y'),\ldots,\phi_{q-1}(y'))$.
 For every \(k\in\mathbb{F}_2^q\), let
     $\mathcal{D}_k=
\{(y',h(y')\oplus k)\,|\,y'\in\mathbb{F}_2^p\}
\subseteq\mathbb{F}_2^m$.
When \(q=0\), the list $\phi_0,\ldots,\phi_{q-1}$ is understood
to be empty, \(h\) is interpreted as the unique map into \(\mathbb{F}_2^0\), and \(\mathcal{D}_0=\mathbb{F}_2^p\).

The assumed bound on \(q\) guarantees that this choice is possible. Indeed, let
$\mathcal{AF}_p=\operatorname{span}\{1,y_0,\ldots,y_{p-1}\}$.
 Then
\(\dim(\mathcal{B}_p/\mathcal{AF}_p)=2^p-p-1\). Hence the residue classes of
\(\phi_0,\ldots,\phi_{q-1}\) may be chosen linearly independent modulo \(\mathcal{AF}_p\). Moreover, \(\dim U=p+q+1<2^p=\dim\mathcal{B}_p\), so \(U\) is a proper subspace of \(\mathcal{B}_p\).

Fix \(v_0,\ldots,v_{2^p-1}\) such that
 $\{v_r\}_{r=0}^{2^p-1} =\mathbb{F}_2^p$. For every \(k\in\mathbb{F}_2^q\) and \(0\leq r\leq 2^p-1\), define
\begin{equation}\label{ckr}
    c_{kr}=(v_r,h(v_r)\oplus k)\in \mathcal{D}_k.
\end{equation}
Thus \begin{equation}\label{Eq-Dk}
\mathcal{D}_k=\{c_{kr}\,|\,0\leq r\leq 2^p-1\}.
\end{equation} For
\(\alpha\in\mathbb{F}_2^m\), define
   $ \mathcal{C}_k(\alpha)=
\bigl(\alpha\cdot c_{k0},\ldots,
      \alpha\cdot c_{k,2^p-1}\bigr)$.

Since \(U\subsetneq\mathcal{B}_p\), choose
\(\psi_k\in\mathcal{B}_p\setminus U\) and denote its truth
table, with respect to the fixed vectors \(v_r\),
\(0\leq r\leq 2^p-1\), by 
\begin{equation*}
    \Psi_k=\bigl(\psi_k(v_0),\ldots,
\psi_k(v_{2^p-1})\bigr).
\end{equation*}
Because \(U\) is independent of \(k\), the same function
\(\psi\in\mathcal{B}_p\setminus U\) may be used for all \(k\).

For a concrete realization of the above construction, choose $\phi_0,\ldots,\phi_{q-1}$ as any $q$ distinct ANF monomials whose degrees lie between $2$ and $p-1$, and set
$\psi_k(y')=\psi(y')=\prod_{i=0}^{p-1}y_i
$ for every $k$.
Such a choice is possible under the standing assumption $q<2^p-p-1$. Since $\psi$ has degree $p$, whereas every function in $U$ has degree at most $p-1$, we have $\psi\notin U$. Thus, the same function $\psi$ can be used for all $k$.

\begin{Proposition}
\label{Pro-nonconstant}
For every \(k\in\mathbb{F}_2^q\), the following statements hold.
\begin{enumerate}
\item The sets \(\mathcal{D}_k\), \(k\in\mathbb{F}_2^q\), form a partition of
\(\mathbb{F}_2^m\), and every \(\mathcal{D}_k\) has the no-hyperplane property. Equivalently, \(C_k(\alpha)\) is nonconstant for every
\(\alpha\in\mathbb{F}_2^m\setminus\{\boldsymbol{0}_m\}\).

\item For every \(\alpha\in\mathbb{F}_2^m\), the sequence
\(C_k(\alpha)\oplus\Psi_k\) is nonconstant.
\end{enumerate}
\end{Proposition}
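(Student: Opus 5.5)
The plan is to translate both statements into linear-independence facts about the space \(U\) in \eqref{spanU}. For \(\alpha=(\alpha',\alpha'')\in\mathbb F_2^p\times\mathbb F_2^q\), the definition \eqref{ckr} gives
\[
\alpha\cdot c_{kr}=\alpha'\cdot v_r\oplus\alpha''\cdot h(v_r)\oplus\alpha''\cdot k .
\]
Since \(v_r\) runs over all of \(\mathbb F_2^p\), the sequence \(\mathcal C_k(\alpha)\) is exactly the truth table, with respect to the ordering \(v_0,\ldots,v_{2^p-1}\), of the Boolean function
\[
\ell_{\alpha,k}(y')=\alpha'\cdot y'\oplus\bigoplus_{i=0}^{q-1}\alpha''_i\phi_i(y')\oplus\alpha''\cdot k\in U .
\]
Hence statements about constancy of these sequences become statements about whether \(\ell_{\alpha,k}\) or \(\ell_{\alpha,k}\oplus\psi_k\) lies in \(\operatorname{span}\{1\}\).

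\textbf{Partition property.} Each \(z=(z',z'')\in\mathbb F_2^m\) lies in \(\mathcal D_k\) exactly when \(k=z''\oplus h(z')\). So the \(\mathcal D_k\) are pairwise disjoint and cover \(\mathbb F_2^m\). Each has size \(2^p\) because \(y'\mapsto(y',h(y')\oplus k)\) is injective, which also shows that the \(c_{kr}\) are distinct and gives \eqref{Eq-Dk}.

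\textbf{No-hyperplane property.} \(\mathcal D_k\) lies in an affine hyperplane \(\{z:\alpha\cdot z=c\}\) with \(\alpha\neq\boldsymbol 0_m\) exactly when \(\mathcal C_k(\alpha)\) is constant. This gives the stated equivalence. Suppose \(\ell_{\alpha,k}\) were constant. Then \(\alpha'\cdot y'\oplus\bigoplus_i\alpha''_i\phi_i\) would be a linear combination of \(1,y_0,\ldots,y_{p-1},\phi_0,\ldots,\phi_{q-1}\) equal to a multiple of \(1\). Linear independence of this list forces \(\alpha'=\boldsymbol 0_p\) and \(\alpha''=\boldsymbol 0_q\), contradicting \(\alpha\neq\boldsymbol 0_m\). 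When \(q=0\), only the \(y_i\) terms appear and the same argument applies.

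\textbf{Second statement.} \(\mathcal C_k(\alpha)\oplus\Psi_k\) is the truth table of \(\ell_{\alpha,k}\oplus\psi_k\). If this function were a constant \(c\), then \(\psi_k=\ell_{\alpha,k}\oplus c\in U\), since \(U\) contains \(1\) and all \(\ell_{\alpha,k}\). This contradicts \(\psi_k\notin U\), and it holds for every \(\alpha\), including \(\alpha=\boldsymbol 0_m\).

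There is no real obstacle. The only care needed is to identify the sequences with truth tables, which uses that \(\{v_r\}=\mathbb F_2^p\), and to handle the degenerate case \(q=0\).
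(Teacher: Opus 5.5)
Your proposal is correct and follows essentially the same route as the paper. Both arguments obtain the partition from the unique index $k=z''\oplus h(z')$ and identify $\mathcal{C}_k(\alpha)$ with the truth table of a function in $U$. Both then use the linear independence of $1,y_0,\ldots,y_{p-1},\phi_0,\ldots,\phi_{q-1}$ for the no-hyperplane property and $\psi_k\notin U$ for the second statement. Your extra remarks (injectivity of $y'\mapsto(y',h(y')\oplus k)$, the explicit equivalence between hyperplane containment and constancy, and the case $q=0$) only make the same argument more explicit.
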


\begin{proof}
\noindent\textbf{(1).} Every \((y',u)\in\mathbb{F}_2^p\times\mathbb{F}_2^q\) belongs to \(\mathcal{D}_k\) for the unique vector \(k=u\oplus h(y')\). Thus the sets \(\mathcal{D}_k\) form a partition of \(\mathbb{F}_2^m\).

Suppose, to the contrary, that $\mathcal{D}_k$ is contained in a
proper affine hyperplane of $\mathbb{F}_2^m$. Then there exist
$\alpha=(\alpha_0,\alpha_1)\in
\mathbb{F}_2^p\times\mathbb{F}_2^q\setminus\{\mathbf{0}_m\}$
and $c\in\mathbb{F}_2$ such that
$\alpha_0\cdot y'
\oplus
\alpha_1\cdot\bigl(h(y')\oplus k\bigr)
\oplus c
=0$
for every $y'\in\mathbb{F}_2^p$.
Equivalently,
$\alpha_0\cdot y'
\oplus
\bigoplus_{i=0}^{q-1}\alpha_{1i}\phi_i(y')
\oplus
(\alpha_1\cdot k\oplus c)
=0$
for all $y'\in\mathbb{F}_2^p$.
By the linear independence of
$1,y_0,\ldots,y_{p-1},\phi_0,\ldots,\phi_{q-1}$,
all coefficients in the above identity must vanish. In particular,
$\alpha_0=\mathbf{0}_p$ and
$\alpha_1=\mathbf{0}_q$,
so that $\alpha=\mathbf{0}_m$, contradicting
$\alpha\neq\mathbf{0}_m$.
Hence $\mathcal{D}_k$ is not contained in any proper affine hyperplane.
Equivalently, $C_k(\alpha)$ is nonconstant for every
$\alpha\in\mathbb{F}_2^m\setminus\{\mathbf{0}_m\}$.

\noindent\textbf{(2).} \(C_k(\alpha)\) is the truth table of
$f_{\alpha,k}(y')
=
\alpha_0\cdot y'
\oplus
\alpha_1\cdot h(y')
\oplus
\alpha_1\cdot k$
and \(f_{\alpha,k}\in U\). If \(C_k(\alpha)\oplus\Psi_k\) were constant, then
\(f_{\alpha,k}\oplus\psi_k\) would be a constant Boolean function. Since \(f_{\alpha,k}\in U\) and \(1\in U\), this would imply
\(\psi_k\in U\), contrary to $\psi_k\notin U$. Therefore
\(C_k(\alpha)\oplus\Psi_k\) is nonconstant for every
\(\alpha\in\mathbb{F}_2^m\).
\end{proof}

\subsection{Explicit construction of families of disjoint spectra plateaued
functions without nonzero linear structures based on linear functions}\label{4.2}
 We now specialize the general construction by taking the
constituent functions to be linear and using the no-hyperplane
partition and shifted nonconstancy sequences developed above.
These choices will allow us to exclude nonzero linear structures
from every member of the resulting family. First, we align the indexing of the selector subspaces with the vectors
$\{v_r\}_{r=0}^{2^p-1}$ introduced in Subsection \ref{subse-partitionDk}.
For each $k$, choose a linear isomorphism
\begin{equation}\label{Iso-A}
    \mathcal{L}_k:\mathbb{F}_2^p\rightarrow S_k,
\end{equation}
and, since the ordering of the elements of $S_k$ is immaterial in the
preceding abstract construction, we henceforth take
$s_{kr}=\mathcal{L}_k(v_r)$ for
$0\leq r\leq 2^p-1$.
Consequently,
$s'_{kr}
=
s_{kr}\oplus\gamma_k
=
\mathcal{L}_k(v_r)\oplus\gamma_k$.

 Suppose $p\geq 2$, $0\leq q<2^p-p-1$ and set
$n=m=p+q$.
  Using vectors defined in \eqref{ckr}, specialize the family in \eqref{Set-F} to
\begin{equation}\label{Set-Fckr}
\begin{aligned}
   \mathcal{F}_0 =& \bigl\{f_{kr}\,|\,f_{kr}(x)=c_{kr}\cdot x, 0\leq k\leq 2^q-1, 0\leq r\leq 2^p-1  \bigr\}
\end{aligned}
\end{equation} where  $x\in\F_2^m$. Since $\{\mathcal{D}_k\}_{k\in\F_2^q}$ is a partition of $\F_2^m$, the
collection
$\{c_{kr}\,|\,0\leq k\leq 2^q-1,\ 0\leq r\leq 2^p-1\}$
exhausts $\F_2^m$. Consequently, $\mathcal{F}_0$ is a family of disjoint spectra
$m$-plateaued Boolean functions in $m$ variables.

The resulting functions $g_k$ and $g_k'$ belong to $\mathcal{GMM}$. Indeed, by
\eqref{Eq-g_k} and the choice
$f_{kr}(x)=c_{kr}\cdot x$, define
    $\pi_k(y)=c_{k,\sigma_k(y)}$.
Then
\begin{equation*}
\begin{aligned}
       &g_{k}(x,y)
=
x\cdot\pi_k(y)\oplus a_{k}(y),\\
&g'_{k}(x,y)
=
x\cdot\pi_k(y)\oplus b_{k}(y).
\end{aligned}
\end{equation*}
Moreover, since the vectors $c_{kr}$ are distinct and
$\sigma_k^{-1}(r)=\{s_{kr},s'_{kr}\}$,
we have
$\pi_k^{-1}(c_{kr})
=
\{s_{kr},s'_{kr}\}$
for every $0\le r\le 2^p-1$. Hence,
$\pi_k$ is two-to-one onto $\mathcal{D}_k$. Therefore, both
$g_{k}$ and $g'_{k}$ are $\mathrm{GMM}$ functions.
  
For each $0 \leq k \leq 2^q-1$, choose
$\psi_k \in \mathcal{B}_p \setminus U$, where $U$ is defined in
\eqref{spanU}. Then, for every $0 \leq r \leq 2^p-1$, choose
$a_{kr},a'_{kr}\in\F_2$ such that
$a_{kr}\oplus a'_{kr}=\psi_k(v_r)$.
Using the functions $\psi_k$, we now specialize the offset family
$R$ introduced in \eqref{Set-R} as follows:
\begin{equation}\label{Set-R(psi)}
R
=
\left(
a_{kr},a'_{kr},b_{kr},b'_{kr}
\right)_{\substack{
0\leq k\leq 2^q-1\\
0\leq r\leq 2^p-1
}},
\end{equation} where $a_{kr}\oplus a'_{kr}=\psi_k(v_r)$,
$b_{kr}\oplus b'_{kr}
=\psi_k(v_r)\oplus 1$ for $0\leq k\leq 2^q-1$, $0\leq r\leq 2^p-1$.
These two relations imply the  condition in
\eqref{Set-R}.

All ingredients of the construction are now in place. For ease
of reference, the procedure leading to the family
$\Omega_{\mathcal R}^{(\mathcal{F}_0)}$ is summarized in
Table~\ref{tab:construction-roadmap}.
\begin{table}[!ht]
\caption{Roadmap for the explicit construction  $\Omega_{R}^{\mathcal{F}_0}$  in Theorem~\ref{The-Set-withoutlinear}.}
\label{tab:construction-roadmap}
\centering
\setlength{\tabcolsep}{5pt}
\renewcommand{\arraystretch}{1.08}

\begin{tabularx}{\linewidth}
{@{}>{\bfseries}l >{\raggedright\arraybackslash}X@{}}
\toprule

Step 1. &
Choose integers $p\geq2$ and $0\leq q<2^p-p-1$, and set
$m=p+q$. Fix \(v_0,\ldots,v_{2^p-1}\) such that
\(\{v_r\}_{r=0}^{2^p-1}=\mathbb F_2^p\).
\\
\addlinespace[3pt]

Step 2. &
Choose $\phi_0,\ldots,\phi_{q-1}\in\mathcal B_p$ such
that $1,y_0,\ldots,y_{p-1},\phi_0,\ldots,\phi_{q-1}$
are linearly independent, and define $U$ and $h$.
\\
\addlinespace[3pt]

Step 3. &
Define $c_{kr}$, $\mathcal{D}_k$, $f_{kr}$, and the constituent family
$\mathcal {F}_0$ of disjoint spectra linear functions.
\\
\addlinespace[3pt]

Step 4. &
Choose a $p$-dimensional subspace
$S_k\subsetneq\F_2^{p+1}$, a vector $\gamma_k\notin S_k$, and a
linear isomorphism
$\mathcal L_k:\F_2^p\rightarrow S_k$.
Set $s_{kr}=\mathcal L_k(v_r)$ and construct $\sigma_k$ from
the pairs $\{s_{kr},s_{kr}\oplus\gamma_k\}$.
\\
\addlinespace[3pt]

Step 5. &
Choose $\psi_k\in\mathcal B_p\setminus U$ and $R$ with
$a_{kr}\oplus a'_{kr}=\psi_k(v_r)$ and
$b_{kr}\oplus b'_{kr}=\psi_k(v_r)\oplus1$.
\\
\addlinespace[3pt]

Step 6. &
Define the offset functions $a_k,b_k$, then define $g_k,g'_k$
and form $\Omega_R^{\mathcal {F}_0}$.
\\

\bottomrule
\end{tabularx}

\vspace{2pt}
\parbox{\linewidth}{%
\footnotesize\raggedright
\textit{Note:}
Throughout the construction,
$0\leq k\leq2^q-1$ and $0\leq r\leq2^p-1$.
}
\end{table}
\FloatBarrier
The properties of the resulting family are established in the
following theorem.
\FloatBarrier
\begin{Theorem}\label{The-Set-withoutlinear}
       The family  $\Omega_{R}^{\mathcal{F}_0}$ defined in \eqref{Set-Omega} is a maximum-cardinality family of disjoint spectra
\((q+1)\)-plateaued $\mathrm{GMM}$ functions in \(m+p+1\) variables without nonzero linear structures, where $\mathcal{F}_0$ is defined in $\eqref{Set-Fckr}$ and $R$ is defined in \eqref{Set-R(psi)}.
\end{Theorem}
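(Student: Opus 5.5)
The plan has two parts. The structural claims come directly from earlier results. The real content is excluding nonzero linear structures by a direct derivative computation in the $\mathrm{GMM}$ form.

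\textbf{Structural claims.} Since the vectors $c_{kr}$ exhaust $\mathbb F_2^m$, $\mathcal F_0$ consists of $2^m$ pairwise distinct linear functions in $m$ variables. Each is $m$-plateaued, with Walsh support $\{c_{kr}\}$, and the supports are pairwise disjoint. Moreover $n=m$, so $n+m$ is even. Hence the hypotheses of Proposition~\ref{The-disjoint} hold, and $\Omega_R^{\mathcal F_0}$ is a maximum-cardinality family of $2^{q+1}$ disjoint spectra $(q+1)$-plateaued functions in $m+p+1$ variables. The $\mathrm{GMM}$ representations $g_k(x,y)=x\cdot\pi_k(y)\oplus a_k(y)$ and $g'_k(x,y)=x\cdot\pi_k(y)\oplus b_k(y)$, with $\pi_k(y)=c_{k,\sigma_k(y)}$, were established just before the theorem.

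\textbf{Setting up the derivative.} Fix $k$ and a nonzero direction $(\alpha,\beta)\in\mathbb F_2^m\times\mathbb F_2^{p+1}$. Then
\[
D_{(\alpha,\beta)}g_k(x,y)=x\cdot\bigl(\pi_k(y)\oplus\pi_k(y\oplus\beta)\bigr)\oplus\alpha\cdot\pi_k(y\oplus\beta)\oplus a_k(y)\oplus a_k(y\oplus\beta).
\]
For this to be constant, the coefficient of $x$ must vanish for every $y$; otherwise, for a $y$ with nonzero coefficient, the function is balanced in $x$. So we need $\pi_k(y)=\pi_k(y\oplus\beta)$ for all $y$. The $c_{kr}$ are distinct, so this forces $\sigma_k(y\oplus\beta)=\sigma_k(y)$, i.e. $y\oplus\beta\in\{s_{kr},s'_{kr}\}$ whenever $y$ is in that pair. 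Since each pair has the form $\{s,s\oplus\gamma_k\}$, this gives $\beta\in\{\boldsymbol 0_{p+1},\gamma_k\}$. The main step is treating these two cases, which correspond exactly to the two parts of Proposition~\ref{Pro-nonconstant}.

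\textbf{Case $\beta=\boldsymbol 0_{p+1}$.} Here $\alpha\neq\boldsymbol 0_m$, and the derivative reduces to $\alpha\cdot c_{k,\sigma_k(y)}$. As $y$ ranges over $\mathbb F_2^{p+1}$, $\sigma_k(y)$ takes every value $r$. So the derivative takes all values of the sequence $\mathcal C_k(\alpha)$, which is nonconstant by Proposition~\ref{Pro-nonconstant}(1).

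\textbf{Case $\beta=\gamma_k$.} Here $\alpha$ may be zero. For $y\in\{s_{kr},s'_{kr}\}$ we have $a_k(y)\oplus a_k(y\oplus\gamma_k)=a_{kr}\oplus a'_{kr}=\psi_k(v_r)$. Hence the derivative equals $\alpha\cdot c_{kr}\oplus\psi_k(v_r)$ on the $r$-th pair. Its values run over the sequence $\mathcal C_k(\alpha)\oplus\Psi_k$, which is nonconstant by Proposition~\ref{Pro-nonconstant}(2).

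\textbf{The functions $g'_k$.} The argument is identical, except that $b_{kr}\oplus b'_{kr}=\psi_k(v_r)\oplus1$. This complements the sequence in the second case and does not affect nonconstancy. Therefore no member of $\Omega_R^{\mathcal F_0}$ has a nonzero linear structure.
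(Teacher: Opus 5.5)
Your proposal is correct and follows the paper's proof essentially step for step: Proposition~\ref{The-disjoint} gives the spectral claims, the $\mathrm{GMM}$ form is quoted, and nonzero linear structures are excluded by the same split on $\beta$, using Proposition~\ref{Pro-nonconstant}(1) for $\beta=\boldsymbol{0}_{p+1}$ and Proposition~\ref{Pro-nonconstant}(2) (with complementation for $g'_k$) for $\beta=\gamma_k$. The only difference is organizational: you derive $\beta\in\{\boldsymbol{0}_{p+1},\gamma_k\}$ as a necessary condition from the vanishing of the $x$-coefficient, whereas the paper treats $\beta\notin\{\boldsymbol{0}_{p+1},\gamma_k\}$ as a separate case and shows that the coefficient is nonzero for every $y$.
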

 \begin{proof}
By Proposition~\ref{The-disjoint}, the family is a maximum-cardinality family of
$2^{q+1}$ disjoint spectra $(q+1)$-plateaued functions in $m+p+1$ variables. The representation established above shows that every member
of $\Omega_R^{\mathcal{F}_0}$ is a $\mathrm{GMM}$ function. It remains to prove the absence of
nonzero linear structures.

Fix $k$ with $0\leq k\leq  2^q-1$.
We now consider $D_{(\alpha,\beta)}g_k$ for all nonzero pairs
$(\alpha,\beta)\in
\mathbb F_2^m\times\mathbb F_2^{p+1}$.

\noindent\textbf{Case 1:}
$\beta\notin\{\boldsymbol{0}_{p+1},\gamma_k\}$,
$\alpha\in\mathbb{F}_2^m$.

From the definition of $g_{k}$, its derivative equals
\begin{equation*}
    \begin{aligned}
D_{(\alpha,\beta)}g_{k}(x,y)
=&f_{k,\sigma_k(y)}(x)\oplus f_{k,\sigma_k(y\oplus\beta)}(x\oplus\alpha)\oplus a_{k}(y)\oplus a_{k}(y\oplus\beta)
\\=&\bigl(
c_{k,\sigma_k(y)}
\oplus
c_{k,\sigma_k(y\oplus\beta)}
\bigr)\cdot x
\oplus
c_{k,\sigma_k(y\oplus\beta)}\cdot\alpha
\oplus
a_{k}(y)
\oplus
a_{k}(y\oplus\beta),
    \end{aligned}
\end{equation*}

where $(x,y)\in\mathbb{F}_2^m\times\mathbb{F}_2^{p+1}$.
We first show that
$\sigma_k(y)\neq\sigma_k(y\oplus\beta)$ for every
$y\in\mathbb{F}_2^{p+1}$. Suppose, to the contrary, that
$\sigma_k(y)=\sigma_k(y\oplus\beta)=r$ for some $r$.
By the definition of $\sigma_k$, both $y$ and $y\oplus\beta$
belong to
$\{s_{kr},s'_{kr}\}
=
\{s_{kr},s_{kr}\oplus\gamma_k\}$.
Hence their difference $\beta$ must be either
$\boldsymbol{0}_{p+1}$ or $\gamma_k$, contrary to the assumption
$\beta\notin\{\boldsymbol{0}_{p+1},\gamma_k\}$. Therefore,
$\sigma_k(y)\neq\sigma_k(y\oplus\beta)$ for every
$y\in\mathbb{F}_2^{p+1}$.

Since the vectors $c_{kr}$ are distinct for distinct values of $r$,
it follows that
$c_{k,\sigma_k(y)}
\neq
c_{k,\sigma_k(y\oplus\beta)}$
for every $y$. Thus,
\[
c_{k,\sigma_k(y)}
\oplus
c_{k,\sigma_k(y\oplus\beta)}
\neq
\boldsymbol{0}_m.
\]
Therefore, $D_{(\alpha,\beta)}g_{k}(x,y)$ has a nonzero
coefficient of $x$ and is nonconstant as a function of $x$.
Consequently, $(\alpha,\beta)$ is not a linear structure of
$g_{k}$.

The same argument applies to $g'_{k}$, since $g_{k}$ and
$g'_{k}$ have the same constituent functions and differ only
in their offset parameters.

\noindent\textbf{Case 2:}
$\beta=\boldsymbol{0}_{p+1}$,
$\alpha\neq\boldsymbol{0}_m$.

The derivatives of $g_{k}$ and $g'_{k}$ are equal and given by
\begin{equation*}
    \begin{aligned}
    D_{(\alpha,\boldsymbol{0}_{p+1})}g_{k}(x,y)
=
D_{(\alpha,\boldsymbol{0}_{p+1})}g'_{k}(x,y)
=&f_{k,\sigma_k(y)}(x)\oplus f_{k,\sigma_k(y)}(x\oplus\alpha)\\=&
c_{k,\sigma_k(y)}\cdot x\oplus c_{k,\sigma_k(y)}\cdot (x\oplus\alpha)
\\=&
\alpha\cdot c_{k,\sigma_k(y)}.
    \end{aligned}
\end{equation*}
Since $\sigma_k$ is surjective, as $y$ ranges over
$\mathbb{F}_2^{p+1}$, the values of this derivative are precisely
\[
\alpha\cdot c_{k0},\;
\alpha\cdot c_{k1},\;
\ldots,\;
\alpha\cdot c_{k,2^p-1},
\]
namely, the entries of $C_k(\alpha)$. By Proposition~\ref{Pro-nonconstant}(1), the
no-hyperplane property implies that $C_k(\alpha)$ is nonconstant
for every $\alpha\neq\boldsymbol{0}_m$. Therefore,
$D_{(\alpha,\boldsymbol{0}_{p+1})}g_{k}$ and
$D_{(\alpha,\boldsymbol{0}_{p+1})}g'_{k}$ are nonconstant, so
$(\alpha,\boldsymbol{0}_{p+1})$ is not a linear structure of
either $g_{k}$ or $g'_{k}$.

\noindent\textbf{Case 3:}
$\beta=\gamma_k$, $\alpha\in\mathbb{F}_2^m$.

For every $y\in\mathbb{F}_2^{p+1}$, translation by $\gamma_k$
interchanges the two elements $s_{kr}$ and $s'_{kr}$ of the pair
indexed by $r=\sigma_k(y)$. Hence,
$\sigma_k(y\oplus\gamma_k)=\sigma_k(y)$. Moreover,
$a_{k}(y)\oplus a_{k}(y\oplus\gamma_k)
=
a_{k,\sigma_k(y)}\oplus a'_{k,\sigma_k(y)}$.
Then, the derivative of $g_{k}$ equals
\[
\begin{aligned}
D_{(\alpha,\gamma_k)}g_{k}(x,y)
&=
f_{k,\sigma_k(y)}(x)
\oplus
f_{k,\sigma_k(y)}(x\oplus\alpha)
\oplus
a_{k}(y)
\oplus
a_{k}(y\oplus\gamma_k)\\
&=
\alpha\cdot c_{k,\sigma_k(y)}
\oplus
a_{k,\sigma_k(y)}
\oplus
a'_{k,\sigma_k(y)}.
\end{aligned}
\]
Similarly, we have
\begin{equation*}
D_{(\alpha,\gamma_k)}g'_{k}(x,y)
=
\alpha\cdot c_{k,\sigma_k(y)}
\oplus
b_{k,\sigma_k(y)}
\oplus
b'_{k,\sigma_k(y)}.
\end{equation*}

Using
$a_{kr}\oplus a'_{kr}=\psi_k(v_r)$ and
$b_{kr}\oplus b'_{kr}=\psi_k(v_r)\oplus1$,
the value sequences of these two derivatives, indexed by
$0\leq r\leq 2^p-1$, are respectively
$C_k(\alpha)\oplus\Psi_k$
and
$C_k(\alpha)\oplus\Psi_k
\oplus
\boldsymbol{1}_{2^p}$,
where $\boldsymbol{1}_{2^p}$ denotes the all-one vector of length $2^p$.
By Proposition~\ref{Pro-nonconstant}(2), the first sequence is nonconstant for every
$\alpha\in\mathbb{F}_2^m$. Since the second is its bitwise
complement, it is also nonconstant. Moreover, $\sigma_k$ is
surjective, so neither derivative is constant. Consequently,
$(\alpha,\gamma_k)$ is not a linear structure of either
$g_{k}$ or $g'_{k}$ for any
$\alpha\in\mathbb{F}_2^m$.

Combining the three cases, neither $g_{k}$ nor $g'_{k}$ admits
a nonzero linear structure. Since $k$ was arbitrary, no member of
$\Omega_{R}^{\mathcal {F}_0}$ admits a nonzero linear structure.
\end{proof}

\subsection{Structural distinction from translation-based families}
Theorem~\ref{The-Set-withoutlinear} yields a maximum-cardinality
family of disjoint spectra plateaued functions without  nonzero linear structures.
We now compare the Walsh-support structure of the family constructed above
 with the spectral construction of
\cite[Theorem~4.4]{disjoint2}. In that construction, if
\begin{equation}\label{fff}
    \{f_i\mid 0\leq i\leq 2^s-1\}
\end{equation}
denotes the resulting maximum-cardinality family of disjoint spectra plateaued functions without nonzero linear structures, then
\[
\operatorname{supp}(W_{f_i})
=
\operatorname{supp}(W_f)\oplus q_i,
\qquad
0\leq i\leq 2^s-1,
\]
where $\operatorname{supp}(W_f)$ is fixed and $q_i\in Q$, with
$Q=\{\boldsymbol{0}_{n-s}\}\times\mathbb F_2^s
   \subseteq\mathbb F_2^n$.
Thus, all Walsh supports in $\{f_i\mid 0\leq i\leq 2^s-1\}$ are translations of the
same set $\operatorname{supp}(W_f)$.

Let a common EA transformation of the family in \eqref{fff} be given by
\[
\widetilde{f}_i(x)
=
f_i(Ax\oplus a)\oplus b\cdot x\oplus c,
\qquad 0\leq i\leq 2^s-1,
\]
where $A\in GL(n,2)$, $a,b\in\mathbb{F}_2^n$, and
$c\in\mathbb{F}_2$. By
Proposition~\ref{Prop-linear-equivalence-invariants},
\begin{equation*}
\operatorname{supp}(W_{\widetilde{f}_i})
=
A^T\operatorname{supp}(W_{f_i})\oplus b
=
A^T\bigl(\operatorname{supp}(W_f)\oplus q_i\bigr)\oplus b
=
\bigl(A^T\operatorname{supp}(W_f)\oplus b\bigr)
\oplus A^Tq_i.
\end{equation*}
Thus, the transformed family
$\{\widetilde{f}_i\mid 0\leq i\leq 2^s-1\}$
is again a family of plateaued functions without nonzero linear
structures whose pairwise disjoint Walsh supports satisfy the
common-translation condition.

In contrast, Theorem~\ref{The-Set-withoutlinear} can produce
maximum-cardinality families that are not commonly EA-equivalent
to any family obtained from the construction above. The following
proposition gives a sufficient criterion for distinguishing the two
types of families under common EA-equivalence.
\begin{Proposition}\label{Prop-direct-separation}
Let
$\Omega_{R}^{\mathcal{F}_0}$
be the family determined by $\mathcal{F}_0$ in~\eqref{Set-Fckr}
and $R$ in~\eqref{Set-R(psi)}.
Let $0\leq k<k'\leq 2^q-1$ and suppose that
$\gamma_k\neq\gamma_{k'}$.
Then, for any
$h\in\{g_{k},g'_{k}\}$ and
$h'\in\{g_{k'},g'_{k'}\}$,
there exists no
$\tau\in\mathbb{F}_2^{m+p+1}$ such that
\begin{equation}\label{Wh}
\operatorname{supp}(W_h)
=
\tau\oplus\operatorname{supp}(W_{h'}).
\end{equation}
\end{Proposition}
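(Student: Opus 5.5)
The plan is to compute the Walsh supports of the members of $\Omega_R^{\mathcal F_0}$ explicitly from \eqref{Suppgg'}, and then compare the ``fibres'' of two supports over a common first coordinate. Since $f_{kr}(x)=c_{kr}\cdot x$ is linear, $W_{f_{kr}}(\omega')=2^m$ if $\omega'=c_{kr}$ and $0$ otherwise. So $\operatorname{supp}(W_{f_{kr}})=\{c_{kr}\}$, and \eqref{Suppgg'} becomes
\[
\operatorname{supp}(W_{g_k})=\bigcup_{r=0}^{2^p-1}\{c_{kr}\}\times H_{kr},
\qquad
H_{kr}=\{\omega''\in\mathbb F_2^{p+1}\mid \gamma_k\cdot\omega''=a_{kr}\oplus a'_{kr}\}.
\]
The analogous formula holds for $g'_k$, with $H'_{kr}$ defined by $\gamma_k\cdot\omega''=b_{kr}\oplus b'_{kr}$. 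Because $\gamma_k\notin S_k$, we have $\gamma_k\neq\boldsymbol 0_{p+1}$. Hence each $H_{kr}$ and each $H'_{kr}$ is a coset of the linear hyperplane $\gamma_k^{\perp}=\{\omega''\mid\gamma_k\cdot\omega''=0\}$; in particular it is nonempty, of size $2^p$.

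For a set $T\subseteq\mathbb F_2^m\times\mathbb F_2^{p+1}$ and $c\in\mathbb F_2^m$, define the fibre $T_c=\{\omega''\mid(c,\omega'')\in T\}$. By the display above, every fibre of $\operatorname{supp}(W_h)$ is either empty or a coset of $\gamma_k^\perp$, for each $h\in\{g_k,g'_k\}$. Since the $c_{kr}$ are distinct, the nonempty fibres occur exactly over the points of $\mathcal D_k$. Likewise, for $h'\in\{g_{k'},g'_{k'}\}$, every nonempty fibre of $\operatorname{supp}(W_{h'})$ is a coset of $\gamma_{k'}^\perp$, and these fibres occur exactly over $\mathcal D_{k'}$.

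Now suppose \eqref{Wh} holds with $\tau=(\tau',\tau'')$. Translation acts fibrewise:
\[
\operatorname{supp}(W_h)_c=\tau''\oplus\operatorname{supp}(W_{h'})_{c\oplus\tau'}
\quad\text{for every } c.
\]
Pick any $c\in\mathcal D_k$, for instance $c=c_{k0}$. The left-hand fibre is then a nonempty coset of $\gamma_k^\perp$. The right-hand fibre is a translate of a fibre of $\operatorname{supp}(W_{h'})$; since it equals a nonempty set, it must itself be nonempty, and so it is a coset of $\gamma_{k'}^\perp$.

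A coset $E$ of a linear subspace $V$ determines $V$, because $V=\{e\oplus e'\mid e,e'\in E\}$. Hence $\gamma_k^\perp=\gamma_{k'}^\perp$. For nonzero vectors of $\mathbb F_2^{p+1}$, equal orthogonal hyperplanes force equal vectors, since $(\gamma^\perp)^\perp=\operatorname{span}\{\gamma\}=\{\boldsymbol 0,\gamma\}$. This gives $\gamma_k=\gamma_{k'}$, contradicting the hypothesis, so no such $\tau$ exists.

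The only step requiring care is the fibre bookkeeping. We must make sure that a fibre on the left being nonempty forces the matching fibre on the right to be nonempty, and that translation by $\tau$ really acts as ``shift the base by $\tau'$ and the fibre by $\tau''$''. Both follow directly from the product structure of $\mathbb F_2^m\times\mathbb F_2^{p+1}$.
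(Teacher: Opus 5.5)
Your proposal is correct and follows essentially the same route as the paper. Like the paper, you compute $\operatorname{supp}(W_{f_{kr}})=\{c_{kr}\}$, compare the fibre over a point of $\mathcal D_k$ with the fibre over its translate in $\mathcal D_{k'}$, and conclude from the equality of two nonempty affine hyperplanes with normals $\gamma_k$ and $\gamma_{k'}$ that $\gamma_k=\gamma_{k'}$. Your observation that every nonempty fibre is a coset of $\gamma_k^{\perp}$, whatever the offsets, lets you handle the paper's four cases for $(h,h')$ in a single argument, which is a mild streamlining.
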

\begin{proof}
For each $0\leq r\leq 2^p-1$, since
$f_{kr}(x)=c_{kr}\cdot x$, we have
\[
W_{f_{kr}}(\omega')
=
\sum_{x\in\mathbb{F}_2^m}
(-1)^{f_{kr}(x)\oplus \omega'\cdot x}
=
\sum_{x\in\mathbb{F}_2^m}
(-1)^{(c_{kr}\oplus\omega')\cdot x}
=
\begin{cases}
2^m, & \omega'=c_{kr},\\
0, & \omega'\neq c_{kr}.
\end{cases}
\]
Hence, 
$\operatorname{supp}(W_{f_{kr}})=\{c_{kr}\}$.
By~\eqref{Suppgg'} and~\eqref{Set-R(psi)},
\begin{equation}\label{suppgg}
    \begin{aligned}
\operatorname{supp}(W_{g_{k}})
&=
\bigcup_{r=0}^{2^p-1}
\left\{
(c_{kr},\omega'')
\,\middle|\,
\gamma_k\cdot\omega''=\psi_k(v_r)
\right\},\\
\operatorname{supp}(W_{g'_{k}})
&=
\bigcup_{r=0}^{2^p-1}
\left\{
(c_{kr},\omega'')
\,\middle|\,
\gamma_k\cdot\omega''=\psi_k(v_r)\oplus1
\right\}.
\end{aligned}
\end{equation}
The corresponding expressions for $g_{k'}$ and $g'_{k'}$ are
obtained by replacing $k$ with $k'$.

Let $h\in\{g_{k},g'_{k}\}$ and
$h'\in\{g_{k'},g'_{k'}\}$.
Suppose, to the contrary, that there exists
$\tau=(\tau',\tau'')
\in\mathbb{F}_2^m\times\mathbb{F}_2^{p+1}$
such that \eqref{Wh} holds.

Fix $0\leq r\leq 2^p-1$. Since the fiber of
$\operatorname{supp}(W_h)$ with first component $c_{kr}$ is nonempty,
\eqref{Wh} implies that
$c_{kr}\oplus\tau'\in\mathcal{D}_{k'}$.
Hence there exists a unique
$0\leq r'\leq 2^p-1$ such that
$c_{kr}\oplus\tau'=c_{k'r'}$.
Consequently, for every
$\omega''\in\mathbb{F}_2^{p+1}$,
$(c_{kr},\omega'')\in\operatorname{supp}(W_h)$ if and only if
$(c_{k'r'},\omega''\oplus\tau'')
\in\operatorname{supp}(W_{h'})$.

We now consider the four possible choices of $h$ and $h'$.

\noindent
\textbf{Case 1:} $h=g_{k}$ and $h'=g_{k'}$.

The above equivalence gives
\[
\{
\omega''\in\mathbb{F}_2^{p+1}
\,|\,
\gamma_k\cdot\omega''=\psi_k(v_r)
\}
=
\{
\omega''\in\mathbb{F}_2^{p+1}
\,|\,
\gamma_{k'}\cdot\omega''
=
\psi_{k'}(v_{r'})
\oplus\gamma_{k'}\cdot\tau''
\}.
\]
For this fixed $r$ and the corresponding $r'$, the two sets are
nonempty affine hyperplanes in $\mathbb{F}_2^{p+1}$ with normal
vectors $\gamma_k$ and $\gamma_{k'}$, respectively. Their equality implies that the corresponding linear hyperplanes
coincide, and hence
$\operatorname{span}(\gamma_k)
=
\operatorname{span}(\gamma_{k'})$.
Since $\gamma_k,\gamma_{k'}\neq0$, we obtain
$\gamma_k=\gamma_{k'}$, a contradiction.

\noindent
\textbf{Case 2:} $h=g_{k}$ and $h'=g'_{k'}$.

In this case,
\[
\{
\omega''\in\mathbb{F}_2^{p+1}
\,|\,
\gamma_k\cdot\omega''=\psi_k(v_r)
\}
=
\{
\omega''\in\mathbb{F}_2^{p+1}
\,|\,
\gamma_{k'}\cdot\omega''
=
\psi_{k'}(v_{r'})\oplus1
\oplus\gamma_{k'}\cdot\tau''
\}.
\]
For this fixed $r$ and the corresponding $r'$, equality of the two
affine hyperplanes again implies
$\gamma_k=\gamma_{k'}$, a contradiction.

\noindent
\textbf{Case 3:} $h=g'_{k}$ and $h'=g_{k'}$.

We obtain
\[
\{
\omega''\in\mathbb{F}_2^{p+1}
\,|\,
\gamma_k\cdot\omega''=\psi_k(v_r)\oplus1
\}
=
\{
\omega''\in\mathbb{F}_2^{p+1}
\,|\,
\gamma_{k'}\cdot\omega''
=
\psi_{k'}(v_{r'})
\oplus\gamma_{k'}\cdot\tau''
\}.
\]
Hence, for this fixed $r$ and the corresponding $r'$, we obtain
$\gamma_k=\gamma_{k'}$, again a contradiction.

\noindent
\textbf{Case 4:} $h=g'_{k}$ and $h'=g'_{k'}$.

Finally,
\[
\{
\omega''\in\mathbb{F}_2^{p+1}
\,|\,
\gamma_k\cdot\omega''=\psi_k(v_r)\oplus1
\}
=
\{
\omega''\in\mathbb{F}_2^{p+1}
\,|\,
\gamma_{k'}\cdot\omega''
=
\psi_{k'}(v_{r'})\oplus1
\oplus\gamma_{k'}\cdot\tau''
\}.
\]
For this fixed $r$ and the corresponding $r'$, this again implies
$\gamma_k=\gamma_{k'}$, a contradiction.

Thus none of the four cases is possible. Therefore, no
$\tau\in\mathbb{F}_2^{m+p+1}$ can satisfy \eqref{Wh}.
\end{proof}
\begin{Remark}
By Proposition~\ref{Prop-direct-separation}, if there exist distinct
$k,k'\in\mathbb F_2^q$ such that
$\gamma_k\neq\gamma_{k'}$, then the Walsh supports in
$\Omega_R^{\mathcal{F}_0}$ are not all translations of a single
fixed support. Hence, this family does not satisfy the
common-translation condition. Every family obtained from
\cite[Theorem~4.4]{disjoint2} satisfies this condition, which is
preserved under a common EA transformation. Consequently,
$\Omega_R^{\mathcal{F}_0}$ is not commonly EA-equivalent to any
family obtained from that construction. When $q=0$, the two Walsh
supports are translations of each other, and hence this criterion does not distinguish the resulting family.
\end{Remark}

The preceding result rules out common EA-equivalence. We next
consider the broader notion of memberwise EA-equivalence.

 Suppose that, for each $0\leq i\leq 2^s-1$, an EA
transformation with parameters $A_i\in\operatorname{GL}(n,2)$,
$a_i,b_i\in\mathbb F_2^n$, and $c_i\in\mathbb F_2$ is chosen
independently, and set
$\widehat f_i(x)=f_i(A_i x\oplus a_i)\oplus b_i\cdot x\oplus c_i$,
where $\{f_i\mid 0\leq i\leq 2^s-1\}$ is the family given by~\eqref{fff}. Assume that the resulting family
$\{\widehat f_i\mid 0\leq i\leq 2^s-1\}$
is again a disjoint spectra family.
 The following result gives
a necessary condition satisfied by every such family.
\begin{Proposition}\label{Pro-5}
For every $0\leq i\leq 2^s-1$,
\[
|
\operatorname{supp}(W_{\widehat f_i})
\oplus
\operatorname{supp}(W_{\widehat f_i})
|
=
|
\operatorname{supp}(W_f)
\oplus
\operatorname{supp}(W_f)
|.
\]
In particular, the Walsh-support self-sum cardinality is constant over $\{\widehat f_i\,|\, 0\leq i\leq 2^s-1\}$.
\end{Proposition}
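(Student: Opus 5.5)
We combine Proposition~\ref{Prop-linear-equivalence-invariants}(1) with the common-translation structure of the spectral family in \eqref{fff}. First, for each $i$, part (1) of Proposition~\ref{Prop-linear-equivalence-invariants} applied to $\widehat f_i(x)=f_i(A_ix\oplus a_i)\oplus b_i\cdot x\oplus c_i$ gives
\[
\operatorname{supp}(W_{\widehat f_i})=A_i^T\operatorname{supp}(W_{f_i})\oplus b_i .
\]
Second, by the spectral construction, $\operatorname{supp}(W_{f_i})=\operatorname{supp}(W_f)\oplus q_i$. Substituting, we get
\[
\operatorname{supp}(W_{\widehat f_i})=A_i^T\operatorname{supp}(W_f)\oplus\bigl(A_i^Tq_i\oplus b_i\bigr),
\]
so each transformed support is an invertible linear image of $\operatorname{supp}(W_f)$ followed by a translation by $t_i:=A_i^Tq_i\oplus b_i$.

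Next we compute the self-sum, where $X\oplus X=\{u\oplus v\mid u,v\in X\}$. For any set $X\subseteq\mathbb F_2^n$, any $t\in\mathbb F_2^n$ and any $B\in GL(n,2)$, the translations cancel in characteristic two:
\[
(BX\oplus t)\oplus(BX\oplus t)=\{Bu\oplus t\oplus Bv\oplus t\}=B(X\oplus X).
\]
Since $B$ is injective, $|B(X\oplus X)|=|X\oplus X|$. Taking $X=\operatorname{supp}(W_f)$, $B=A_i^T$ (invertible because $A_i$ is) and $t=t_i$ yields the claimed equality for every $i$. The right-hand side does not depend on $i$, so the self-sum cardinality is constant over the family $\{\widehat f_i\}$.

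There is no real obstacle here. The only points needing care are two. First, the identity $\operatorname{supp}(W_g)=A^T\operatorname{supp}(W_f)\oplus b$ must be used with the correct matrix $A_i^T$ for each member. Second, the self-sum must be read as the full set of pairwise sums, including $u=v$, so that the translation cancellation applies verbatim. The disjoint-spectra hypothesis on $\{\widehat f_i\}$ is not needed for this equality; it only ensures that the objects being compared form a legitimate family.
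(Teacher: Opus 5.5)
Your proposal is correct and follows essentially the same route as the paper's proof. Both use Proposition~\ref{Prop-linear-equivalence-invariants}(1) together with $\operatorname{supp}(W_{f_i})=\operatorname{supp}(W_f)\oplus q_i$ to write $\operatorname{supp}(W_{\widehat f_i})=A_i^T(\operatorname{supp}(W_f)\oplus q_i)\oplus b_i$, then observe that the translations cancel in the self-sum and that the bijection $A_i^T$ preserves cardinality.
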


\begin{proof}
By Proposition~1 and
$\operatorname{supp}(W_{f_i})
=\operatorname{supp}(W_f)\oplus q_i$, we have
$\operatorname{supp}(W_{\widehat f_i})
=A_i^T(\operatorname{supp}(W_f)\oplus q_i)\oplus b_i$.
Consequently,
\begin{equation*}
   \begin{aligned}
\operatorname{supp}(W_{\widehat f_i})
 \oplus
 \operatorname{supp}(W_{\widehat f_i})
&=
\bigl(
 A_i^T(\operatorname{supp}(W_f)\oplus q_i)\oplus b_i
\bigr)
\oplus
\bigl(
 A_i^T(\operatorname{supp}(W_f)\oplus q_i)\oplus b_i
\bigr)\\
&=
A_i^T\bigl(
 \operatorname{supp}(W_f)
 \oplus
 \operatorname{supp}(W_f)
\bigr).
\end{aligned}
\end{equation*}
 Because $A_i^T$ is
bijective, it preserves cardinality, which proves the assertion.
\end{proof}
Proposition~\ref{Pro-5} provides a directly computable criterion for ruling out
memberwise EA-equivalence to any family obtained from \cite[Theorem~4.4]{disjoint2}. In particular, it suffices to find two
members of a family whose Walsh-support self-sums have different
cardinalities. Example~\ref{Ex-optimal-family} below shows that this criterion can be
satisfied by the construction in Theorem \ref{The-Set-withoutlinear}.

\subsection{Coordinate forms of the constructed $\mathrm{GMM}$ functions}\label{subse-GMMrep}

 The preceding construction already yields $\mathrm{GMM}$ functions. We now
introduce coordinates adapted to the parametrizations of
$\mathcal{D}_k$ and $S_k$. This will yield
 coordinate forms for the constructed functions and
make their algebraic degrees transparent.

Let
$\rho:\F_2^p\rightarrow\{0,1,\ldots,2^p-1\}$
be the index map associated with the fixed vectors
$\{v_r\}_{r=0}^{2^p-1}$; namely,
$\rho(v_r)=r$ for $0\leq r\leq2^p-1$.
For each $0\leq k\leq2^q-1$, associate with the offset family
$R$ in \eqref{Set-R(psi)} two Boolean functions $A_k,B_k\in\mathcal B_p$, defined by
$A_k(y')=a_{k,\rho(y')}$ and
$B_k(y')=b_{k,\rho(y')}$ for every $y'\in\F_2^p$.
Equivalently, $A_k(v_r)=a_{kr}$ and $B_k(v_r)=b_{kr}$.
Since $\{v_r\}_{r=0}^{2^p-1}=\F_2^p$, the functions
$A_k$ and $B_k$ are uniquely determined.

\begin{Proposition}
\label{Prop-GMM}
Let $g_{k},g'_{k}\in\Omega_{R}^{\mathcal {F}_0}$,
$0\le k\le 2^q-1$, where $\mathcal {F}_0$ is defined in \eqref{Set-Fckr}
and $R$ is defined in \eqref{Set-R(psi)}. Then there exists an
invertible linear transformation
$\mathcal T_k:\mathbb F_2^p\times\mathbb F_2
\to\mathbb F_2^{p+1}$ such that
\begin{equation}
\label{Eq-GMMREP}
\begin{aligned}
&\widetilde g_{k}(x,y',y_p)
=g_{k}\bigl(x,\mathcal T_k(y',y_p)\bigr)
=(y',h(y')\oplus k)\cdot x
  \oplus A_{k}(y')\oplus\psi_k(y')y_p,
\\&\widetilde g'_{k}(x,y',y_p)
=g'_{k}\bigl(x,\mathcal T_k(y',y_p)\bigr)
=(y',h(y')\oplus k)\cdot x
  \oplus B_{k}(y')\oplus\psi_k(y')y_p
  \oplus y_p,
\end{aligned}
\end{equation}
for every $(x,y',y_p)\in
\mathbb F_2^m\times\mathbb F_2^p\times\mathbb F_2$.
\end{Proposition}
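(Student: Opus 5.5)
The natural choice is $\mathcal T_k(y',y_p)=\mathcal L_k(y')\oplus y_p\gamma_k$, where $\mathcal L_k:\mathbb F_2^p\to S_k$ is the isomorphism fixed in \eqref{Iso-A}. This map is linear. It is injective because $\mathcal L_k$ is injective and $\gamma_k\notin S_k$: if $\mathcal L_k(y')\oplus y_p\gamma_k=\boldsymbol{0}_{p+1}$, then $y_p\gamma_k\in S_k$, which forces $y_p=0$ and hence $y'=\boldsymbol{0}_p$. Since the dimensions agree, $\mathcal T_k$ is invertible. The rest of the proof evaluates $\sigma_k$, $\pi_k$ and $a_k$ (resp.\ $b_k$) at $\mathcal T_k(y',y_p)$ and then writes the offset part as a polynomial in $y_p$.

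\textbf{Step 1: the linear part.} Write $y'=v_r$, so $r=\rho(y')$. Then $\mathcal T_k(v_r,0)=s_{kr}$ and $\mathcal T_k(v_r,1)=s_{kr}\oplus\gamma_k=s'_{kr}$. In both cases $\sigma_k(\mathcal T_k(y',y_p))=r=\rho(y')$, independently of $y_p$. Using the $\mathrm{GMM}$ representation $g_k(x,y)=x\cdot\pi_k(y)\oplus a_k(y)$ with $\pi_k(y)=c_{k,\sigma_k(y)}$, the linear part becomes $c_{k,\rho(y')}\cdot x$. By \eqref{ckr}, $c_{k,\rho(y')}=(v_r,h(v_r)\oplus k)=(y',h(y')\oplus k)$.

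\textbf{Step 2: the offset of $g_k$.} By definition, $a_k(\mathcal T_k(y',0))=a_{k\rho(y')}=A_k(y')$ and $a_k(\mathcal T_k(y',1))=a'_{k\rho(y')}$. For a bit $y_p$, a function taking the value $u$ at $y_p=0$ and $w$ at $y_p=1$ equals $u\oplus(u\oplus w)y_p$. Here $u\oplus w=a_{kr}\oplus a'_{kr}=\psi_k(v_r)=\psi_k(y')$ by \eqref{Set-R(psi)}, so the offset equals $A_k(y')\oplus\psi_k(y')y_p$.

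\textbf{Step 3: the offset of $g'_k$.} The same computation gives $B_k(y')\oplus(b_{kr}\oplus b'_{kr})y_p=B_k(y')\oplus(\psi_k(y')\oplus1)y_p$, which expands to $B_k(y')\oplus\psi_k(y')y_p\oplus y_p$. This matches \eqref{Eq-GMMREP}.

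The only step needing care is checking that $\mathcal T_k$ sends the coordinate $y_p$ exactly onto the coset choice $S_k$ versus $\gamma_k\oplus S_k$, consistently with the labeling $s_{kr}=\mathcal L_k(v_r)$ used to define $\sigma_k$ and the offset functions. Once that alignment is verified, everything else is direct substitution.
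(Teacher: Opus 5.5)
Your proposal is correct and follows essentially the same route as the paper. You use the same map $\mathcal T_k(y',y_p)=\mathcal L_k(y')\oplus\gamma_k y_p$, the identity $\sigma_k(\mathcal T_k(y',y_p))=\rho(y')$, and the interpolation $u\oplus(u\oplus w)y_p$ for the offsets; the only cosmetic difference is that you prove invertibility by a kernel argument instead of citing the direct-sum decomposition $\mathbb F_2^{p+1}=S_k\oplus\langle\gamma_k\rangle$.
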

\begin{proof}
Fix an integer $k$ with $0\leq k\leq2^q-1$.
Since $S_k$ is a $p$-dimensional subspace of
$\F_2^{p+1}$ and $\gamma_k\notin S_k$, we have
$\F_2^{p+1}=S_k\oplus\langle\gamma_k\rangle$.

Recall the linear isomorphism
$\mathcal L_k:\F_2^p\rightarrow S_k$ given by
\eqref{Iso-A}, where
$s_{kr}=\mathcal L_k(v_r)$ for
$0\leq r\leq2^p-1$. Define
\begin{equation}\label{Eq-isoT}
\mathcal T_k:\F_2^p\times\F_2\rightarrow\F_2^{p+1},
\qquad
\mathcal T_k(y',y_p)
=
\mathcal L_k(y')\oplus\gamma_k y_p.
\end{equation}
The direct-sum decomposition above implies that
$\mathcal T_k$ is a linear isomorphism.

For every $y'\in\F_2^p$, the definition of $\rho$ gives
$v_{\rho(y')}=y'$. Therefore,
$\mathcal L_k(y')=s_{k,\rho(y')}$, and hence
$\mathcal T_k(y',0)=s_{k,\rho(y')}$ and
$\mathcal T_k(y',1)=s'_{k,\rho(y')}$.
By the definition of the selector map $\sigma_k$ in \eqref{smap}, we obtain
\[
\sigma_k\bigl(\mathcal T_k(y',y_p)\bigr)=\rho(y')
\qquad
\text{for every }(y',y_p)\in\F_2^p\times\F_2.
\]

By the definitions of $A_k$ and $B_k$, we have
$A_k(y')=a_{k,\rho(y')}$ and
$B_k(y')=b_{k,\rho(y')}$. Moreover, the choice of the offset
family $R$ in \eqref{Set-R(psi)} gives
$a_{k,\rho(y')}\oplus a'_{k,\rho(y')}=\psi_k(y')$ and
$b_{k,\rho(y')}\oplus b'_{k,\rho(y')}=\psi_k(y')\oplus1$.
By \eqref{Set-Fckr},
$f_{k,\rho(y')}(x)
=c_{k,\rho(y')}\cdot x=(y',h(y')\oplus k)\cdot x$.

Using
$\sigma_k(\mathcal T_k(y',y_p))=\rho(y')$
in the definitions of $g_k$ and $g'_k$, we obtain
\[
\begin{aligned}
\widetilde g_k(x,y',y_p)
=g_k\bigl(x,\mathcal T_k(y',y_p)\bigr)
&=f_{k,\rho(y')}(x)
  \oplus a_{k,\rho(y')}
  \oplus
  \bigl(a_{k,\rho(y')}
        \oplus a'_{k,\rho(y')}\bigr)y_p\\
&=(y',h(y')\oplus k)\cdot x
  \oplus A_k(y')
  \oplus\psi_k(y')y_p,
\\[1mm]
\widetilde g'_k(x,y',y_p)
=g'_k\bigl(x,\mathcal T_k(y',y_p)\bigr)
&=f_{k,\rho(y')}(x)
  \oplus b_{k,\rho(y')}
  \oplus
  \bigl(b_{k,\rho(y')}
        \oplus b'_{k,\rho(y')}\bigr)y_p\\
&=(y',h(y')\oplus k)\cdot x
  \oplus B_k(y')
  \oplus\psi_k(y')y_p
  \oplus y_p.
\end{aligned}
\]
This proves \eqref{Eq-GMMREP}.
\end{proof}
\begin{Remark}
\label{Rem-individual-GMM-equivalence}
For each $k$, the same transformation $\mathcal T_k$
produces the coordinate forms in \eqref{Eq-GMMREP} for
both $g_{k}$ and $g'_{k}$. Its choice depends on the indexing
isomorphism $\mathcal L_k$ and the vector $\gamma_k$,
and may therefore vary with $k$.
\end{Remark}

The coordinate forms in \eqref{Eq-GMMREP} separate
the terms that determine the algebraic degrees of the
constructed functions. Since each $\mathcal{T}_k$ is invertible and
linear, algebraic degree is preserved under these coordinate
transformations. We now use these forms to prescribe a common algebraic degree for all family members.

\subsection{Algebraic-degree control for the constructed family}
\label{Subsec-optimal-degree-degenerate}
Let \(p\geq2\), \(3\leq d\leq p+1\), and \(m=p+q\), where
$0\leq q<
\sum_{i=2}^{d-1}\binom{p}{i}
\leq 2^p-p-1$.
In Subsection~\ref{subse-partitionDk}, choose Boolean
functions
\(\phi_0,\ldots,\phi_{q-1}\in\mathcal{B}_p\) such that
$1,y_0,\ldots,y_{p-1},
\phi_0,\ldots,\phi_{q-1}$
is linearly independent and
\(\deg(\phi_i)\leq d-1\) for every
\(0\leq i\leq q-1\). Let \(U\) be the subspace defined
in~\eqref{spanU}. For every \(0\leq k\leq2^q-1\), choose
\(\psi_k\in\mathcal{B}_p\setminus U\) such that
\(\deg(\psi_k)=d-1\).

The required functions admit a simple monomial choice. For every $k$,
set
$\psi_k(y')=\prod_{i=0}^{d-2}y_i$.
Choose $\phi_0,\ldots,\phi_{q-1}$ as any $q$ distinct ANF monomials,
different from $\psi_k$, whose degrees lie between $2$ and $d-1$.
Such a choice is possible because
$q<\sum_{i=2}^{d-1}\binom{p}{i}$.
Then
$1,y_0,\ldots,y_{p-1},\phi_0,\ldots$, $\phi_{q-1},\psi_k$
are distinct ANF monomials and hence are linearly independent.
Therefore, $\psi_k\notin U$, $\deg(\phi_i)\leq d-1$, and
$\deg(\psi_k)=d-1$, as required. Define

\begin{equation}
\label{Ry}
\begin{aligned}
&R
=
\left(
a_{kr},a'_{kr},b_{kr},b'_{kr}
\right)_{\substack{
0\le k\le 2^q-1\\
0\leq r\leq 2^p-1
}},
\end{aligned}
\end{equation}
where $\deg(A_k)
\le d,
\deg(B_k)\le d$ and
$a_{kr}\oplus a'_{kr}
=\psi_k(v_r),
b_{kr}\oplus b'_{kr}
=\psi_k(v_r)\oplus 1$ for $0\leq k\leq 2^q-1$, $0\leq r\leq 2^p-1$.
The degree conditions on the offset functions are always
feasible. Indeed, for every \(k\) and \(r\), take
\(a_{kr}=b_{kr}=0\), and consequently set
\(a'_{kr}=\psi_k(v_r)\) and
\(b'_{kr}=\psi_k(v_r)\oplus1\). With this choice, $A_k=B_k=0$, so the required degree bounds
are satisfied, while both offset-difference conditions hold
identically.

\begin{Theorem}\label{The-RiposilonGMM}
Under the preceding choices, $\Omega_R^{\mathcal{F}_0}$ is a
maximum-cardinality family of $2^{q+1}$ disjoint spectra
$(q+1)$-plateaued $\mathrm{GMM}$ functions without nonzero linear
structures in $m+p+1$ variables, where $R$ and
$\mathcal{F}_0$ are defined in \eqref{Ry} and
\eqref{Set-Fckr}, respectively. Every member  has algebraic degree exactly $d$.
\end{Theorem}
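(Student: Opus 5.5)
The plan is to reduce everything except the degree claim to Theorem~\ref{The-Set-withoutlinear}, and then read off the degree from the coordinate forms of Proposition~\ref{Prop-GMM}. The choices preceding the statement are a special case of the construction in Theorem~\ref{The-Set-withoutlinear}. Indeed, $p\ge2$ and $q<\sum_{i=2}^{d-1}\binom{p}{i}\le 2^p-p-1$, so the standing bound $q<2^p-p-1$ holds. The functions $1,y_0,\ldots,y_{p-1},\phi_0,\ldots,\phi_{q-1}$ are linearly independent, each $\psi_k\notin U$, and $R$ in \eqref{Ry} satisfies the offset relations of \eqref{Set-R(psi)}. Hence Theorem~\ref{The-Set-withoutlinear}, together with Proposition~\ref{The-disjoint}, gives that $\Omega_R^{\mathcal F_0}$ is a maximum-cardinality family of $2^{q+1}$ disjoint spectra $(q+1)$-plateaued $\mathrm{GMM}$ functions without nonzero linear structures. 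It remains only to show $\deg(g_k)=\deg(g'_k)=d$.

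By Proposition~\ref{Prop-GMM}, $g_k$ and $g'_k$ are linearly equivalent, via the invertible linear map $(x,y',y_p)\mapsto(x,\mathcal T_k(y',y_p))$, to $\widetilde g_k$ and $\widetilde g'_k$ in \eqref{Eq-GMMREP}. Proposition~\ref{Prop-linear-equivalence-invariants}(2) then lets me compute degrees on these coordinate forms, provided the degree turns out to be at least $2$. (In any case, degree is invariant under invertible linear substitution.) I expand
\[
\widetilde g_k=\bigoplus_{i=0}^{p-1}y_ix_i\oplus\bigoplus_{j=0}^{q-1}\bigl(\phi_j(y')\oplus k_j\bigr)x_{p+j}\oplus A_k(y')\oplus\psi_k(y')y_p ,
\]
where $x=(x_0,\ldots,x_{m-1})$. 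The $x$-part has degree at most $\max\{2,\,1+\max_j\deg\phi_j\}\le d$ because $\deg\phi_j\le d-1$. Also $\deg A_k\le d$, and $\deg(\psi_k y_p)=d$. Hence $\deg\widetilde g_k\le d$, and the same bound holds for $\widetilde g'_k$, since the extra term $y_p$ is affine and $\deg B_k\le d$.

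For the lower bound I will exhibit a degree-$d$ monomial with nonzero ANF coefficient. The ANF is unique and the variable sets $x$, $y'$, $y_p$ are disjoint, so every monomial containing $y_p$ arises only from the term $\psi_k(y')y_p$. The terms $A_k(y')$ and $B_k(y')$ involve only $y'$, and the $x$-part contains no $y_p$. Since $\deg\psi_k=d-1$, some monomial $y'^{u}$ with $\operatorname{wt}(u)=d-1$ appears in $\psi_k$, and then $y'^{u}y_p$ appears in $\widetilde g_k$ and $\widetilde g'_k$ with coefficient $1$. With the explicit choice $\psi_k=\prod_{i=0}^{d-2}y_i$, this monomial is $y_0\cdots y_{d-2}y_p$. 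Thus both degrees are exactly $d\ge3$, and transferring back through $\mathcal T_k$ gives $\deg g_k=\deg g'_k=d$ for every $k$.

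The main obstacle is the lower bound: I must ensure the degree-$d$ monomial is not cancelled by $A_k$, $B_k$, or the $x$-terms. The separation of the variable $y_p$ in \eqref{Eq-GMMREP} resolves this, so I expect the step to be short once that separation is pointed out explicitly. The upper bound uses $\deg\phi_j\le d-1$, which is exactly why that hypothesis was imposed.
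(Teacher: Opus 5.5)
Your proposal is correct and follows essentially the same route as the paper. The structural claims are inherited from Theorem~\ref{The-Set-withoutlinear}, and the degree is read off the coordinate forms of Proposition~\ref{Prop-GMM}: the upper bound comes from $\deg(\phi_i)\le d-1$ and $\deg(A_k),\deg(B_k)\le d$, and the lower bound from a degree-$d$ monomial of $\psi_k(y')y_p$ that no $y_p$-free term can cancel. One small imprecision is your claim that every monomial containing $y_p$ arises from $\psi_k(y')y_p$; in $\widetilde g'_k$ the linear term $y_p$ also does, but since $d\ge 3$ it cannot cancel the degree-$d$ monomial, a point the paper states explicitly.
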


\begin{proof}
By the assumed linear independence of
$1,y_0,\ldots,y_{p-1},
\phi_0,\ldots,\phi_{q-1}$
and the choice \(\psi_k\notin U\) for every \(k\), all the
assumptions of Theorem~\ref{The-Set-withoutlinear} are
satisfied.
Theorem~\ref{The-Set-withoutlinear} shows that
\(\Omega_{R}^{\mathcal{F}_0}\) is a
maximum-cardinality family of \(2^{q+1}\) disjoint spectra
\((q+1)\)-plateaued $\mathrm{GMM}$ functions without nonzero linear structures. 

It remains to determine the algebraic degrees. With the
preceding choice of the offsets in \eqref{Ry}, the coordinate forms
in \eqref{Eq-GMMREP} are
\begin{equation*}
\begin{aligned}
    &\widetilde{g}_{k}(x,y',y_p)
=(y',h(y')\oplus k)\cdot x\oplus A_k(y')\oplus\psi_k(y')y_p,\\&
\widetilde{g}'_{k}(x,y',y_p)
=(y',h(y')\oplus k)\cdot x
\oplus B_k(y')\oplus\psi_k(y')y_p\oplus y_p.
\end{aligned}
\end{equation*}
Since \(\deg(\phi_i)\leq d-1\) for every \(i\), the
inner-product term
\((y',h(y')\oplus k)\cdot x\) has algebraic degree at most
\(d\). By assumption, 
 $\deg(A_k)\leq d$ and $\deg(B_k)\leq d$.  Hence all terms in the two coordinate forms, except possibly
\(\psi_k(y')y_p\), have algebraic degree at most \(d\).
Moreover, \(\deg(\psi_k)=d-1\), and hence
\(\psi_k(y')y_p\) contains a monomial of degree \(d\).
This monomial involves \(y_p\), whereas the inner-product
term and the offset functions $A_k(y')$ and $B_k(y')$
do not involve \(y_p\), so it cannot be cancelled by any of
these terms. The additional term \(y_p\) in the
second   coordinate form is linear and therefore cannot
cancel a degree-$d$ monomial either. Consequently,
$\deg(\widetilde{g}_{k})
=
\deg(\widetilde{g}'_{k})
=
d$.

Finally, the transformations \(\mathcal{T}_k\) are invertible and
linear, and algebraic degree is invariant under invertible
linear transformations. Therefore,
\(\deg(g_{k})=\deg(g'_{k})=d\) for every
\(0 \leq k \leq 2^q-1\).
\end{proof}
When \(d=p+1\), the preceding degree restrictions on the
functions \(\phi_i\) and on the offset functions
$A_k(y')$ and \(B_{k}(y')\) become
automatic, since they are all Boolean functions in \(p\)
variables. Moreover,
\(\sum_{i=2}^{d-1}\binom{p}{i}=2^p-p-1\), so the parameter
range reduces precisely to the original condition
\(0\leq q<2^p-p-1\). Therefore, under the standing
requirement \(\psi_k\notin U\), the only additional degree
condition needed to attain the optimal algebraic degree
\(p+1\) is \(\deg(\psi_k)=p\). For the explicit monomial choice given above,
this condition is already satisfied by
$\psi_k(y')=\prod_{i=0}^{p-1}y_i$.
The following lemma shows, more generally, that a degree-\(p\)
function outside \(U\) exists for every proper subspace
\(U\subsetneq\mathcal{B}_p\), independently of the preceding
monomial realization.

\begin{Lemma}
\label{Lem-max-degree-auxiliary}
Let $U$ be a proper linear subspace of $\mathcal B_p$. Then there exists a Boolean function
$\psi\in\mathcal B_p\setminus U$
such that
$\deg(\psi)=p$.
\end{Lemma}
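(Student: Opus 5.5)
The plan is to use the coset structure of the degree-$p$ functions together with the fact that a proper subspace cannot contain a whole coset of a proper subspace together with its complement. Let $\mathcal B_p^{<p}$ be the subspace of Boolean functions of degree at most $p-1$, i.e. those whose ANF does not contain the monomial $\mu(y')=\prod_{i=0}^{p-1}y_i$. The set of functions of degree exactly $p$ is then the coset $\mu\oplus\mathcal B_p^{<p}$. This holds because the ANF coefficient of $\mu$ is a linear functional on $\mathcal B_p$; by \eqref{Mob} it equals $\bigoplus_{v\in\F_2^p}f(v)$. Its kernel is $\mathcal B_p^{<p}$, a hyperplane of $\mathcal B_p$.

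Argue by contradiction. Suppose every function of degree $p$ lies in $U$, so $\mu\oplus\mathcal B_p^{<p}\subseteq U$. Then $\mu\in U$, since $0\in\mathcal B_p^{<p}$. Next, for every $g\in\mathcal B_p^{<p}$ we have $\mu\oplus g\in U$ and $\mu\in U$, and since $U$ is closed under addition, $g=(\mu\oplus g)\oplus\mu\in U$. Hence $\mathcal B_p^{<p}\subseteq U$ as well.

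Combining the two inclusions gives $U\supseteq \mathcal B_p^{<p}\cup(\mu\oplus\mathcal B_p^{<p})=\mathcal B_p$. This contradicts the assumption $U\subsetneq\mathcal B_p$. Therefore some $\psi$ with $\deg(\psi)=p$ lies outside $U$.

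There is no real obstacle. The only point needing care is that the degree-$p$ functions form exactly one coset of a hyperplane, which follows from uniqueness of the ANF. The conclusion is also consistent with counting: the coset has $2^{2^p-1}$ elements, and a proper subspace can contain at most half of $\mathcal B_p$, which is not enough to hold both the coset and the hyperplane.
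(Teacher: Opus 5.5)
Your proof is correct and follows essentially the same argument as the paper: you write the degree-$p$ functions as the coset $\mu\oplus\mathcal B_p^{<p}$ of the subspace of functions of degree at most $p-1$. You then show that any subspace containing this coset must contain both $\mu$ and that subspace, and hence all of $\mathcal B_p$, contradicting properness.
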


\begin{proof}
Let
$\mathcal{V}_{p-1}
=
\{f\in\mathcal B_p\,|\,\deg(f)\le p-1\}$.
Since the monomial
$M_p(y')=y_0y_1\cdots y_{p-1}$
is the unique monomial of degree $p$ in $p$ variables, the set of
all Boolean functions of degree exactly $p$ is the affine coset
$M_p\oplus\mathcal V_{p-1}$.

Suppose, to the contrary, that every Boolean function of degree
$p$ belongs to $U$. Then, for every
$r\in\mathcal V_{p-1}$,
both
$M_p$ and
$M_p\oplus r$
belong to $U$. Since $U$ is a linear subspace,
their sum $r$ also belongs to $U$. Hence
$\mathcal V_{p-1}\subseteq U$.
Together with $M_p\in U$, this implies
$\mathcal B_p
=
\mathcal V_{p-1}\oplus\langle M_p\rangle
\subseteq U$,
contradicting the assumption that $U$ is proper.
Therefore at least one degree $p$ Boolean function lies outside
$U$.
\end{proof}

Under the standing condition \(0\leq q<2^p-p-1\), the
linear independence of
\(1,y_0,\ldots,y_{p-1},\phi_0,\\\ldots,\phi_{q-1}\)
gives \(\dim(U)=p+q+1<2^p=\dim(\mathcal{B}_p)\). Hence \(U\)
is a proper subspace of \(\mathcal{B}_p\). By Lemma~\ref{Lem-max-degree-auxiliary}, for
every \(k\), the function \(\psi_k\) can therefore be chosen
outside \(U\) with \(\deg(\psi_k)=p\). We now specialize the
preceding result to this choice, which yields functions of
the optimal algebraic degree \(p+1\).
By taking \(d=p+1\) in Theorem~\ref{The-RiposilonGMM}, we obtain the optimal-degree
case recorded in the following corollary.
\begin{Corollary}\label{Cor1}
Under the hypotheses of Theorem \ref{The-Set-withoutlinear}, if additionally  \(\deg(\psi_k)=p\) for every
\(0\leq k\leq2^q-1\), then all functions in
\(\Omega_{R}^{\mathcal {F}_0}\) constructed in Theorem~\ref{The-Set-withoutlinear}
attain the optimal algebraic degree:
$\deg(g_{k})=\deg(g'_{k})=p+1$
for every \(0\leq k\leq2^q-1\).
\end{Corollary}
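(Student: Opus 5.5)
The plan is to reduce the corollary to the coordinate forms of Proposition~\ref{Prop-GMM}, in the same way Theorem~\ref{The-RiposilonGMM} was proved with $d=p+1$. No extra hypotheses on $\phi_i$, $A_k$ or $B_k$ should be needed, because they are functions of only $p$ variables. First, I note that the hypotheses of Theorem~\ref{The-Set-withoutlinear} already include $\psi_k\notin U$. Lemma~\ref{Lem-max-degree-auxiliary} guarantees that the extra requirement $\deg(\psi_k)=p$ is consistent with this, although for the proof itself we simply assume it.

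Next, fix $k$ and apply the invertible linear map $\mathcal T_k$ of Proposition~\ref{Prop-GMM}. This gives
\[
\widetilde g_k=(y',h(y')\oplus k)\cdot x\oplus A_k(y')\oplus\psi_k(y')y_p,
\qquad
\widetilde g'_k=\widetilde g_k\oplus A_k(y')\oplus B_k(y')\oplus y_p .
\]
I then bound the degree of each piece. Each coordinate of $(y',h(y')\oplus k)$ is a Boolean function in $p$ variables, so it has degree at most $p$. Hence the inner product with $x$ has degree at most $p+1$. Moreover, every monomial of degree $p+1$ in it contains exactly one $x_i$ and no $y_p$. The functions $A_k$ and $B_k$ have degree at most $p$. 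It follows that $\deg(\widetilde g_k)\le p+1$ and $\deg(\widetilde g'_k)\le p+1$. This also agrees with the general plateaued bound $(n+p+1-(q+1))/2+1=p+1$ for $n=m$, so $p+1$ is indeed optimal.

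For the lower bound, write $\psi_k=y_0\cdots y_{p-1}\oplus(\text{lower-degree terms})$. Then $\psi_k(y')y_p$ contains the monomial $y_0\cdots y_{p-1}y_p$. This is the only step needing care: I must check that no other term can cancel it. The monomial contains $y_p$ and no $x$-variable. Every other term either lacks $y_p$ (the inner product, $A_k$, $B_k$) or has degree $1$ (the extra $y_p$). So the monomial survives, and $\deg(\widetilde g_k)=\deg(\widetilde g'_k)=p+1$. Since $\mathcal T_k$ is an invertible linear map acting only on the $y$-coordinates, it preserves algebraic degree. Therefore $\deg(g_k)=\deg(g'_k)=p+1$ for all $k$, and the statement follows as the case $d=p+1$ of Theorem~\ref{The-RiposilonGMM}.
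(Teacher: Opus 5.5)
Your proposal is correct and follows the paper's route. The paper obtains this corollary by setting $d=p+1$ in Theorem~\ref{The-RiposilonGMM}, where the degree conditions on $\phi_i$, $A_k$, $B_k$ hold automatically, and that theorem's proof is exactly your argument: the coordinate forms of Proposition~\ref{Prop-GMM}, plus the monomial $y_0\cdots y_{p-1}y_p$ from $\psi_k(y')y_p$, which nothing else can cancel. Your explicit check that $p+1$ equals the plateaued bound $(m+p+1-(q+1))/2+1$ is a useful small addition that justifies the word ``optimal.''
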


The following example illustrates an optimal-degree family
of disjoint spectra plateaued functions without nonzero
linear structures.

\begin{Example}\label{Ex-optimal-family}
We give a concrete realization of the construction for $q=1$.
Take $p=4$ and $m=p+q=5$, and write
$x=(x_0,x_1,x_2,x_3,x_4)$ and
$y=(y',y_4)=(y_0,y_1,y_2,y_3,y_4)$.
The parameter condition is satisfied since
$1<2^4-4-1=11$.
Take \(\{v_r\}_{r=0}^{15}=\mathbb F_2^4\),
with \(v_0,\ldots,v_{15}\) in lexicographic order.

Choose $\phi_0(y')=y_0y_1$, where
$y'=(y_0,y_1,y_2,y_3)$. Then
$U=\operatorname{span}
\{1,y_0,y_1,y_2,y_3,y_0y_1\}$
is a proper subspace of $\mathcal B_4$.
Choose
$\psi_0(y')=y_0y_1y_2y_3$ and
$\psi_1(y')
=y_0y_1y_2y_3\oplus y_0y_2\oplus y_1y_3$.
Clearly, $\psi_0,\psi_1\notin U$ and
$\deg(\psi_0)=\deg(\psi_1)=4=p$.

Since $h(y')=y_0y_1$, put
$c_{kr}=(v_r,h(v_r)\oplus k)$ for
$k\in\mathbb F_2$ and $0\leq r\leq15$.
Writing $v_r=(v_{r0},v_{r1},v_{r2},v_{r3})$, we have
\[
f_{kr}(x)=c_{kr}\cdot x
=
v_{r0}x_0\oplus v_{r1}x_1\oplus v_{r2}x_2
\oplus v_{r3}x_3
\oplus (v_{r0}v_{r1}\oplus k)x_4.
\]
Thus,
$\mathcal {F}_0
=\{f_{kr}\mid k\in\mathbb F_2,\ 0\leq r\leq15\}$.

Moreover,
$\mathcal D_k=\{c_{kr}\mid 0\leq r\leq15\}$, and the two
sets are explicitly given by
\[
\begin{aligned}
\mathcal D_0
  &=\{0,1,2,4,5,6,8,9,10,12,13,14,19,23,27,31\},\\
\mathcal D_1
  &=\{3,7,11,15,16,17,18,20,21,22,24,25,26,28,29,30\}.
\end{aligned}
\]
Hence $\mathcal D_0\cap\mathcal D_1=\varnothing$ and
$\mathcal D_0\cup\mathcal D_1=\mathbb F_2^5$.

For each $k$ satisfying $0\leq k\leq 1$, choose
$S_k=S=\mathbb F_2^4\times\{0\}$ and take the linear
isomorphism
$\mathcal L_k:\mathbb F_2^4\rightarrow S$ given by
$\mathcal L_k(y')=(y',0)$.
Let $\gamma_0=(0,0,0,0,1)$ and
$\gamma_1=(1,0,0,0,1)$.
Then $\gamma_0,\gamma_1\notin S$ and
$\gamma_0\neq\gamma_1$.
For every $0\leq r\leq15$, set
$s_{kr}=\mathcal L_k(v_r)=(v_r,0)$.
Consequently,
$s'_{0r}=s_{0r}\oplus\gamma_0=(v_{r0},v_{r1},v_{r2},v_{r3},1)$ and
$s'_{1r}=s_{1r}\oplus\gamma_1
=(v_{r0}\oplus1,v_{r1},v_{r2},v_{r3},1)$.

Finally, for every $k\in\mathbb F_2$ and
$0\leq r\leq15$, choose
$a_{kr}=b_{kr}=0$,
$a'_{kr}=\psi_k(v_r)$, and
$b'_{kr}=\psi_k(v_r)\oplus1$.
Then
$a_{kr}\oplus a'_{kr}=\psi_k(v_r)$ and
$b_{kr}\oplus b'_{kr}=\psi_k(v_r)\oplus1$,
so these parameters satisfy the defining conditions on $R$
in~\eqref{Set-R(psi)}.
Moreover, the associated offset functions satisfy
$A_k=B_k=0$ for each $k\in\mathbb F_2$.

Set
\[
\begin{aligned}
P_0(x,y)
=&
y_0y_1y_2y_3y_4
\oplus x_4y_0y_1
\oplus x_0y_0
\oplus x_1y_1
\oplus x_2y_2
\oplus x_3y_3,\\
P_1(x,y)
=&
y_0y_1y_2y_3y_4
\oplus y_1y_2y_3y_4
\oplus y_0y_2y_4
\oplus y_1y_3y_4
\oplus x_4y_0y_1
\oplus\\& x_4y_1y_4
\oplus y_2y_4
\oplus x_0y_0
\oplus x_0y_4
\oplus x_1y_1
\oplus x_2y_2\oplus x_3y_3
\oplus x_4.
\end{aligned}
\]
Substituting the above choices into~\eqref{Eq-g_k} gives
\begin{equation*}
\begin{aligned}
&g_0(x,y)=P_0(x,y),\qquad
g'_0(x,y)=P_0(x,y)\oplus y_4,\\
&g_1(x,y)=P_1(x,y),\qquad
g'_1(x,y)=P_1(x,y)\oplus y_4.
\end{aligned}
\end{equation*}

Therefore, $\{g_0,g'_0,g_1,g'_1\}$ is a
maximum-cardinality family of four disjoint spectra
$2$-plateaued functions without nonzero linear structures in $10$ variables. Moreover,
$\deg(g_0)=\deg(g'_0)=\deg(g_1)=\deg(g'_1)=5$,
so all four members attain the optimal algebraic degree
simultaneously.

Finally, a direct calculation of the Walsh-support self-sums
gives
\[
\begin{aligned}
\bigl|
\operatorname{supp}(W_{g_0})
\oplus
\operatorname{supp}(W_{g_0})
\bigr|
&=
\bigl|
\operatorname{supp}(W_{g'_0})
\oplus
\operatorname{supp}(W_{g'_0})
\bigr|
=688,\\
\bigl|
\operatorname{supp}(W_{g_1})
\oplus
\operatorname{supp}(W_{g_1})
\bigr|
&=
\bigl|
\operatorname{supp}(W_{g'_1})
\oplus
\operatorname{supp}(W_{g'_1})
\bigr|
=880.
\end{aligned}
\]
Thus, the Walsh-support self-sum cardinality is not constant across
the family. It follows from Proposition~\ref{Pro-5} that this family
is not memberwise EA-equivalent to any family obtained from
\cite[Theorem~4.4]{disjoint2}.
\end{Example}
\section{Families of disjoint spectra plateaued functions
without nonzero linear structures based on partially linear
functions with bent components}\label{sec-partially-bent}
The explicit construction in
Section~\ref{sec-plateauedwithoutlinear} was obtained by taking
$n=m=p+q$ and choosing the constituent functions $f_{kr}$ to be
linear. We now return to the general framework of
Section~\ref{sec-0} and assume that $n>m$ and $n-m$ is even.
The partition
$\{\mathcal D_k\}_{k\in\mathbb F_2^q}$ defined
in~\eqref{Eq-Dk}, the vectors $c_{kr}$ defined
in~\eqref{ckr}, the selector maps $\sigma_k$, and the offset
family $R$ defined in~\eqref{Set-R(psi)} are retained unchanged.
The constituent functions are now taken to be 
partially linear functions with bent components.
\subsection{Construction and structural properties}

Let $p\geq 2$, $0\leq q<2^p-p-1$, and set $m=p+q$. Let
$n>m$ be such that $n-m$ is even, and write
 $x=(x',x'')\in
 \mathbb{F}_2^{\,n-m}\times\mathbb{F}_2^m$.
For every $0\leq k\leq 2^q-1$ and
$0\leq r\leq 2^p-1$, choose an arbitrary bent function
 $f_{kr}^{*}\in\mathcal{B}_{n-m}$,
and define
\[
 f_{kr}(x',x'')
 =
 f_{kr}^{*}(x')\oplus c_{kr}\cdot x'',
\]
where $c_{kr}\in \mathcal{D}_k$, with $\mathcal{D}_k$  defined in \eqref{Eq-Dk}.
Accordingly, let
\begin{equation}
 \mathcal{F}_1
 =
 \left\{
 f_{kr}\,|\,
 f_{kr}(x',x'')
 =
 f_{kr}^{*}(x')\oplus c_{kr}\cdot x'',
 0\leq k\leq 2^q-1,
 0\leq r\leq 2^p-1
 \right\}.
 \label{Ffckr}
\end{equation}
The sets $\mathcal{D}_k$ form a partition of
$\mathbb{F}_2^m$. Hence all the vectors $c_{kr}$, with
$0\leq k\leq 2^q-1$ and $0\leq r\leq 2^p-1$, are distinct.
Therefore, $\mathcal{F}_1$ is a family of  disjoint spectra $m$-plateaued functions.

\begin{Theorem}
\label{thm:bent-extension}
 The family $\Omega_{R}^{\mathcal{F}_1}$ defined in
\eqref{Set-Omega} is a maximum-cardinality family of disjoint spectra
$(q+1)$-plateaued $\mathrm{GMM}$ functions without nonzero linear
structures in $n+p+1$ variables, where $\mathcal{F}_1$ is given by \eqref{Ffckr} and
$R$ is given by \eqref{Set-R(psi)}.
\end{Theorem}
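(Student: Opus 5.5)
The plan follows the proof of Theorem~\ref{The-Set-withoutlinear}, with extra work to handle the bent parts. Each $f_{kr}$ is $m$-plateaued on $\mathbb F_2^n$ with $n+m$ even, and the supports are disjoint because the $c_{kr}$ are distinct (Lemma of \cite{2009disjoint}). Proposition~\ref{The-disjoint} therefore shows that $\Omega_R^{\mathcal F_1}$ is a maximum-cardinality family of $2^{q+1}$ disjoint spectra $(q+1)$-plateaued functions in $n+p+1$ variables.

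For the $\mathrm{GMM}$ representation, I would write
\[
g_k(x',x'',y)=x''\cdot\pi_k(y)\oplus\bigl(f^*_{k,\sigma_k(y)}(x')\oplus a_k(y)\bigr),
\qquad \pi_k(y)=c_{k,\sigma_k(y)}.
\]
The function in brackets depends on $(x',y)$, so I would regroup the variables and treat $x''$ as the linear block and $(x',y)$ as the second block. Then $g_k$ is a $\mathrm{GMM}$ function with respect to this ordering, and similarly for $g'_k$.

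To exclude linear structures, fix $k$ and a nonzero direction $(\alpha',\alpha'',\beta)\in\mathbb F_2^{n-m}\times\mathbb F_2^m\times\mathbb F_2^{p+1}$. The derivative is
\[
D g_k=f^*_{k,\sigma_k(y)}(x')\oplus f^*_{k,\sigma_k(y\oplus\beta)}(x'\oplus\alpha')\oplus\bigl(c_{k,\sigma_k(y)}\oplus c_{k,\sigma_k(y\oplus\beta)}\bigr)\cdot x''\oplus c_{k,\sigma_k(y\oplus\beta)}\cdot\alpha''\oplus a_k(y)\oplus a_k(y\oplus\beta).
\]
The argument splits on $\beta$.

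\textbf{Case 1: $\beta\notin\{\boldsymbol 0_{p+1},\gamma_k\}$.} As in Case~1 of Theorem~\ref{The-Set-withoutlinear}, $\sigma_k(y)\ne\sigma_k(y\oplus\beta)$. Hence the coefficient of $x''$ is nonzero for every $y$, and the derivative is nonconstant in $x''$. This case needs nothing new.

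\textbf{Cases 2 and 3: $\beta\in\{\boldsymbol 0_{p+1},\gamma_k\}$.} Here $\sigma_k(y\oplus\beta)=\sigma_k(y)=r$, so the $x''$-term vanishes. The derivative becomes
\[
D_{\alpha'}f^*_{kr}(x')\oplus\alpha''\cdot c_{kr}\oplus\epsilon_r,
\]
where $\epsilon_r=0$ if $\beta=\boldsymbol 0$, and $\epsilon_r=\psi_k(v_r)$ (for $g_k$) or $\psi_k(v_r)\oplus1$ (for $g'_k$) if $\beta=\gamma_k$. I would split further on $\alpha'$.

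If $\alpha'\neq\boldsymbol0$, fix any $y$ (hence any $r$). Since $f^*_{kr}$ is bent, $D_{\alpha'}f^*_{kr}$ is balanced and in particular nonconstant in $x'$, so the derivative is nonconstant.

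If $\alpha'=\boldsymbol0$, the derivative depends only on $r=\sigma_k(y)$, and its value sequence is $C_k(\alpha'')\oplus(\epsilon_r)_r$. If $\beta=\boldsymbol0$, then $\alpha''\neq\boldsymbol0$, and Proposition~\ref{Pro-nonconstant}(1) shows the sequence is nonconstant. If $\beta=\gamma_k$, Proposition~\ref{Pro-nonconstant}(2), or its bitwise complement for $g'_k$, gives nonconstancy for every $\alpha''$. Surjectivity of $\sigma_k$ then makes the derivative itself nonconstant.

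The only genuinely new step is the subcase $\alpha'\neq\boldsymbol0$, where a $y$-dependent constant could in principle offset the $x'$-part. This is resolved by fixing $y$ and using that derivatives of bent functions are balanced. I expect the main obstacle to be the $\mathrm{GMM}$ claim rather than the linear-structure claim: the phrasing must make clear that the regrouping places $x'$ into the ``$y$-block'' of the $\mathrm{GMM}$ definition.
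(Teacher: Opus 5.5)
Your proposal is correct and follows essentially the paper's proof. You use Proposition~\ref{The-disjoint} for the spectral part, the same regrouping with $x''$ as the linear block for the $\mathrm{GMM}$ form, and the same three ingredients against linear structures: the nonzero $x''$-coefficient when $\beta\notin\{\boldsymbol 0_{p+1},\gamma_k\}$, balancedness of $D_{\alpha'}f^*_{kr}$ when $\alpha'\neq\boldsymbol 0_{n-m}$, and Proposition~\ref{Pro-nonconstant} when $\alpha'=\boldsymbol 0_{n-m}$. The only difference is the order of the case split: the paper first disposes of all directions with $\alpha'=\boldsymbol 0_{n-m}$ by appeal to Theorem~\ref{The-Set-withoutlinear}, whereas you split on $\beta$ first.
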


\begin{proof}
Since $\mathcal{F}_1$ is a family of disjoint spectra $m$-plateaued functions,
Proposition~\ref{The-disjoint}  shows that
$\Omega_R^{\mathcal{F}_1}$ is a maximum-cardinality family of
$2^{q+1}$ disjoint spectra $(q+1)$-plateaued functions  in $n+p+1$ variables.

Every member of $\Omega_R^{\mathcal{F}_1}$ also admits a $\mathrm{GMM}$
representation. Define
 $\pi_k(x',y)=c_{k,\sigma_k(y)}\in\mathbb{F}_2^m$,
together with
\[
h_k(x',y)
 =
 f_{k,\sigma_k(y)}^{*}(x')\oplus a_k(y),
 \qquad
h_k'(x',y)
 =
f_{k,\sigma_k(y)}^{*}(x')\oplus b_k(y).
\]
Then
\begin{equation*}
    \begin{aligned}
         &g_k(x',x'',y)
 =
 x''\cdot\pi_k(x',y)\oplus h_k(x',y),\\
 &g_k'(x',x'',y)
 =
 x''\cdot\pi_k(x',y)\oplus h_k'(x',y).
    \end{aligned}
\end{equation*}
Since the vectors $c_{kr}$ are distinct and
$\sigma_k^{-1}(r)=\{s_{kr},s'_{kr}\}$, we have
$\pi_k^{-1}(c_{kr})
 =\mathbb F_2^{\,n-m}\times\{s_{kr},s'_{kr}\}$.
Hence, $\pi_k$ is exactly $2^{n-m+1}$-to-one onto
$\mathcal D_k$.

Consequently, both $g_k$ and $g_k'$ admit $\mathrm{GMM}$ representations for $0\leq k\leq 2^q-1$.

We next exclude nonzero linear structures. Fix
$0\leq k\leq 2^q-1$ and consider a direction
 $(\alpha',\alpha'',\beta)
 \in
 \mathbb{F}_2^{\,n-m}\times
 \mathbb{F}_2^m\times
 \mathbb{F}_2^{p+1}$.
 \begin{equation*}
 \begin{aligned}
        D_{(\alpha',\alpha'',\beta)}g_k=&f_{k,\sigma_k(y)}(x',x'')\oplus f_{k,\sigma_k(y\oplus\beta)}(x'\oplus\alpha',x''\oplus\alpha'')\oplus a_{k}(y)\oplus a_{k}(y\oplus\beta)\\=&f_{k,\sigma_k(y)}^{*}(x')
\oplus
f_{k,\sigma_k(y\oplus\beta)}^{*}(x'\oplus\alpha')\oplus
\bigl(
c_{k,\sigma_k(y)}
\oplus
c_{k,\sigma_k(y\oplus\beta)}
\bigr)\cdot x''\oplus\\&
c_{k,\sigma_k(y\oplus\beta)}\cdot\alpha''
\oplus a_k(y)
\oplus a_k(y\oplus\beta).
 \end{aligned}
 \end{equation*}

As in the proof of Theorem~1, the derivatives in all nonzero
directions of the form $(\boldsymbol{0}_{n-m},\alpha'',\beta)$ are nonconstant.
Indeed, when $\beta\in\{\boldsymbol{0}_{p+1},\gamma_k\}$, the two terms involving the same bent component cancel, and the corresponding argument in
Theorem~\ref{The-Set-withoutlinear} applies directly. When
$\beta\notin\{\boldsymbol{0}_{p+1},\gamma_k\}$, the derivative contains the
nonzero linear term
 $\bigl(c_{k,\sigma_k(y)}
 \oplus c_{k,\sigma_k(y\oplus\beta)}\bigr)\cdot x''$,
and is therefore nonconstant. Hence, it remains only to consider
directions satisfying $\alpha'\neq \boldsymbol{0}_{n-m}$.

First, suppose that
$\beta\notin\{\boldsymbol{0}_{p+1},\gamma_k\}$. As in Case~1 of the proof of
Theorem~\ref{The-Set-withoutlinear}, we have
 $\sigma_k(y)\neq\sigma_k(y\oplus\beta)$
for every $y\in\mathbb{F}_2^{p+1}$. Since the vectors
$c_{kr}$ are distinct, the coefficient
 $c_{k,\sigma_k(y)}
 \oplus c_{k,\sigma_k(y\oplus\beta)}$
of $x''$ in the derivative is nonzero. Thus, the derivatives of
both $g_k$ and $g_k'$ are nonconstant.

It remains to consider
$\beta\in\{\boldsymbol{0}_{p+1},\gamma_k\}$. In this case,
 $\sigma_k(y\oplus\beta)=\sigma_k(y)$.
Then
\begin{align*}
 D_{(\alpha',\alpha'',\beta)}g_k(x',x'',y)
 &=
 D_{\alpha'}f_{k,\sigma_k(y)}^{*}(x')
 \oplus c_{k,\sigma_k(y)}\cdot\alpha''
 \oplus a_k(y)\oplus a_k(y\oplus\beta),\\
 D_{(\alpha',\alpha'',\beta)}g_k'(x',x'',y)
 &=
 D_{\alpha'}f_{k,\sigma_k(y)}^{*}(x')
 \oplus c_{k,\sigma_k(y)}\cdot\alpha''
 \oplus b_k(y)\oplus b_k(y\oplus\beta).
\end{align*}
Since $\alpha'\neq \boldsymbol{0}_{n-m}$ and $f_{kr}^{*}$ is bent,
$D_{\alpha'}f_{kr}^{*}$ is balanced, and hence nonconstant.
All the remaining terms in the above expressions are independent
of $x'$. Therefore, neither derivative can be constant.

Combining the preceding cases, neither $g_k$ nor $g_k'$ admits
a nonzero linear structure. Since $k$ was arbitrary, no member
of $\Omega_R^{\mathcal{F}_1}$ admits a nonzero linear structure.
\end{proof}

The family obtained in Theorem~\ref{thm:bent-extension} exhibits the same structural 
distinction from the spectral construction in \cite[Theorem~4.4]{disjoint2}.
We first determine the Walsh supports of its members.

Let $\omega'\in\mathbb{F}_2^{\,n-m}$ and
$\omega''\in\mathbb{F}_2^m$. Since $f_{kr}^{*}$ is bent,
$W_{f_{kr}^{*}}(\omega')
=
2^{(n-m)/2}
(-1)^{\widetilde{f_{kr}^{*}}(\omega')}$, where $\widetilde{f_{kr}^{*}}$ is the dual function of $f_{kr}^{*}$ for $0\leq k\leq 2^q-1, 0\leq r\leq 2^p-1$.  It follows that
\begin{align*}
W_{f_{kr}}(\omega',\omega'')
=
W_{f_{kr}^{*}}(\omega')
\sum_{x''\in\mathbb{F}_2^m}
(-1)^{(c_{kr}\oplus\omega'')\cdot x''}
=
\begin{cases}
2^{(n+m)/2}
(-1)^{\widetilde{f_{kr}^{*}}(\omega')},
&\omega''=c_{kr},\\
0,&\omega''\neq c_{kr}.
\end{cases}
\end{align*}
Consequently,
$\operatorname{supp}(W_{f_{kr}})
=
\mathbb{F}_2^{\,n-m}\times\{c_{kr}\}$.
By the Walsh-support description in~\eqref{Suppgg'} and the choice of the
offset parameters in \eqref{Set-R(psi)}, we obtain
\begin{align*}
\operatorname{supp}(W_{g_k})
&=
\bigcup_{r=0}^{2^p-1}
\left\{
(\omega',c_{kr},\omega''')\
\middle|\
\omega'\in\mathbb{F}_2^{\,n-m},\
\gamma_k\cdot\omega'''=\psi_k(v_r)
\right\},\\
\operatorname{supp}(W_{g_k'})
&=
\bigcup_{r=0}^{2^p-1}
\left\{
(\omega',c_{kr},\omega''')\
\middle|\
\omega'\in\mathbb{F}_2^{\,n-m},
\gamma_k\cdot\omega'''=\psi_k(v_r)\oplus1
\right\},
\end{align*}
where $\omega'''\in\F_2^{p+1}$.
The bent components affect only the signs of the nonzero Walsh
coefficients; at the level of Walsh supports, they merely add the
unrestricted factor $\mathbb{F}_2^{\,n-m}$. Thus, on the remaining
$m+p+1$ spectral coordinates, the support structure is exactly
the one treated in Proposition~4. Therefore, if $\gamma_k\neq\gamma_{k'}$ for some $k\neq k'$,
Proposition~\ref{Prop-direct-separation} implies that no Walsh
support associated with $\{g_k,g_k'\}$ is a translation of one
associated with $\{g_{k'},g_{k'}'\}$. In contrast, every family
obtained from \cite[Theorem~4.4]{disjoint2} satisfies the
common-translation condition, which is preserved under a common
EA transformation. Consequently, $\Omega_R^{\mathcal{F}_1}$ is
not commonly EA-equivalent to any family obtained from that
construction.

The construction based on partially linear functions with bent
components therefore also yields maximum-cardinality families
of disjoint spectra plateaued functions without nonzero
linear structures. We next determine
the attainable algebraic degrees of these families. Unlike the
construction in Theorem \ref{The-Set-withoutlinear}, their maximum attainable
degree is generally next-to-optimal rather than optimal.

\subsection{Algebraic-degree control}
We first determine the attainable algebraic degrees of the
concatenation of the bent components. Recall that \(\{v_r\}_{r=0}^{2^p-1}=\mathbb F_2^p\), and that \(\rho(y')\) denotes the unique index
satisfying \(v_{\rho(y')}=y'\). For every \(k\), define
\begin{equation}\label{PHI}
    \Phi_k(x',y')=f_{k,\rho(y')}^{*}(x').
\end{equation}
Writing \(v_r=(v_{r0},\ldots,v_{r,p-1})\), define the indicator
of \(v_r\) by
$\delta_r(y')
 =\prod_{i=0}^{p-1}
  (y_i\oplus v_{ri}\oplus1)$.
Thus, \(\delta_r(y')=1\) if and only if \(y'=v_r\). Consequently,
the concatenated function admits the indicator representation
\[
\Phi_k(x',y')
 =\bigoplus_{r=0}^{2^p-1}
 \delta_r(y')f_{kr}^{*}(x').
\]
Indeed, for every fixed \(y'\), the only nonzero term in the above
sum is the one indexed by \(r=\rho(y')\).

\begin{Lemma}\label{Phhhi}

Let \(n-m\) be positive and even. The maximum possible algebraic
degree of \(\Phi_k\) in \eqref{PHI} is \(p+(n-m)/2\). Moreover, every integer
degree \(2\leq d\leq p+(n-m)/2\) can be attained simultaneously
for all \(k\) by a suitable choice of the bent components
\(f_{kr}^{*}\in\mathcal{B}_{n-m}\).
\end{Lemma}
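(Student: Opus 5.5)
Write $N=n-m$, so $N$ is even and $N\geq 2$. The plan splits into an upper bound and an explicit construction for each degree.

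\textbf{Upper bound.} Use the indicator representation $\Phi_k=\bigoplus_r\delta_r(y')f^{*}_{kr}(x')$. Expanding $\delta_r(y')$ gives a sum of monomials in $y'$ of degree at most $p$. Each $f^{*}_{kr}$ is bent in $N$ variables, so $\deg(f^{*}_{kr})\le N/2$; for $N\ge 4$ this is the bound recalled in the preliminaries, and for $N=2$ every bent function has degree $2=N/2$. Each product term therefore has degree at most $p+N/2$, hence $\deg\Phi_k\le p+N/2$.

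An alternative derivation avoids expanding the indicators. Write the ANF of $\Phi_k$ as $\bigoplus_{u'\in\mathbb F_2^p}y'^{u'}F_{u'}(x')$, where by the Möbius formula \eqref{Mob} applied in $y'$,
\[
F_{u'}(x')=\bigoplus_{v\preceq u'}f^{*}_{k,\rho(v)}(x').
\]
Each $F_{u'}$ is a sum of bent functions and so has degree at most $N/2$. This gives the same bound $\deg\Phi_k\le\max_{u'}\bigl(\operatorname{wt}(u')+\deg F_{u'}\bigr)\le p+N/2$.

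\textbf{Attainability.} The same Möbius viewpoint drives the construction. Fix bent functions $\beta_j\in\mathcal B_N$ of every degree $2\le j\le N/2$; these exist by the facts recalled earlier, with $\beta_2=x_0x_1\oplus\cdots$ when $N=2$. I split the range of $d$ into two cases.

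\emph{Case $d\ge N/2+1$.} Write $d=e+N/2$ with $1\le e\le p$. Fix $v^{(e)}\in\mathbb F_2^p$ of weight $e$. Take $f^{*}_{kr}=\beta_{N/2}\oplus y$-independent affine modifications? The cleaner choice is as follows. Set $f^{*}_{kr}=\beta_2$ for all $r$ except $r=\rho(v^{(e)})$, and set $f^{*}_{k,\rho(v^{(e)})}=\beta_2\oplus\beta'$, where $\beta'$ is chosen so that $\beta_2\oplus\beta'=:\beta^{\sharp}$ is bent of degree $N/2$.
\begin{itemize}
\item[] This is easy to realize: take $\beta^{\sharp}$ bent of degree $N/2$, say a Maiorana--McFarland function $x_a\cdot\pi(x_b)$ with $\deg\pi=N/2-1$ and $\pi$ a permutation, and let $\beta_2=x_a\cdot x_b$, which is also bent.
\end{itemize}
Then $\Phi_k=\beta_2(x')\oplus\delta_{\rho(v^{(e)})}(y')\bigl(\beta^{\sharp}\oplus\beta_2\bigr)(x')$. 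The top-degree part of $\delta_{\rho(v^{(e)})}$ is $\prod_{i}y_i$, so this seems to force degree $p+\deg(\beta^{\sharp}\oplus\beta_2)$ rather than $e+N/2$. To avoid forcing weight $p$, I instead use the point-indicator on a subcube: assign $f^{*}_{kr}=\beta^{\sharp}$ exactly when $v_r$ satisfies $v_{r,i}=1$ for all $i<e$, and $f^{*}_{kr}=\beta_2$ otherwise. Then
\[
\Phi_k=\beta_2\oplus y_0\cdots y_{e-1}\,(\beta^{\sharp}\oplus\beta_2).
\]
Here $\deg(\beta^{\sharp}\oplus\beta_2)=N/2$ whenever $N/2\ge3$, and the product of disjoint variable sets gives degree exactly $e+N/2$. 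In the small cases $N/2\le 2$, replace $\beta_2$ by $\beta_2\oplus x_0$, which shifts it by an affine term and keeps it bent, so that the difference is a nonzero affine function or has degree $N/2$ as needed.

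\emph{Case $2\le d\le N/2$.} Take $f^{*}_{kr}=\beta_d$ for all $r$, so that $\Phi_k=\beta_d$ has degree $d$.

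The same choice is used for every $k$, which gives simultaneity. I expect the main obstacle to be the careful handling of the small-$N$ cases, where $N/2\in\{1,2\}$ and the difference of two bent functions may drop in degree. For these I would check explicitly that some pair of bent functions in $\mathcal B_N$ has a difference of degree exactly $N/2$, or exactly $1$ in the $N=2$ case, which gives $d=e+1$.
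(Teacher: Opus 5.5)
\textbf{Gap in the upper bound for $N=n-m=2$.} The statement explicitly allows $N=2$, and there your argument fails. You assert that every bent function in two variables has degree $2=N/2$, but $N/2=1$ when $N=2$. So the term-by-term estimate $\deg(\delta_r f^{*}_{kr})\le p+N/2$ is false: each such product has degree exactly $p+2$. Likewise, in your M\"obius variant the claim $\deg F_{u'}\le N/2$ is false; already $F_{\mathbf{0}}=f^{*}_{k,\rho(\mathbf{0})}$ is quadratic. As written, you only obtain $\deg\Phi_k\le p+2$. That coincides with the general optimal bound for the plateaued functions built from $\Phi_k$, so neither the maximality claim nor next-to-optimality is established when $n-m=2$.

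The missing idea is a cancellation. Every bent function in $\mathcal B_2$ has the form $x_0x_1\oplus\ell$ with $\ell$ affine, so in the indicator representation the coefficient of $x_0x_1$ is $\bigoplus_r\delta_r(y')=1$. This is exactly how the paper treats $n-m=2$. In your M\"obius form the fix reads as follows. For $u'\neq\mathbf{0}$, $F_{u'}$ is a sum of $2^{\operatorname{wt}(u')}$ bent functions, an even number, so the $x_0x_1$ terms cancel and $\deg F_{u'}\le 1$. For $u'=\mathbf{0}$ the contribution has degree only $2\le p+1$.

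\textbf{Attainability.} This part is correct in substance and takes a genuinely different route from the paper.

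The paper splits the range at $p+1$:
\begin{itemize}
\item for $d\le p+1$ it takes components $\theta_k\oplus Q_k(v_r)\lambda_k$, so $\Phi_k=\theta_k\oplus Q_k(y')\lambda_k(x')$ with $\deg Q_k=d-1$;
\item for $d\ge p+2$ it uses the point indicator $\delta_{r'}$ of degree $p$ times a degree-$(d-p)$ monomial occurring in only one component.
\end{itemize}
You split at $N/2$ instead:
\begin{itemize}
\item for $d\le N/2$ you take a $y'$-independent bent function of degree $d$;
\item for $d=e+N/2$ you take the subcube form $\Phi_k=\beta_2\oplus y_0\cdots y_{e-1}(\beta^{\sharp}\oplus\beta_2)$.
\end{itemize}
Your version gives a closed form in which the degree count is immediate.

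One repair is needed for $N=4$. There an affine shift $\beta_2\oplus x_0$ cannot produce a difference of degree $N/2=2$. If $\beta^{\sharp}$ and $\beta_2$ have the same quadratic part, the difference becomes affine and the degree drops to $e+1$. You need two quadratic bent functions with different quadratic parts, for example $x_0x_1\oplus x_2x_3$ and $x_0x_2\oplus x_1x_3$, which is the pair the paper uses. For $N=2$, the pair $x_0x_1$ and $x_0x_1\oplus x_0$ works as you indicate, giving $\Phi_k=x_0x_1\oplus x_0\oplus y_0\cdots y_{e-1}x_0$ of degree $e+1$.
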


\begin{proof}
We first establish the upper bound. Suppose that \(n-m\geq4\).
Every \((n-m)\)-variable bent function has degree at most
\((n-m)/2\). Since \(\deg(\delta_r)=p\), the indicator
representation of \(\Phi_k\) gives
\(\deg(\Phi_k)\leq p+(n-m)/2\).

When \(n-m=2\), every bent component has the form
\(x_0x_1\oplus r_{kr}x_0\oplus
r'_{kr}x_1\oplus r''_{kr}\), where $r_{kr}, r'_{kr}, r''_{kr}\in\F_2$. 
Using the indicator representation, we obtain
\begin{equation*}
    \begin{aligned}
        \Phi_k(x_0,x_1,y')
&=\bigoplus_{r=0}^{2^p-1}
  \delta_r(y')f_{kr}^{*}(x_0,x_1)\\
&=\left(\bigoplus_{r=0}^{2^p-1}\delta_r(y')\right)x_0x_1\oplus
 \left(\bigoplus_{r=0}^{2^p-1}
 r_{kr}\delta_r(y')\right)x_0\oplus
 \left(\bigoplus_{r=0}^{2^p-1}
 r_{kr}'\delta_r(y')\right)x_1
 \oplus
 \bigoplus_{r=0}^{2^p-1}
 r_{kr}''\delta_r(y')
 \\&=x_0x_1\oplus
 \left(\bigoplus_{r=0}^{2^p-1}
 r_{kr}\delta_r(y')\right)x_0\oplus
 \left(\bigoplus_{r=0}^{2^p-1}
 r_{kr}'\delta_r(y')\right)x_1
 \oplus
 \bigoplus_{r=0}^{2^p-1}
 r_{kr}''\delta_r(y'),
    \end{aligned}
\end{equation*}
where the coefficient functions have degree at most \(p\).
Therefore,
\(\deg(\Phi_k)\leq p+1=p+(n-m)/2\).

It remains to show that every degree in the stated range is
attainable.

Suppose first that
\(p+2< d\leq p+(n-m)/2\). Choose an index \(r'\) and a bent
function \(f_{kr'}^{*}\) of degree \(d-p\). Let \(M(x')\) be
a monomial of degree \(d-p\) occurring in its ANF. Choose all the
remaining bent components to have degree less than \(d-p\), so
that none of them contains \(M(x')\).

We now consider separately the boundary case \(d-p=2\), in which bent functions of degree less than \(2\) do not exist.
Moreover, since \(d\leq p+(n-m)/2\), we necessarily have
\(n-m\geq4\). Take \(M(x')=x_0x_1\). For \(n-m=4\), choose
\(f_{kr'}^{*}(x')=x_0x_1\oplus x_2x_3\) and
\(f_{kr}^{*}(x')=x_0x_2\oplus x_1x_3\) for every \(r\neq r'\).
These functions are quadratic bent, and \(M(x')\) occurs only in
\(f_{kr'}^{*}\). When \(n-m>4\), the same term
\(\bigoplus_{j=2}^{(n-m)/2-1}x_{2j}x_{2j+1}\) can be added to
every component.

It follows from the indicator representation that the contribution
of \(M(x')\) to \(\Phi_k\) is
\(\delta_{r'}(y')M(x')\). Since
\(\deg(\delta_{r'})=p\), this term has degree
\(p+(d-p)=d\). Moreover, every bent component has degree at most
\(d-p\), so every summand
\(\delta_r(y')f_{kr}^{*}(x')\) has degree at most
\(p+(d-p)=d\). Therefore, \(\deg(\Phi_k)=d\).

Suppose next that \(2\leq d\leq p+1\). Choose any bent function
\(\theta_k(x')\) satisfying \(\deg(\theta_k)\leq d\), any
nonconstant affine function \(\lambda_k(x')\), and any Boolean
function \(Q_k\in\mathcal{B}_p\) of degree \(d-1\). For every
\(0\leq r\leq2^p-1\), define
\begin{equation}\label{1234}
    f_{kr}^{*}(x')
 =\theta_k(x')\oplus Q_k(v_r)\lambda_k(x').
\end{equation}
Since \(Q_k(v_r)\) is a constant in \(\mathbb F_2\), every
\(f_{kr}^{*}\) is either \(\theta_k\) or
\(\theta_k\oplus\lambda_k\). Adding an affine function preserves
bentness, so all the resulting components are bent.

Using \(v_{\rho(y')}=y'\), the corresponding concatenated function
is
\begin{equation}\label{456}
    \Phi_k(x',y')
 =f_{k,\rho(y')}^{*}(x')
 =\theta_k(x')
   \oplus Q_k(v_{\rho(y')})\lambda_k(x')
 =\theta_k(x')
   \oplus Q_k(y')\lambda_k(x').
\end{equation}
Since \(\deg(Q_k)=d-1\) and \(\deg(\lambda_k)=1\), the second term
has degree \(d\). Its degree-\(d\) monomials involve variables from
both \(x'\) and \(y'\), whereas \(\theta_k\) is independent of
\(y'\); hence, they cannot be cancelled. Since
\(\deg(\theta_k)\leq d\), it follows that
\(\deg(\Phi_k)=d\).

The above choices can be made independently for every \(k\), which
completes the proof.
\end{proof}

Let \(p\geq2\), \(n>m=p+q\), and let \(n-m\) be even. Let
\(3\leq d\leq p+(n-m)/2\), and suppose that
$0\leq q<
\sum_{i=2}^{\min\{d-1,p\}}\binom{p}{i}$.
In Subsection~\ref{subse-partitionDk}, choose Boolean functions
\(\phi_0,\ldots,\phi_{q-1}\in\mathcal{B}_p\) such that
\(1,y_0,\ldots,y_{p-1},\phi_0,\ldots,\phi_{q-1}\) are linearly
independent and
\(\deg(\phi_i)\leq\min\{d-1,p\}\) for every \(0\leq i<q\).
Let \(U\) be the subspace defined in~\eqref{spanU}. For every
\(0\leq k<2^q\), choose
\(\psi_k\in\mathcal{B}_p\setminus U\) satisfying
\(\deg(\psi_k)\leq\min\{d-1,p\}\). Choose the offset parameters
$R$ defined in \eqref{Ry}, where $\deg(A_k)
\le d,
\deg(B_k)\le d$ and
$a_{kr}\oplus a'_{kr}
=\psi_k(v_r),
b_{kr}\oplus b'_{kr}
=\psi_k(v_r)\oplus 1$ for $0\leq k\leq 2^q-1$, $0\leq r\leq 2^p-1$.

As in Subsection~\ref{Subsec-optimal-degree-degenerate}, these choices are always feasible under the
preceding bound on \(q\). For example, one may take
$\psi_k(y')
 =\prod_{i=0}^{\min\{d-1,p\}-1}y_i$
for every \(k\), and choose
\(\phi_0,\ldots,\phi_{q-1}\) as any \(q\) distinct ANF monomials,
different from \(\psi_k\), whose degrees lie between \(2\) and
\(\min\{d-1,p\}\). Then
\(1,y_0,\ldots,y_{p-1},\phi_0,\ldots,\phi_{q-1},\psi_k\) are
distinct ANF monomials and hence linearly independent. In
particular, \(\psi_k\notin U\). Then  for every \(k\) and \(r\), take
\(a_{kr}=b_{kr}=0\),
\(a'_{kr}=\psi_k(v_r)\) and
\(b'_{kr}=\psi_k(v_r)\oplus1\).

Finally, for every \(0\leq k<2^q\), choose the bent components
\(f_{kr}^{*}\in\mathcal{B}_{n-m}\), \(0\leq r<2^p\), as follows.
If \(3\leq d\leq p+1\), choose \(f_{kr}^{*}\) such that the
concatenated function
\(\Phi_k(x',y')=f_{k,\rho(y')}^{*}(x')\) has degree \(d\) and its
degree-\(d\) homogeneous part contains a monomial involving a
variable from \(x'\). Such a choice is possible by taking
\(f_{kr}^{*}\) as in~\eqref{1234}, whose concatenation \(\Phi_k\) is given
by~\eqref{456}. If \(p+2\leq d\leq p+(n-m)/2\), choose
\(f_{kr}^{*}\) according to Lemma~\ref{Phhhi} so that
\(\deg(\Phi_k)=d\).

\begin{Theorem}
\label{thm:bent-prescribed-degree}
Under the preceding choices, $\Omega_R^{\mathcal{F}_1}$ is a
maximum-cardinality family of $2^{q+1}$ disjoint spectra
$(q+1)$-plateaued $\mathrm{GMM}$ functions without nonzero
linear structures in $n+p+1$ variables. Every member has
algebraic degree exactly $d$. In particular, the maximum algebraic degree attainable by the
construction based on partially linear functions with bent
components is
\(p+(n-m)/2\).
\end{Theorem}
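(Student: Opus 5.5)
The plan is to split the statement into a structural part and a degree part. The structural part is already available. The hypotheses preceding the theorem still give linearly independent $1,y_0,\ldots,y_{p-1},\phi_0,\ldots,\phi_{q-1}$, a $\psi_k\notin U$, bent components $f_{kr}^{*}$, and offsets satisfying the relations of \eqref{Set-R(psi)}. Hence Theorem~\ref{thm:bent-extension} applies verbatim, and $\Omega_R^{\mathcal F_1}$ is a maximum-cardinality family of $2^{q+1}$ disjoint spectra $(q+1)$-plateaued $\mathrm{GMM}$ functions without nonzero linear structures. What remains is to show that every member has degree exactly $d$, and that no choice in the construction exceeds $p+(n-m)/2$.

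\textbf{Coordinate form.} I would first derive the analogue of Proposition~\ref{Prop-GMM}, using the same invertible linear map $\mathcal T_k(y',y_p)=\mathcal L_k(y')\oplus\gamma_k y_p$ on the $y$-coordinates, with $x'$ and $x''$ left untouched. Since $\sigma_k(\mathcal T_k(y',y_p))=\rho(y')$, the computation in Proposition~\ref{Prop-GMM} gives
\[
\widetilde g_k=\Phi_k(x',y')\oplus(y',h(y')\oplus k)\cdot x''\oplus A_k(y')\oplus\psi_k(y')y_p .
\]
The form $\widetilde g'_k$ is the same, with $A_k$ replaced by $B_k$ and an extra $y_p$ added. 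Here $\Phi_k$ is as in \eqref{PHI}, and $\deg g_k=\deg\widetilde g_k$ because $\mathcal T_k$ is linear and invertible.

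\textbf{Degree bounds and the surviving monomial.} Write $d^{*}=\min\{d-1,p\}$. Every term other than $\Phi_k$ then has degree at most $d$:
\begin{itemize}
\item the inner-product term has degree at most $d^{*}+1\le d$, since $\deg\phi_i\le d^{*}$;
\item $\psi_k y_p$ has degree at most $d^{*}+1\le d$;
\item $A_k$ and $B_k$ have degree at most $\min\{d,p\}$.
\end{itemize}
By the choice of the bent components, $\deg\Phi_k=d$, so $\deg\widetilde g_k\le d$. The main step is to show that some degree-$d$ monomial of $\Phi_k$ survives, and I would argue this in two cases.

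If $3\le d\le p+1$, the bent components were chosen so that the degree-$d$ part of $\Phi_k$ contains a monomial involving a variable of $x'$. That monomial contains no $x''$ and no $y_p$ variable. The inner-product monomials all contain an $x''$ variable, $\psi_k y_p$ contains $y_p$, and $A_k,B_k$ involve only $y'$. So none of these terms can cancel it.

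If $p+2\le d\le p+(n-m)/2$, then $d^{*}=p$. All other terms have degree at most $p+1<d$, so the degree-$d$ part of $\Phi_k$, which is nonzero by Lemma~\ref{Phhhi}, survives unchanged.

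In both cases $\deg g_k=\deg g'_k=d$. The extra $y_p$ in $\widetilde g'_k$ is linear and does not affect this.

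\textbf{Maximum degree.} Consider any member of the construction, whatever the admissible choices. By Lemma~\ref{Phhhi}, $\deg\Phi_k\le p+(n-m)/2$. The inner-product term and $\psi_k y_p$ have degree at most $p+1$, and $A_k,B_k$ have degree at most $p$. Since $n-m\ge2$, $p+1\le p+(n-m)/2$. Hence every member has degree at most $p+(n-m)/2$, and taking $d=p+(n-m)/2$ above shows this bound is attained.

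I expect the only delicate point to be the non-cancellation argument in the low-degree case. This is exactly why the construction requires a degree-$d$ monomial of $\Phi_k$ that involves $x'$, as realized by \eqref{456}.
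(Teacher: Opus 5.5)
Your proposal is correct and takes essentially the same approach as the paper. The structural part comes from Theorem~\ref{thm:bent-extension}, the coordinate form from $\mathcal T_k$, and the maximum $p+(n-m)/2$ from the upper bound in Lemma~\ref{Phhhi}. The degree is settled by the same two cases: for $d\le p+1$, an $x'$-containing degree-$d$ monomial of $\Phi_k$ that no other term can cancel; for $d\ge p+2$, all other terms have degree at most $p+1<d$.
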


\begin{proof}
Theorem~\ref{thm:bent-extension} shows that
$\Omega_R^{\mathcal{F}_1}$ is a maximum-cardinality family
of disjoint spectra $(q+1)$-plateaued $\mathrm{GMM}$
functions without nonzero linear structures. It remains to
determine their algebraic degrees.

For every $k$, apply the invertible linear transformation
$\mathcal{T}_k$ defined in \eqref{Eq-isoT}. Since
$v_{\rho(y')}=y'$ and
$\sigma_k(\mathcal{T}_k(y',y_p))=\rho(y')$, we have
$c_{k,\rho(y')}=(y',h(y')\oplus k)$. Moreover,
$a_k(\mathcal{T}_k(y',y_p))
=A_k(y')\oplus\psi_k(y')y_p$ and
$b_k(\mathcal{T}_k(y',y_p))
=B_k(y')\oplus\psi_k(y')y_p\oplus y_p$. By the definition of $\Phi_k$ in \eqref{PHI}, it follows that
\begin{align*}
\widetilde{g}_k(x',x'',y',y_p)
&=
\Phi_k(x',y')
\oplus (y',h(y')\oplus k)\cdot x''
\oplus A_k(y')
\oplus\psi_k(y')y_p,\\
\widetilde{g}_k'(x',x'',y',y_p)
&=
\Phi_k(x',y')
\oplus (y',h(y')\oplus k)\cdot x''
\oplus B_k(y')
\oplus\psi_k(y')y_p
\oplus y_p.
\end{align*}

Suppose first that \(3\leq d\leq p+1\). By the preceding choices,
\(\deg(\phi_i)\leq d-1\), so the inner-product term
\((y',h(y')\oplus k)\cdot x''\) has degree at most \(d\).
Moreover, \(\deg(A_k),\deg(B_k)\leq d\), and
\(\deg(\psi_k)\leq d-1\). Hence, all the terms in the coordinate
forms other than \(\Phi_k\) have degree at most \(d\). Moreover,
these terms are independent of \(x'\), whereas the degree-\(d\)
homogeneous part of \(\Phi_k\) contains a monomial involving a
variable from \(x'\). Therefore, this monomial cannot be cancelled,
and every coordinate function has algebraic degree exactly \(d\).

Suppose next that
\(p+1<d\leq p+(n-m)/2\). In this range, the preceding degree
conditions on \(\phi_i,\psi_k,A_k\), and \(B_k\) are automatic,
since they are all Boolean functions in the \(p\) variables
\(y'\). In particular, the inner-product term and
\(\psi_k(y')y_p\) have degree at most \(p+1<d\), while
\(A_k\), \(B_k\), and \(y_p\) have degree strictly less than \(d\).
 Lemma~\ref{Phhhi} ensures that
\(\deg(\Phi_k)=d\), while all the remaining terms in the coordinate
forms have degree strictly less than \(d\). Hence, they cannot
cancel the degree-\(d\) part of \(\Phi_k\), and therefore
\(\deg(\widetilde{g}_k)=\deg(\widetilde{g}'_k)=d\) also in this
case.

Combining the two cases, and since \(\mathcal{T}_k\) is
invertible and linear, algebraic degree is preserved. Consequently,
\(\deg(g_k)=\deg(g'_k)=d\) for every \(0\leq k\leq2^q-1\).

Finally, Lemma~\ref{Phhhi} gives
\(\deg(\Phi_k)\leq p+(n-m)/2\), while all the remaining terms in
the coordinate forms have degree at most \(p+1\). Since \(n-m\)
is positive and even, \(p+1\leq p+(n-m)/2\). Thus, the degree of
a function obtained by the present construction cannot exceed
\(p+(n-m)/2\). Taking \(d=p+(n-m)/2\) attains this bound, which is
therefore the maximum algebraic degree of the construction.  On the other hand, the optimal algebraic-degree bound for a
$(q+1)$-plateaued Boolean function in $n+p+1$ variables is
\[
\frac{(n+p+1)-(q+1)}{2}+1
=p+\frac{n-m}{2}+1.
\]
Hence, the maximum degree attained by the present construction
is next-to-optimal.
\end{proof}
  The following example gives a next-to-optimal-degree family in
$10$ variables obtained from four distinct quadratic bent
functions.

\begin{Example}
We give a concrete realization of the construction for $q=1$.
Take $p=3$, $m=p+q=4$, and $n=6$, and write
$x'=(x_0,x_1)$, $x''=(x_2,x_3,x_4,x_5)$, and
$y=(y',y_3)$, where $y'=(y_0,y_1,y_2)$.
The parameter condition is satisfied since
$1<2^3-3-1=4$, and $n-m=2$ is positive and even.
Take \(\{v_r\}_{r=0}^{7}=\mathbb F_2^3\),
with \(v_0,\ldots,v_7\) in lexicographic order.

Choose $\phi_0(y')=y_0y_1$. Then
$U=\operatorname{span}\{1,y_0,y_1,y_2,y_0y_1\}$
is a proper subspace of $\mathcal B_3$.
Choose $\psi_0(y')=y_0y_2$ and
$\psi_1(y')=y_0y_1y_2$.
Clearly, $\psi_0,\psi_1\notin U$.

For every $k\in\mathbb F_2$, choose the bent components as
\[
f_{kr}^{*}(x')=
\begin{cases}
x_0x_1, & r\in\{0,1,2\},\\
x_0x_1\oplus1, & r=3,\\
x_0x_1\oplus x_1, & r\in\{4,5,6\},\\
x_0x_1\oplus x_0\oplus x_1\oplus1, & r=7.
\end{cases}
\]
Thus, the bent components take exactly four distinct values, each of
which is a quadratic bent function obtained by adding an affine
function to $x_0x_1$. By the lexicographic enumeration, their
concatenation is
\[
\Phi_k(x',y')
=f_{k,\rho(y')}^{*}(x')
= y_0y_1y_2x_0\oplus x_0x_1
 \oplus y_0x_1\oplus y_1y_2.
\]

Since $h(y')=y_0y_1$, put
$c_{kr}=(v_r,h(v_r)\oplus k)$ for
$k\in\mathbb F_2$ and $0\leq r\leq7$, and define
$f_{kr}(x',x'')=f_{kr}^{*}(x')\oplus c_{kr}\cdot x''$.
Writing $v_r=(v_{r0},v_{r1},v_{r2})$, we have
\[
f_{kr}(x',x'')
=x_0x_1
 \oplus v_{r0}v_{r1}v_{r2}x_0
 \oplus v_{r0}x_1
 \oplus v_{r1}v_{r2}
\oplus v_{r0}x_2
 \oplus v_{r1}x_3
 \oplus v_{r2}x_4
 \oplus (v_{r0}v_{r1}\oplus k)x_5.
\]
Thus,
$\mathcal F_1=\{f_{kr}\mid k\in\mathbb F_2,\ 0\leq r\leq7\}$.

Moreover, $\mathcal{D}_k=\{c_{kr}\mid 0\leq r\leq 7\}$, and the two sets are
$\mathcal D_0=\{0,1,2,4,5,6,11,15\}, 
\mathcal D_1=\{3,7,8,9,10,12,13,14\}.$
Hence, $\mathcal{D}_0\cap\mathcal{D}_1=\varnothing$ and
$\mathcal{D}_0\cup \mathcal{D}_1=\mathbb F_2^4$.

For each $k$ satisfying $0\leq k\leq 1$, choose
$S_k=S=\mathbb F_2^3\times\{0\}$ and take the linear isomorphism
$\mathcal L_k:\mathbb F_2^3\to S$ given by
$\mathcal L_k(y')=(y',0)$.
Let $\gamma_0=\gamma_1=(0,0,0,1)$.
Then $\gamma_0,\gamma_1\notin S$.
For every $0\leq r\leq7$, set
$s_{kr}=\mathcal L_k(v_r)=(v_r,0)$.
Consequently,
$s'_{0r}=s'_{1r}=(v_{r0},v_{r1},v_{r2},1)$.

Finally, for every $k\in\mathbb F_2$ and $0\leq r\leq7$, choose
$a_{kr}=b_{kr}=0$,
$a'_{kr}=\psi_k(v_r)$, and
$b'_{kr}=\psi_k(v_r)\oplus1$.
Then $a_{kr}\oplus a'_{kr}=\psi_k(v_r)$ and
$b_{kr}\oplus b'_{kr}=\psi_k(v_r)\oplus1$, so these parameters
satisfy the defining conditions on $R$ in \eqref{Set-R(psi)}.
Moreover, the associated offset functions satisfy
$A_k=B_k=0$ for each $k\in\mathbb F_2$.

Set
\begin{equation*}
    \begin{aligned}
        P_0(x',x'',y)
=&x_0y_0y_1y_2
 \oplus x_5y_0y_1
 \oplus y_0y_2y_3
 \oplus x_0x_1
 \oplus x_1y_0
 \oplus y_1y_2
\oplus x_2y_0
 \oplus x_3y_1
 \oplus x_4y_2
\\P_1(x',x'',y)
=&x_0y_0y_1y_2
 \oplus y_0y_1y_2y_3
 \oplus x_5y_0y_1
 \oplus x_0x_1
 \oplus x_1y_0
 \oplus y_1y_2\oplus x_2y_0
 \oplus x_3y_1
 \oplus x_4y_2
 \oplus x_5.
    \end{aligned}
\end{equation*}
Substituting the above choices into the construction gives
\begin{equation*}
    \begin{aligned}
        &g_0(x',x'',y)=P_0(x',x'',y),\quad
g'_0(x',x'',y)=P_0(x',x'',y)\oplus y_3,\\
&g_1(x',x'',y)=P_1(x',x'',y),\quad
g'_1(x',x'',y)=P_1(x',x'',y)\oplus y_3.
    \end{aligned}
\end{equation*}

Therefore, $\{g_0,g'_0,g_1,g'_1\}$ is a
maximum-cardinality family of four disjoint spectra
$2$-plateaued $\mathrm{GMM}$ functions without nonzero linear structures in $10$ variables. Moreover,
$\deg(g_0)=\deg(g'_0)=\deg(g_1)=\deg(g'_1)
=4=p+(n-m)/2$.
Thus, all four members attain the next-to-optimal algebraic degree
simultaneously.

In this example, $\gamma_0=\gamma_1$, so the sufficient condition
given in the Walsh-support comparison following Theorem~3 for
excluding common EA-equivalence is not satisfied. Nevertheless, a
direct calculation of the Walsh-support self-sums gives
\[
\begin{aligned}
\bigl|
\operatorname{supp}(W_{g_0})
\oplus
\operatorname{supp}(W_{g_0})
\bigr|
&=
\bigl|
\operatorname{supp}(W_{g'_0})
\oplus
\operatorname{supp}(W_{g'_0})
\bigr|
=736,\\
\bigl|
\operatorname{supp}(W_{g_1})
\oplus
\operatorname{supp}(W_{g_1})
\bigr|
&=
\bigl|
\operatorname{supp}(W_{g'_1})
\oplus
\operatorname{supp}(W_{g'_1})
\bigr|
=672.
\end{aligned}
\]
Thus, the Walsh-support self-sum cardinality is not constant across
the family. By Proposition~\ref{Pro-5},
this family is not memberwise EA-equivalent to any
family obtained from \cite[Theorem~4.4]{disjoint2}.
\end{Example}
\section{An application to the construction of SAO resilient functions without nonzero linear structures }\label{sec-SAO}
In this section, we use coding theory and a common invertible
linear transformation to obtain a $t$-resilient disjoint spectra
plateaued family without nonzero linear structures from a
suitably chosen family constructed in Theorem~1. We then
combine selected members of this family with linear functions,
following the framework of~\cite{2009disjoint}, to construct
an SAO $t$-resilient Boolean function without nonzero linear
structures. Finally, we modify one constituent to attain
optimal algebraic degree while preserving these properties.
\subsection{Resilient subfamilies via linear transformations and coding theory}

Let $\Omega_R^{\mathcal{F}_0}$ be the family constructed in
Theorem~\ref{The-Set-withoutlinear}. We next introduce a common
invertible linear transformation of the $x$-variables and study the
resilience of the resulting functions. For
$A\in\operatorname{GL}(m,2)$, define
$f^{A}_{kr}(x)
=
f_{kr}(Ax)
=
(A^Tc_{kr})\cdot x$ and let
\begin{equation}\label{FA}
    \mathcal{F}^A_{0}
=
\{f^A_{kr}\mid 0\leq k\leq 2^q-1,\ 
0\leq r\leq 2^p-1\}.
\end{equation}
Using $\mathcal{F}_{0}^A$ in place of $\mathcal{F}_0$, while keeping
the offset parameters $R$ in \eqref{Set-R(psi)} unchanged, yields the family
$\Omega_R^{\mathcal{F}_{0}^A}$, whose members are
\begin{equation*}
   g_{k,A}(x,y)=g_k(Ax,y),
\qquad
g_{k,A}'(x,y)=g_k'(Ax,y), 
\end{equation*}
where $(x,y)\in\F_2^m\times\F_2^{p+1}$.
The corresponding coefficient sets are
\[
\mathcal{D}_k^A
=
A^T\mathcal{D}_k
=
\{A^Tc_{kr}\mid 0\leq r\leq 2^p-1\}.
\]
By Proposition~\ref{Pro-nonconstant}(1), the sets
$\mathcal{D}_0,\ldots,\mathcal{D}_{2^q-1}$ form a partition of
$\mathbb{F}_2^m$. Since $A^T$ is a bijection of
$\mathbb{F}_2^m$, their images
$\mathcal{D}_0^A,\ldots,\mathcal{D}_{2^q-1}^A$ also form a
partition of $\mathbb{F}_2^m$.
\begin{Proposition}\label{prop-resilient-subfamily}
Let $A\in\operatorname{GL}(m,2)$ and let $1\leq t<m$. Suppose that
\begin{equation}\label{B0t}
    \mathbb{B}_m(t)=\{c\in\mathbb{F}_2^m\mid \operatorname{wt}(c)\leq t\}
\subseteq\mathcal{D}^A_{0}.
\end{equation}
Then
$\Omega_R^{(\mathcal{F}_{0}^A)}
\setminus\{g_{0,A},g_{0,A}'\}$
is a family of $2^{q+1}-2$ $t$-resilient disjoint spectra
$(q+1)$-plateaued Boolean functions without nonzero linear structures in $m+p+1$ variables, where $\mathcal{F}^A_0$ is defined in \eqref{FA} and $R$ is defined in \eqref{Set-R(psi)}.
\end{Proposition}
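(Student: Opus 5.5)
The plan is to derive all structural properties from the fact that the transformed family is obtained from $\Omega_R^{\mathcal F_0}$ by a common invertible linear change of variables. Resilience will then follow from the explicit Walsh supports. Set $M=\operatorname{diag}(A,I_{p+1})\in\operatorname{GL}(m+p+1,2)$. Then $g_{k,A}(x,y)=g_k(M(x,y))$ and $g'_{k,A}(x,y)=g'_k(M(x,y))$ for every $k$.

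\textbf{Step 1: inherited properties.} Theorem~\ref{The-Set-withoutlinear} shows that $\Omega_R^{\mathcal F_0}$ is a family of $2^{q+1}$ disjoint spectra $(q+1)$-plateaued functions without nonzero linear structures. Proposition~\ref{Prop-linear-equivalence-invariants} applies with $a=b=0$ and $c=0$.
\begin{itemize}
\item Part (1) gives $|W_{g_{k,A}}(\omega)|=|W_{g_k}(M^{-T}\omega)|$, so plateauedness with the same $s=q+1$ is preserved.
\item Part (1) also gives $\operatorname{supp}(W_{g_{k,A}})=M^T\operatorname{supp}(W_{g_k})$, and the analogous formula for $g'_{k,A}$. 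Since $M^T$ is a bijection, pairwise disjointness of the supports is preserved.
\item Part (3) gives $D_{(\alpha,\beta)}g_{k,A}(x,y)=D_{M(\alpha,\beta)}g_k(M(x,y))$. Hence $(\alpha,\beta)\neq 0$ would be a linear structure of $g_{k,A}$ only if $M(\alpha,\beta)\neq 0$ were one of $g_k$, which cannot happen. The same holds for $g'_{k,A}$.
\end{itemize}
Removing the two members $g_{0,A},g'_{0,A}$ leaves $2^{q+1}-2$ functions, and these properties pass to any subfamily.

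\textbf{Step 2: the Walsh supports of the transformed functions.} From \eqref{suppgg} and $M^T=\operatorname{diag}(A^T,I_{p+1})$,
\[
\operatorname{supp}(W_{g_{k,A}})=\bigcup_{r=0}^{2^p-1}\{(A^Tc_{kr},\omega'')\mid \gamma_k\cdot\omega''=\psi_k(v_r)\},
\]
and the same holds for $g'_{k,A}$ with right-hand side $\psi_k(v_r)\oplus1$. Alternatively, this can be read off directly by redoing the computation of Proposition~\ref{The-disjoint} with $f^A_{kr}(x)=(A^Tc_{kr})\cdot x$, whose Walsh support is $\{A^Tc_{kr}\}$. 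In either description, every support point $(\omega',\omega'')$ of a member indexed by $k$ has $\omega'\in\mathcal D_k^A$.

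\textbf{Step 3: resilience.} Take $k\neq 0$ and any $(\omega',\omega'')$ with $\operatorname{wt}(\omega',\omega'')\le t$.
\begin{itemize}
\item Then $\operatorname{wt}(\omega')\le t$, so $\omega'\in\mathbb B_m(t)\subseteq\mathcal D_0^A$ by \eqref{B0t}.
\item The sets $\mathcal D^A_0,\ldots,\mathcal D^A_{2^q-1}$ partition $\mathbb F_2^m$, so $\omega'\notin\mathcal D_k^A$.
\item By Step 2, $(\omega',\omega'')$ therefore lies outside the Walsh supports of $g_{k,A}$ and $g'_{k,A}$.
\end{itemize}
Hence $W_{g_{k,A}}$ and $W_{g'_{k,A}}$ vanish on all vectors of weight at most $t$, which is $t$-resilience. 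The condition $t<m$ keeps the definition meaningful. The only delicate points are bookkeeping: the transpose in $A^Tc_{kr}$, and observing that the weight of the full spectral vector dominates the weight of its $x$-part. I expect no serious obstacle.
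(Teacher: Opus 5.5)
Your proposal is correct and follows essentially the same route as the paper. The paper likewise transports the Walsh-support description \eqref{suppgg} under the common map $(x,y)\mapsto(Ax,y)$, and combines the partition $\mathcal D_0^A,\ldots,\mathcal D_{2^q-1}^A$ with $\mathbb B_m(t)\subseteq\mathcal D_0^A$ to obtain $\operatorname{wt}(A^Tc_{kr})\geq t+1$ for $k\neq 0$ (your Step~3 is the contrapositive of this). It then inherits plateauedness, disjointness, and the absence of nonzero linear structures from Theorem~\ref{The-Set-withoutlinear} via invertibility, which your use of $M=\operatorname{diag}(A,I_{p+1})$ with Proposition~\ref{Prop-linear-equivalence-invariants} simply makes explicit.
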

\begin{proof}
Since
$\mathcal{D}_0^A,\ldots,\mathcal{D}_{2^q-1}^A$ form a partition of
$\mathbb{F}_2^m$, the assumption implies that, for every
$1\leq k\leq 2^q-1$ and every $0\leq r\leq 2^p-1$,
$\operatorname{wt}(A^Tc_{kr})\geq t+1$.
Indeed, if $\operatorname{wt}(A^Tc_{kr})\leq t$, then
$A^Tc_{kr}\in\mathcal{D}_0^A$ by assumption. On the other hand,
$A^Tc_{kr}\in\mathcal{D}_k^A$, contradicting the disjointness of
$\mathcal{D}_0^A$ and $\mathcal{D}_k^A$.

Applying the common linear transformation $x\mapsto Ax$ to the
Walsh-support description in \eqref{suppgg} gives 
\begin{equation*}
    \begin{aligned}
        &\operatorname{supp}(W_{g_{k,A}})
=
\bigcup_{r=0}^{2^p-1}
\left\{
(A^Tc_{kr},\omega'')
\;\middle|\;
\gamma_k\cdot\omega''=\psi_k(v_r)
\right\},\\&
\operatorname{supp}(W_{g_{k,A}'})
=
\bigcup_{r=0}^{2^p-1}
\left\{
(A^Tc_{kr},\omega'')
\;\middle|\;
\gamma_k\cdot\omega''=\psi_k(v_r)\oplus1
\right\}.
    \end{aligned}
\end{equation*}
Consequently, every point in either Walsh support has Hamming weight
\[
\operatorname{wt}(A^Tc_{kr},\omega'')
=
\operatorname{wt}(A^Tc_{kr})
+
\operatorname{wt}(\omega'')
\geq t+1.
\]
Thus, neither Walsh support contains a vector of Hamming weight at
most $t$, and hence $g_{k,A}$ and $g_{k,A}'$ are $t$-resilient for
every $1\leq k\leq 2^q-1$.

The family
$\Omega_R^{(\mathcal{F}_0^A)}
\setminus\{g_{0,A},g_{0,A}'\}$
contains
$2^{q+1}-2$
members. Finally, the transformation
$(x,y)\mapsto(Ax,y)$ is invertible. Therefore, it preserves
plateauedness, pairwise disjointness of the Walsh supports, and the
absence of nonzero linear structures. The remaining assertions now
follow from Theorem~\ref{The-Set-withoutlinear}.
\end{proof}

The preceding proposition shows that
$\mathbb B_m(t)\subseteq\mathcal D_0^A$ is sufficient for
$\Omega_{\mathcal R}^{\mathcal F_0^A}$ to contain a
$t$-resilient subfamily of cardinality $2^{q+1}-2$.
We next use coding theory to choose $\mathcal D_0$
(and thus $\mathcal F_0$) and $A$ jointly so that this
inclusion holds.

\begin{Definition}
A $k$-dimensional subspace $\mathcal{C}$ of $\mathbb{F}_2^n$ is
called a linear code. The minimum distance of
$\mathcal{C}$ is defined by
$\delta(\mathcal{C})
=
\min_{c\in\mathcal{C}\setminus\{\mathbf{0}_n\}}
\operatorname{wt}(c)$.
If $\delta(\mathcal{C})=\delta$, then $\mathcal{C}$ is a
$[n,k,\delta]$ linear code.
\end{Definition}

\begin{Proposition}\label{code}
Let $p\geq2$,
$1\leq q<2^p-p-1$,
$m=p+q$,
and let $1\leq t<m$. Suppose that there exists a binary $[m,q,\delta]$ linear code
$\mathcal{C}$ with $\delta\geq 2t+1$. Then the coefficient set $\mathcal D_0$ in the first
construction and a matrix $A\in\operatorname{GL}(m,2)$
can be chosen such that
$\mathbb B_m(t)\subseteq\mathcal D_0^A$,
where $\mathbb B_m(t)$ is defined in \eqref{B0t}.
\end{Proposition}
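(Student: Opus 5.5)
The idea is to reverse the order of choices. First pick the image set $\mathcal D_0^A$ as a suitable transversal of the code $\mathcal C$. Then choose $A$ so that $A^T$ carries the coordinate subspace $W=\{\boldsymbol 0_p\}\times\mathbb F_2^q$ onto $\mathcal C$. Finally read off $h$, and hence $\phi_0,\ldots,\phi_{q-1}$, from the pulled-back set. Two structural facts drive this.
\begin{itemize}
\item Each $\mathcal D_k=\{(y',h(y')\oplus k)\}$ contains exactly one vector with each first coordinate $y'$. So $\mathcal D_k$ is exactly a complete system of coset representatives of $W$ in $\mathbb F_2^m$.
\item Conversely, every transversal of $W$ is the graph of a unique map $h:\mathbb F_2^p\to\mathbb F_2^q$.
\end{itemize}

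\textbf{Step 1 (choice of $A$).} Since $\dim\mathcal C=q=\dim W$, choose a basis $w_1,\ldots,w_q$ of $W$ and a basis $c_1,\ldots,c_q$ of $\mathcal C$. Extend both to bases of $\mathbb F_2^m$, and let $A^T\in\operatorname{GL}(m,2)$ send the first extended basis to the second. Then $A^TW=\mathcal C$, so $A^T$ maps cosets of $W$ bijectively onto cosets of $\mathcal C$.

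\textbf{Step 2 (a transversal of $\mathcal C$ containing the ball).} Two distinct vectors $c,c'\in\mathbb B_m(t)$ satisfy $0<\operatorname{wt}(c\oplus c')\le 2t<\delta$. Hence $c\oplus c'\notin\mathcal C$, so the elements of $\mathbb B_m(t)$ lie in pairwise distinct cosets of $\mathcal C$. Extend $\mathbb B_m(t)$ to a full transversal $T'$ of $\mathcal C$ by choosing one arbitrary representative in each of the remaining cosets. Put $\mathcal D_0=A^{-T}T'$. By Step 1 this is a transversal of $W$, hence the graph of a unique map $h=(\phi_0,\ldots,\phi_{q-1})$. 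Set $\mathcal D_k=\mathcal D_0\oplus(\boldsymbol 0_p,k)$, which agrees with the definition of $\mathcal D_k$ in Subsection~\ref{subse-partitionDk}. Then $\mathcal D_0^A=A^T\mathcal D_0=T'\supseteq\mathbb B_m(t)$.

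\textbf{Step 3 (admissibility of $h$).} This is the step I expect to be the real obstacle. The first construction requires $1,y_0,\ldots,y_{p-1},\phi_0,\ldots,\phi_{q-1}$ to be linearly independent, and $h$ here is not chosen freely. The plan is to note that this independence is equivalent to the no-hyperplane property of $\mathcal D_0$. Indeed, a nontrivial relation $\alpha_0\cdot y'\oplus\alpha_1\cdot h(y')\oplus c=0$ holding for all $y'$ says precisely that $\mathcal D_0$ lies in the affine hyperplane $\{z\mid(\alpha_0,\alpha_1)\cdot z=c\}$, with $(\alpha_0,\alpha_1)\neq\boldsymbol 0_m$. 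This is the converse direction of the argument in Proposition~\ref{Pro-nonconstant}(1).

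The no-hyperplane property is preserved by the linear bijection $A^T$, so it suffices to check it for $T'$. Since $t\ge1$, we have $T'\supseteq\{\boldsymbol 0_m,e_1,\ldots,e_m\}$, which affinely spans $\mathbb F_2^m$. Hence $T'$ lies in no affine hyperplane, and the independence holds automatically, even when the code is perfect and $h$ is completely forced.

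The hypothesis $1\le q<2^p-p-1$ then ensures that $U$ is a proper subspace, so $\psi_k$ can be chosen as before. All ingredients of the first construction are therefore in place with this $\mathcal D_0$ and $A$, which proves the claim.
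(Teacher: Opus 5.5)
Your proposal is correct and follows essentially the same route as the paper: you use the minimum distance to place the points of $\mathbb B_m(t)$ in distinct cosets of $\mathcal C$ and extend them to a transversal, and you choose $A$ with $A^T(\{\boldsymbol 0_p\}\times\mathbb F_2^q)=\mathcal C$. You then pull the transversal back to obtain $\mathcal D_0$ as the graph of $h$, and derive the required linear independence from the no-hyperplane property guaranteed by $\boldsymbol 0_m,e_0,\ldots,e_{m-1}\in\mathbb B_m(t)$. Fixing $A$ before the transversal, rather than after as the paper does, is only a cosmetic reordering.
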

\begin{proof}
Let $\mathcal{C}$ be a binary linear $[m,q,\delta]$ code with $\delta\geq2t+1$. We first prove that each coset of
$\mathcal{C}$ contains at most one vector of $\mathbb{B}_m(t)$. Indeed, if
distinct $u,u'\in \mathbb{B}_m(t)$ belonged to the same coset, then
$u\oplus u'$ would be a nonzero codeword in $\mathcal{C}$ satisfying
$\operatorname{wt}(u\oplus u')
\leq
\operatorname{wt}(u)+\operatorname{wt}(u')
\leq2t$,
contrary to the minimum-distance assumption.

Consequently, the vectors in $\mathbb{B}_m(t)$ can be extended to a complete
set $\mathcal{S}$ of coset representatives of $\mathcal{C}$. Since
$\dim(\mathcal{C})=q$ and $m=p+q$, the number of cosets is $2^p$,
and hence
$|\mathcal{S}|=2^p$ and
$\mathbb{B}_m(t)\subseteq\mathcal{S}$.

Let
$V=\{\mathbf{0}_p\}\times\mathbb{F}_2^q
\subseteq\mathbb{F}_2^m$.
Both $V$ and $\mathcal{C}$ have dimension $q$, so there exists
$A\in\operatorname{GL}(m,2)$ such that
$A^TV=\mathcal{C}$.
Define
$\mathcal{D}_0=(A^T)^{-1}\mathcal{S}$.
Since $\mathcal{S}$ is a complete set of coset representatives of
$\mathcal{C}$, the set $\mathcal{D}_0$ is a complete set of coset
representatives of $V$. The cosets of $V$ are indexed by their first
$p$ coordinates. Therefore, there is a mapping
$h:\mathbb{F}_2^p\rightarrow\mathbb{F}_2^q$
such that
\[
\mathcal{D}_0
=
\{(v_r,h(v_r))\,|\,0\leq r\leq 2^p-1\},
\]
where  $\{v_r\}_{r=0}^{2^p-1} =\mathbb{F}_2^p$.
Writing
\[
h(v_r)
=
\bigl(\phi_0(v_r),\ldots,\phi_{q-1}(v_r)\bigr)
\]
gives the Boolean functions used to define the constituent family
$\mathcal{F}_0$ in Theorem~\ref{The-Set-withoutlinear}.

It remains to verify the required linear-independence condition on
these functions. 
Since $t\geq 1$, the Hamming ball $\mathbb{B}_m(t)$ contains
$\mathbf{0}_m$ and all standard basis vectors
$e_0,\ldots,e_{m-1}$ of $\mathbb{F}_2^m$. Since
$\mathbb{B}_m(t)\subseteq \mathcal{S}$, we have
$\{\mathbf{0}_m,e_0,\ldots,e_{m-1}\}\subseteq \mathcal{S}$.
We claim that $\mathcal{S}$ has the no-hyperplane property. Indeed, suppose that
$\mathcal{S}\subseteq
\{x\in\mathbb{F}_2^m:a\cdot x=b\}$
for some $a\in\mathbb{F}_2^m\setminus\{\mathbf{0}_m\}$ and
$b\in\mathbb{F}_2$. Since $\mathbf{0}_m\in\mathcal{S}$, we obtain
$b=0$. Moreover, since $e_i\in\mathcal{S}$ for every
$0\leq i\leq m-1$, we have
$a_i=a\cdot e_i=0$.
Hence $a=\mathbf{0}_m$, a contradiction. Therefore,
$\mathcal{S}$ has the no-hyperplane property.
Since $A^T$ is invertible,
$\mathcal{D}_0=(A^T)^{-1}\mathcal{S}$ also has the no-hyperplane
property. Hence
\[
1,y_0,\ldots,y_{p-1},\phi_0,\ldots,\phi_{q-1}
\]
are linearly independent. Otherwise, there would exist
$\alpha=(\alpha_0,\alpha_1)
\in\mathbb{F}_2^p\times\mathbb{F}_2^q
\setminus\{\mathbf{0}_m\}$
and $c\in\mathbb{F}_2$ such that
$\alpha_0\cdot y'
\oplus
\bigoplus_{i=0}^{q-1}\alpha_{1i}\phi_i(y')
\oplus c
=0$
for every $y'\in\mathbb{F}_2^p$. Hence every point
$(y',h(y'))\in\mathcal{D}_0$ would lie in a proper affine
hyperplane
contradicting the no-hyperplane property of $\mathcal{D}_0$.

Thus,
$\mathcal{F}_0$ satisfies the hypotheses of
Theorem~\ref{The-Set-withoutlinear}.
Finally,
$\mathcal{D}_0^A
=
A^T\mathcal{D}_0
=
\mathcal{S}$,
and therefore
$\mathbb{B}_m(t)\subseteq\mathcal{D}_0^A$.
\end{proof}

The coding-theoretic hypothesis in the preceding proposition can be
fulfilled by three types of binary linear codes. For the BCH and Goppa
constructions, set
$\ell=\left\lceil\log_2(m+1)\right\rceil$.
Both constructions yield a binary linear $[m,k,\delta]$ code satisfying
$k\ge m-\ell t$ and $\delta\ge 2t+1$.
Therefore, since $m=p+q$, the condition
$p\ge t\left\lceil\log_2(m+1)\right\rceil$
ensures that $k\ge q$.
In this case, any $q$-dimensional subcode may be selected, and its
minimum distance is still at least $2t+1$.
We now describe the three constructions separately.

\paragraph{Primitive binary BCH codes.}
Peterson's construction gives a primitive binary BCH code of length
$2^\ell-1$, minimum distance at least $2t+1$, and codimension at
most $\ell t$~\cite{ma2}. Shortening this code to length $m$
preserves the minimum-distance lower bound and yields the parameters
stated above.

\paragraph{Square-free binary Goppa codes.}
Alternatively, one may use a square-free binary Goppa code of length
$m$, defined over $\mathbb F_{2^\ell}$ by a Goppa polynomial of
degree $t$. Such a code satisfies the same dimension and
minimum-distance bounds~\cite{ma1}.

\paragraph{The binary Golay code.}
For certain exceptional parameter choices, the required code can be
obtained from the binary Golay code. The binary Golay code has
parameters $[23,12,7]$ and can be obtained by puncturing the extended
binary Golay $[24,12,8]$ code~\cite{Golay1,Golay2}.
Hence, for $m=23$, $q\le 12$, and $t\le 3$, one may take any
$q$-dimensional subcode of the $[23,12,7]$ Golay code; its minimum
distance is at least $7\ge 2t+1$.
In particular, the full $[23,12,7]$ Golay code is used in the
$70$-variable $3$-resilient construction presented in the following
subsection.

Additional codes for specific parameter choices can be identified
from  online tables of best-known linear codes
\cite{GrasslCodeTables}, whenever the listed minimum distance is
at least $2t+1$.
\begin{Remark}\label{Remark4}
For \(p\geq2\), \(1\leq q<2^p-p-1\), \(m=p+q\), and
\(1\leq t<m\), if a binary \([m,q,\delta]\) linear code with
\(\delta\geq2t+1\) exists, 
Propositions  \ref{prop-resilient-subfamily} and \ref{code} together yield
\(2^{q+1}-2\) \(t\)-resilient disjoint spectra
\((q+1)\)-plateaued Boolean functions without nonzero linear
structures in \(m+p+1\) variables, thereby providing a
constructive lower bound on the maximum attainable cardinality.
On the other hand, each \((q+1)\)-plateaued Boolean function in \(m+p+1\)
variables has \(2^{2p}\) nonzero Walsh coefficients. Since the
Walsh support of a \(t\)-resilient function contains no vector
of Hamming weight at most \(t\), the cardinality of any
\(t\)-resilient disjoint spectra family of such functions is
bounded above by
\[
\left\lfloor
\frac{
2^{m+p+1}-\sum_{i=0}^{t}\binom{m+p+1}{i}
}{2^{2p}}
\right\rfloor
=
2^{q+1}-
\left\lceil
\frac{\sum_{i=0}^{t}\binom{m+p+1}{i}}{2^{2p}}
\right\rceil.
\]
This is a counting upper bound and need not be attainable.
Since the summation includes the zero vector, this upper bound
is at most \(2^{q+1}-1\).  Consequently, under the conditions
stated above, the maximum attainable cardinality of such
families without nonzero linear structures is either
\(2^{q+1}-2\) or \(2^{q+1}-1\). The constructed family
therefore either attains the maximum or falls short of it
by one function.
\end{Remark}

\subsection{SAO resilient functions without nonzero linear structures}
We now build on the framework for constructing resilient
Boolean functions from disjoint spectra families, first
introduced in 2009~\cite{2009disjoint}.
Selected members of the $t$-resilient family obtained in the
preceding subsection are combined with linear functions to
construct an SAO $t$-resilient Boolean function without
nonzero linear structures. We then modify one
constituent to attain optimal algebraic degree while
preserving these properties.

Let
$p\geq2$, $1\leq q<2^p-p-1$, $m=p+q$, $N=m+p+1$, and $1\leq t<m$. Define
$\Gamma_t
=\{u\in\mathbb F_2^N\,|\,\operatorname{wt}(u)\leq t\}.$
Suppose that a binary $[m,q,\delta]$ linear code with
$\delta\geq2t+1$ exists. Assume in addition that
$|\Gamma_t|=\sum_{i=0}^{t}\binom Ni\leq2^{q+1}-2.$
 By Proposition \ref{code}, the coefficient set
$\mathcal D_0$ in the first construction and a matrix
$A\in\operatorname{GL}(m,2)$ can be chosen so that
$\mathbb B_m(t)\subseteq\mathcal D_0^A$.
Proposition~\ref{prop-resilient-subfamily} then shows that
$\Omega_R^{\mathcal F_0^A}
\setminus\{g_{0,A},g'_{0,A}\}$
is a family of $2^{q+1}-2$ $t$-resilient disjoint spectra
$(q+1)$-plateaued Boolean functions without nonzero linear
structures.

We may therefore choose a subfamily
\begin{equation}\label{Ht}
    \mathcal H_t
 \subseteq
 \Omega_R^{(\mathcal F_0^A)}
 \setminus\{g_{0,A},g'_{0,A}\}
\end{equation}
of cardinality
$|\mathcal H_t|=|\Gamma_t|
 =\sum_{i=0}^{t}\binom Ni$,
and index its members as
$\mathcal H_t=\{H_u\mid u\in\Gamma_t\}$.
For every $u\notin\Gamma_t$, we have
$\operatorname{wt}(u)\geq t+1$, and hence the linear function
$u\cdot X$ is $t$-resilient. For each
$u\in\mathbb{F}_{2}^{N}$, define
\begin{equation}\label{FU}
    F_u(X)
=
\begin{cases}
H_u(X), & u\in\Gamma_{t},\\
u\cdot X, & u\notin\Gamma_{t},
\end{cases}
\end{equation}
where $X\in\F_2^{N}$. Hence every  \(F_u\) is \(t\)-resilient.
We then define
\begin{equation}\label{resilient}
    F(X,Y)=F_Y(X),
\qquad
(X,Y)\in\mathbb{F}_2^{N}\times\mathbb{F}_2^{N}.
\end{equation}

The following table summarizes the steps leading to the
construction of $F$.
\begin{table}[H]\small
\caption{Roadmap for constructing the SAO resilient function \(F\) without nonzero linear structures.}
\label{tab-resilient-roadmap}
\centering
\renewcommand{\arraystretch}{1.15}
\begin{tabularx}{\textwidth}{@{}p{1.5cm}X@{}}
\toprule
\textbf{Step 1.}
&
Choose integers $p\geq2$, $1\leq q<2^p-p-1$, and
$1\leq t<m$, and set $m=p+q$ and $N=m+p+1$.
Verify that
$\sum_{i=0}^{t}\binom{N}{i}\leq 2^{q+1}-2$.
Choose a binary $[m,q,\delta]$ linear code $\mathcal C$
with $\delta\geq 2t+1$.
\\

\textbf{Step 2.}
&
Extend
$\mathbb{B}_m(t)$
to a complete set $\mathcal S$ of coset representatives of
$\mathcal C$. Choose $A\in\operatorname{GL}(m,2)$ such that
$A^T\bigl(\{\mathbf 0_p\}\times\mathbb F_2^q\bigr)=\mathcal C$.
\\

\textbf{Step 3.}
&
Set $\mathcal D_0=(A^T)^{-1}\mathcal S$ and write
$\mathcal D_0
 =\{(v_r,h(v_r))\,|\,v_r\in\mathbb F_2^p,0\leq r\leq 2^p-1\}$,
 and
$h=(\phi_0,\ldots,\phi_{q-1})$.
For each $k\in\mathbb F_2^q$, define
$\mathcal D_k
 =\mathcal D_0\oplus\{(\mathbf 0_p,k)\}
 =\{c_{kr}\mid 0\leq r\leq2^p-1\}$,
and set $f_{kr}(x)=c_{kr}\cdot x$.
\\

\textbf{Step 4.}
&
Starting from $\mathcal F_0$, carry out Steps~4--6 of
Table~\ref{tab:construction-roadmap} to construct $g_k$ and $g'_k$ and
form $\Omega_R^{(\mathcal F_0)}$.\\
\textbf{Step 5.}
&
For $0\leq k\leq 2^q-1$, define
$g_{k,A}(x,y)=g_k(Ax,y)$,
$g'_{k,A}(x,y)=g'_k(Ax,y)$,
and form the transformed family
$\Omega_R^{(\mathcal F^A_{0})}$.
\\

\textbf{Step 6.}
&
From
$\Omega_R^{(\mathcal F^A_{0})}
 \setminus\{g_{0,A},g'_{0,A}\}$,
choose
$\sum_{i=0}^{t}\binom{N}{i}$
members and index them as
$\mathcal H_t=\{H_u\,|\, u\in\Gamma_t\}$.
\\

\textbf{Step 7.}
&
For each $u\in\mathbb F_2^N$, define \(F_u(X)\) as in \eqref{FU}
and define $F(X,Y)=F_Y(X)$.
\\
\bottomrule
\end{tabularx}
\end{table}
\begin{Theorem}\label{FFF}
The Boolean function \(F\) defined in \eqref{resilient} is a \(2N\)-variable SAO
\(t\)-resilient Boolean function without nonzero linear structures and with
nonlinearity
$N_F=2^{2N-1}-2^{N-1}-2^m$.
\end{Theorem}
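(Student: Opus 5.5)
The plan is to compute the Walsh transform of $F$ by splitting over $Y$, then to handle the derivatives of $F$ by cases on the direction.

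\textbf{Walsh transform.} For $(\Omega_1,\Omega_2)\in\mathbb F_2^N\times\mathbb F_2^N$,
\[
W_F(\Omega_1,\Omega_2)=\sum_{u\in\mathbb F_2^N}(-1)^{\Omega_2\cdot u}W_{F_u}(\Omega_1).
\]
We use three facts about the constituents.
\begin{enumerate}
\item For $u\notin\Gamma_t$, $W_{F_u}(\Omega_1)=2^N$ if $\Omega_1=u$, and $0$ otherwise.
\item For $u\in\Gamma_t$, $H_u$ is $(q+1)$-plateaued in $N$ variables, so $W_{H_u}\in\{0,\pm2^{(N+q+1)/2}\}=\{0,\pm2^{m+1}\}$, since $N+q+1=2m+2$.
\item The supports $\operatorname{supp}(W_{H_u})$ are pairwise disjoint by Proposition~\ref{prop-resilient-subfamily}. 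Each avoids $\Gamma_t$ because $H_u$ is $t$-resilient, but it may meet the points $u'\notin\Gamma_t$.
\end{enumerate}
Hence for fixed $\Omega_1$, at most one $H_u$ contributes, plus possibly the linear term $u=\Omega_1$ when $\Omega_1\notin\Gamma_t$. This gives
\[
|W_F|\le 2^N+2^{m+1},
\]
so $N_F\ge 2^{2N-1}-2^{N-1}-2^m$.

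For equality, pick any $H_u$ and any $\Omega_1\in\operatorname{supp}(W_{H_u})$. Then $\Omega_1\notin\Gamma_t$, so the linear term is present too. Choosing $\Omega_2$ so that the two signs agree is possible, because $u\neq\Omega_1$: the $H_u$ term is indexed by $u\in\Gamma_t$ and the linear term by $\Omega_1\notin\Gamma_t$. Then $|W_F|=2^N+2^{m+1}$, which gives the stated $N_F$.

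SAO then amounts to $2^m<2^{N-1}$, i.e.\ $m<m+p$, which holds. Resiliency follows because $W_F(\Omega_1,\Omega_2)$ with $\operatorname{wt}(\Omega_1)\le t$ vanishes termwise, since every $F_u$ is $t$-resilient. Note that $\operatorname{wt}(\Omega_1,\Omega_2)\le t$ forces $\operatorname{wt}(\Omega_1)\le t$.

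\textbf{Linear structures.} Take a nonzero direction $(\alpha,\beta)$. Then
\[
D_{(\alpha,\beta)}F(X,Y)=F_Y(X)\oplus F_{Y\oplus\beta}(X\oplus\alpha).
\]
\emph{Case $\beta=0$, $\alpha\ne0$.} Take $Y=u\in\Gamma_t$ (for instance $u=\mathbf 0_N$). Then the derivative restricted to that slice is $D_\alpha H_u$, which is nonconstant because $H_u$ has no nonzero linear structures. So $D_{(\alpha,0)}F$ is nonconstant.

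\emph{Case $\beta\ne0$.} I will look for a $Y$ where the derivative restricted to that slice is nonconstant in $X$.
\begin{itemize}
\item If some $Y$ has both $Y$ and $Y\oplus\beta$ outside $\Gamma_t$, the slice is $(Y\oplus Y\oplus\beta)\cdot X\oplus\text{const}=\beta\cdot X\oplus\text{const}$, which is nonconstant since $\beta\ne0$. Such a $Y$ exists because $|\Gamma_t|\le 2^{q+1}-2$ is far smaller than $2^{N-1}$. A counting argument shows $\Gamma_t\cup(\Gamma_t\oplus\beta)\neq\mathbb F_2^N$.
\item If every slice were constant, then a slice with $Y\in\Gamma_t$ and $Y\oplus\beta\notin\Gamma_t$ would give $H_Y(X)\oplus\text{affine}$ constant, forcing $H_Y$ affine. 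That contradicts $\deg H_Y\ge2$, since $H_Y$ is plateaued with no nonzero linear structures. So the first bullet already suffices.
\end{itemize}

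\textbf{Main obstacle.} The main obstacle is the sign alignment in the nonlinearity equality. I must ensure $\Omega_2\cdot u$ and $\Omega_2\cdot\Omega_1$ can be chosen independently, which needs $u\neq\Omega_1$ as vectors in $\mathbb F_2^N$. This holds by $t$-resiliency of $H_u$, since $u\in\Gamma_t$ while $\Omega_1\notin\Gamma_t$.
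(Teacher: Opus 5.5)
Your proposal is correct and follows essentially the same route as the paper's proof. The paper also:
\begin{itemize}
\item decomposes $W_F$ over $Y$ and uses disjointness of the $H_u$ supports to get $|W_F|\le 2^N+2^{m+1}$;
\item attains equality by aligning signs, using $u_0\neq\omega'_0$, which follows from $t$-resiliency of $H_{u_0}$;
\item proves resiliency by termwise vanishing;
\item excludes linear structures with the same two cases, choosing $u\notin\Gamma_t\cup(\Gamma_t\oplus\beta)$ when $\beta\neq\boldsymbol{0}_N$.
\end{itemize}
Your second bullet in that case is unnecessary, as you note yourself.
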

\begin{proof} 
 We first prove that $F$ admits no nonzero linear structure.  Suppose that
$(\alpha,\beta)\in\mathbb{F}_2^{N}\times\mathbb{F}_2^{N}$ is a linear structure of
$F$. 
If $\beta=\boldsymbol{0}_N$, then, for any $u\in\Gamma_t$, the restriction of
$D_{(\alpha,\boldsymbol{0}_N)}F$ to $Y=u$ is $D_{\alpha}H_u$. Since $H_u$ admits no nonzero
linear structure, it follows that $\alpha=\boldsymbol{0}_N$.
Now suppose that \(\beta\neq\boldsymbol{0}_N\). Since \(N-1=2p+q\) and
\(p\geq2\), we have \(|\Gamma_t|\leq 2^{q+1}-2<2^{N-1}\), and thus
\(|\Gamma_t\cup(\Gamma_t\oplus\beta)|\leq2|\Gamma_t|<2^N\).
Therefore, \(\Gamma_t\cup(\Gamma_t\oplus\beta)\) is a proper subset of
\(\mathbb F_2^N\), so one can choose
\(u\notin\Gamma_t\cup(\Gamma_t\oplus\beta)\). Then we have
\[
\begin{aligned}
D_{(\alpha,\beta)}F(X,u)
&=F(X\oplus \alpha,u\oplus \beta)\oplus F(X,u)\\
&=(u\oplus \beta)\cdot(X\oplus \alpha)\oplus u\cdot X\\
&=\beta\cdot X\oplus(u\oplus \beta)\cdot \alpha,
\end{aligned}
\]
which is nonconstant in $X$, a contradiction. Thus $(\alpha,\beta)=(\boldsymbol{0}_N,\boldsymbol{0}_N)$,
and consequently $F$ admits no nonzero linear structure.

It remains to show that \(F\) is \(t\)-resilient and to determine its nonlinearity. For
$(\omega',\omega'')\in\mathbb F_2^N\times\mathbb F_2^N$, the Walsh
transform of $F$ is
\[
W_F(\omega',\omega'')
 =\sum_{u\in\mathbb F_2^N}
  (-1)^{\omega''\cdot u}W_{F_u}(\omega')=\sum_{u\in\Gamma_t}(-1)^{\omega''\cdot u}W_{H_u}(\omega')
  +\sum_{u\notin\Gamma_t}(-1)^{\omega''\cdot u}
       W_{u\cdot X}(\omega').
\]
If \(\operatorname{wt}(\omega',\omega'')\leq t\), then
\(\operatorname{wt}(\omega')\leq t\). Since every \(F_u\) is
\(t\)-resilient, \(W_{F_u}(\omega')=0\) for every \(u\).
Therefore \(W_F(\omega',\omega'')=0\), proving that \(F\)
is \(t\)-resilient.

By the pairwise disjointness of the Walsh supports of the
members of $\mathcal H_t$, at most one term in the first sum
is nonzero for each $\omega'$, and its absolute value is
$2^{m+1}$. The second sum is either zero or has absolute
value $2^N$. Hence
$|W_F(\omega',\omega'')|\leq 2^N+2^{m+1}$
for all $(\omega',\omega'')\in
\mathbb F_2^N\times\mathbb F_2^N$.
To see that equality holds, choose $u_0\in\Gamma_t$ and
$\omega'_0\in\operatorname{supp}(W_{H_{u_0}})$. Since $H_{u_0}$ is
$t$-resilient, $\omega'_0\notin\Gamma_t$, and thus
$\omega'_0\neq u_0$. Then
 we have
\begin{equation*}
\begin{aligned}
        W_F(\omega_0',\omega'')
 =&(-1)^{\omega''\cdot u_0}W_{H_{u_0}}(\omega_0')
  +(-1)^{\omega''\cdot \omega_0'}2^N
  \\=&(-1)^{\omega''\cdot u_0}\big(W_{H_{u_0}}(\omega_0')
  +(-1)^{\omega''\cdot (\omega_0'\oplus u_0)}2^N\big)
\end{aligned}
\end{equation*}
Since \(u_0\neq \omega_0'\), we have \(u_0\oplus \omega_0'\neq 0\).
Hence one can choose \(\omega_0''\in\mathbb F_2^N\) so that the two
nonzero terms in \(W_F(\omega_0',\omega_0'')\) have the same sign.
Therefore,
$|W_F(\omega'_0,\omega''_0)|=2^N+2^{m+1}$.
Consequently,
\begin{equation*}
    N_F
 =2^{2N-1}
  -\frac{1}{2}\max_{(\omega',\omega'')\in\F_2^N\times\F_2^N}|W_F(\omega',\omega'')|
 =2^{2N-1}-2^{N-1}-2^m.
\end{equation*}
Finally, $N=m+p+1$ and $p\geq2$ imply
$2^{N-1}+2^m<2^N$. Hence
$N_F>2^{2N-1}-2^N$, so $F$ is SAO.
\end{proof}
\begin{Proposition}\label{cishu}
Under the hypotheses of Theorem~\ref{FFF},
the function $F$ satisfies
$\deg(F)\le N+p+1$.
Suppose additionally that
the construction of $\Omega_R^{(\mathcal F^A_0)}$ uses a common
selector map; that is, the maps $\sigma_k$ defined in \eqref{smap}
satisfy
$\sigma_k=\sigma$
for every $0\leq k\leq 2^q-1$.
Suppose further that the functions $\psi_k$ associated with the
offset family $R$ in \eqref{Set-R(psi)} all coincide with a single
function $\psi\in\mathcal B_p\setminus U$ of degree $p$, where $U$ is defined in $\eqref{spanU}$.
Then the function $F$ satisfies
\[
\deg(F)=
\begin{cases}
N+p+1, & |\Gamma_t|\text{ is odd},\\
\leq N+p, & |\Gamma_t|\text{ is even}.
\end{cases}
\]
\end{Proposition}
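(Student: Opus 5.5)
The plan is to write $F$ in indicator form over the $Y$-variables, isolate the homogeneous part of degree $N+p+1$, and show that it equals $\prod_{i}Y_i$ times $|\Gamma_t|$ copies of one fixed nonzero polynomial. For $u\in\mathbb F_2^N$ let $\delta_u(Y)=\prod_{i=0}^{N-1}(Y_i\oplus u_i\oplus1)$. Then
\[
F(X,Y)=\bigoplus_{u\in\mathbb F_2^N}\delta_u(Y)F_u(X).
\]
Each $\delta_u$ has degree $N$, and its degree-$N$ homogeneous part is $\prod_{i=0}^{N-1}Y_i$ for every $u$.

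\textbf{Upper bound.} For $u\notin\Gamma_t$, $F_u$ is linear, so $\delta_u F_u$ has degree at most $N+1$. For $u\in\Gamma_t$, $H_u$ is a $(q+1)$-plateaued function in $N$ variables. The plateaued degree bound from the preliminaries gives
\[
\deg(H_u)\le \frac{N-(q+1)}{2}+1=p+1.
\]
Hence $\deg(F)\le N+p+1$. Moreover, the degree-$(N+p+1)$ homogeneous part of $F$ is
\[
\prod_{i=0}^{N-1}Y_i\cdot\bigoplus_{u\in\Gamma_t}P_u(X),
\]
where $P_u$ is the degree-$(p+1)$ homogeneous part of $H_u$. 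This holds because only products of the degree-$N$ part of $\delta_u$ with the degree-$(p+1)$ part of $F_u$ reach total degree $N+p+1$.

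\textbf{Common top part.} It remains to show that, under the extra hypotheses, all $P_u$ coincide with a single nonzero $P$. Then the top part is $|\Gamma_t|\cdot\prod_i Y_i\cdot P$, which is nonzero exactly when $|\Gamma_t|$ is odd; this gives both cases.

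First, with $\sigma_k=\sigma$ for all $k$, the map $\pi_k(y)=c_{k,\sigma(y)}=c_{0,\sigma(y)}\oplus(\boldsymbol{0}_p,k)$ satisfies $x\cdot A^T\pi_k(y)=x\cdot A^T\pi_0(y)\oplus\ell_k(x)$ with $\ell_k$ linear. So, apart from offsets, all $g_{k,A}$ and $g'_{k,A}$ share the same $x$-dependent part up to linear terms. Their offset terms $a_k(y)$ and $b_k(y)$ depend only on the $p+1$ variables $y$. The degree-$(p+1)$ part of such a function is its coefficient of $y_0\cdots y_p$, which by \eqref{Mob} equals the parity of its weight:
\[
\bigoplus_y a_k(y)=\bigoplus_r\bigl(a_{kr}\oplus a'_{kr}\bigr)=\bigoplus_r\psi(v_r)=1,
\]
since $\deg\psi=p$ means $\psi$ has odd weight. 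Similarly,
\[
\bigoplus_y b_k(y)=\bigoplus_r\bigl(\psi(v_r)\oplus1\bigr)=1,
\]
because $2^p$ is even. Thus every member of $\Omega_R^{(\mathcal F^A_0)}$ has degree-$(p+1)$ part
\[
P=\bigl[\text{degree-}(p+1)\text{ part of }x\cdot A^T\pi_0(y)\bigr]\oplus y_0y_1\cdots y_p,
\]
independent of the member. $P$ is nonzero, since the first bracket involves $x$ in every monomial and so cannot cancel $y_0\cdots y_p$. In particular, all $P_u$ equal $P$.

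\textbf{Main obstacle.} The delicate step is justifying that the degree-$(p+1)$ parts agree across different $k$ without relying on the $k$-dependent coordinate changes $\mathcal T_k$. The common-selector hypothesis is what makes this work: the $x$-parts then differ only by linear terms, and the offset contribution is controlled by a weight-parity (Möbius) argument in the native $y$-coordinates. I would check this carefully, in particular that no degree-$(p+1)$ monomial from the $x\cdot\pi$ part is pure in $y$.
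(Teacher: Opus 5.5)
Your proposal is correct and follows essentially the same route as the paper. Both arguments use the indicator representation over $Y$ and isolate the coefficient of $Y_0\cdots Y_{N-1}$ (you phrase this as the degree-$(N+p+1)$ homogeneous part), use the common-selector observation that the $x$-parts of all members differ only by linear terms, and apply the M\"obius parity computation showing that both $a_k$ and $b_k$ contribute the monomial $y_0\cdots y_p$. The only minor difference is that you obtain $\deg(H_u)\le p+1$ from the plateaued bound $(N-(q+1))/2+1=p+1$, whereas the paper states it from the structure of the family members; both are valid.
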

\begin{proof}
The function $F$ has the equivalent indicator representation
$F(X,Y)
=
\bigoplus_{u\in\mathbb F_2^N}
\Big(
\prod_{i=0}^{N-1}(Y_i\oplus u_i\oplus1)
\Big)F_u(X)$,
where $Y=(Y_0,\ldots,Y_{N-1})$ and
$u=(u_0,\ldots,u_{N-1})$. For every \(u\in\mathbb F_2^N\), the unique term of degree \(N\)
in
$\prod_{i=0}^{N-1}(Y_i\oplus u_i\oplus 1)$
is \(Y_0Y_1\cdots Y_{N-1}\). Hence the coefficient of
\(Y_0Y_1\cdots Y_{N-1}\) in \(F\) is
\[
\bigoplus_{u\in\mathbb F_2^N}F_u(X)
=
\Big(\bigoplus_{u\in\Gamma_t}H_u(X)\Big)
\oplus
\Big(\bigoplus_{u\notin\Gamma_t}u\cdot X\Big).
\]
The second term has algebraic degree at most $1$. Since
$H_u\in\Omega_{ R}^{(\mathcal F_0^A)}
\setminus\{g_{0,A},g'_{0,A}\}$ for every $u\in\Gamma_t$, and
every member of $\Omega_{R}^{(\mathcal F_0^A)}$
has algebraic degree at most $p+1$, we have
$\deg(F_u)\le p+1$ for every $u\in\mathbb F_2^N$.
It then follows from the indicator representation of $F$ that
$\deg(F)\le N+p+1$.

Under the additional assumptions of this proposition, by
Corollary~\ref{Cor1} and the invariance of algebraic degree under
invertible linear transformations, every member of
$\Omega_R^{(\mathcal F_0^A)}$ has algebraic degree $p+1$. We next
show that all these members have the same homogeneous component of
degree $p+1$.

Since the construction uses the common selector map $\sigma$, for
every $0\leq k\leq2^q-1$ we have
\[
g_{k,A}(x,y)
=g_k(Ax,y)\\
=c_{k,\sigma(y)}\cdot Ax\oplus a_k(y)
=A^Tc_{k,\sigma(y)}\cdot x\oplus a_k(y).
\]
Since
$c_{k,\sigma(y)}
=
c_{0,\sigma(y)}\oplus(0_p,k)$,
it follows that
\begin{equation*}
    \begin{aligned}
       &g_{k,A}(x,y)
=
A^Tc_{0,\sigma(y)}\cdot x
\oplus A^T(0_p,k)\cdot x
\oplus a_k(y),\\
&g'_{k,A}(x,y)
=
A^Tc_{0,\sigma(y)}\cdot x
\oplus A^T(0_p,k)\cdot x
\oplus b_k(y). 
    \end{aligned}
\end{equation*}
Thus, the part involving $x$ that may have degree $p+1$ is
independent of $k$, because $A^T(0_p,k)\cdot x$ is linear.

It remains to consider the offset functions. By \eqref{Mob}, the coefficient of the monomial
$y_0\cdots y_p$ in the algebraic normal form of $a_k$ is
$\bigoplus_{y\in\mathbb F_2^{p+1}}a_k(y)
=
\bigoplus_{r=0}^{2^p-1}
   \bigl(a_{kr}\oplus a'_{kr}\bigr)
=
\bigoplus_{r=0}^{2^p-1}\psi_k(v_r)
=
1$,
where the last equality follows from $\deg(\psi_k)=p$. Likewise,
$\bigoplus_{y\in\mathbb F_2^{p+1}}b_k(y)
=
\bigoplus_{r=0}^{2^p-1}
   \bigl(b_{kr}\oplus b'_{kr}\bigr)
=
\bigoplus_{r=0}^{2^p-1}
   \bigl(\psi_k(v_r)\oplus1\bigr)
=
1.$

Therefore, for every \(0\leq k\leq 2^q-1\), the degree-\((p+1)\)
parts of \(g_{k,A}\) and \(g'_{k,A}\) coincide. They consist of the
degree-\((p+1)\) terms arising from
\(A^Tc_{0,\sigma(y)}\cdot x\), together with the monomial
\(y_0\cdots y_p\). This common degree-\((p+1)\) part is nonzero,
since every monomial arising from \(A^Tc_{0,\sigma(y)}\cdot x\)
contains at least one variable from \(x\), whereas
\(y_0\cdots y_p\) does not.

Since every \(H_u\), \(u\in\Gamma_t\), belongs to
\(\Omega_R^{(\mathcal F_0^A)}\), all the functions \(H_u\) have the
same nonzero degree-\((p+1)\) part. Consequently, this common part
survives in
$\bigoplus_{u\in\Gamma_t}H_u$
if \(|\Gamma_t|\) is odd, whereas it cancels completely if
\(|\Gamma_t|\) is even.
Then we obtain
\[
\deg
\Big(
\bigoplus_{u\in\Gamma_t}H_u
\Big)
=
\begin{cases}
p+1, & |\Gamma_t|\text{ is odd},\\
\leq p, & |\Gamma_t|\text{ is even}.
\end{cases}
\]

On the other hand, for any monomial in \(Y\) of degree less than \(N\),
its coefficient in \(F\) has algebraic degree at most \(p+1\).
Hence, the total degree of the corresponding term is at most $N+p$.

Combining the above observations, the assertion follows.
\end{proof}

The preceding result gives $\deg(F)\leq N+p+1$, whereas the  bound for a $2N$-variable $t$-resilient Boolean function
is $2N-t-1$. Moreover, $|\Gamma_t|\leq2^{q+1}-2$ forces
$t\leq m-2$, since $t=m-1$ would imply
$|\Gamma_t|
=\sum_{i=0}^{m-1}\binom{N}{i}
\geq2^{m-1}\geq2^{q+1}$,
a contradiction. Hence, as $N=m+p+1$,
$(2N-t-1)-(N+p+1)=m-t-1\geq1$.
Thus, the constructed function does not attain the optimal algebraic
degree for resilient Boolean functions.
We next modify $F$ to attain the optimal algebraic degree
while preserving its $t$-resiliency, SAO property, and absence
of nonzero linear structures.

Choose
$u'\in\mathbb F_2^N$ with
$\operatorname{wt}(u')=t+1$.
Define
\[
Q_{u'}(X)
=u'\cdot X\oplus \prod_{i\notin\operatorname{supp}(u')}X_i,
\]
where $X\in\F_2^N$.
It is easy to see that
$\deg(Q_{u'})=N-t-1$.
Replace the linear function indexed by $u'$ with
$Q_{u'}$ and define
\[
F'_u(X)
=
\begin{cases}
H_u(X), & u\in\Gamma_t,\\
Q_{u'}(X), & u=u',\\
u\cdot X, & u\notin\Gamma_t\cup\{u'\},
\end{cases}
\]
where $H_u$ denotes the member indexed by $u$ in the
subfamily $\mathcal H_t$ of \eqref{Ht}.
The modified concatenation is then defined by
\begin{equation}\label{F'}
    F'(X,Y)=F_Y'(X),
\qquad
(X,Y)\in\mathbb F_2^N\times\mathbb F_2^N.
\end{equation}

\begin{Theorem}\label{zuiyou}
   The Boolean function $F'$ defined in \eqref{F'} is a
$2N$-variable SAO $t$-resilient Boolean function without nonzero linear
structures. Moreover, $F'$ has optimal algebraic degree
$\deg(F')=2N-t-1$
and nonlinearity satisfying
$N_{F'}
\geq
2^{2N-1}-2^{N-1}-2^m-2^{t+1}$.
\end{Theorem}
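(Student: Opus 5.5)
Throughout, I follow the proof of Theorem~\ref{FFF} and treat $F'$ as a perturbation of $F$ in the single block $Y=u'$. Since $\operatorname{wt}(u')=t+1$, we have $u'\notin\Gamma_t$, so $F'$ and $F$ differ only in the constituent at $u'$. The plan has four parts: resiliency and the absence of linear structures, then the degree, then the nonlinearity.

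\textbf{Resiliency.} I first check that $Q_{u'}$ is $t$-resilient. Write $X=(X_S,X_{\bar S})$ with $S=\operatorname{supp}(u')$. Then $Q_{u'}=u'\cdot X\oplus M(X_{\bar S})$, where $M=\prod_{i\notin S}X_i$, and
\[
W_{Q_{u'}}(\omega)=\Bigl(\sum_{X_S}(-1)^{(u'\oplus\omega)_S\cdot X_S}\Bigr)\Bigl(\sum_{X_{\bar S}}(-1)^{M\oplus\omega_{\bar S}\cdot X_{\bar S}}\Bigr).
\]
The first factor vanishes unless $\omega_S=\mathbf 1$, which forces $\operatorname{wt}(\omega)\ge t+1$. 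Hence $Q_{u'}$ is $t$-resilient. All other $F'_u$ are $t$-resilient as before, so the argument of Theorem~\ref{FFF} gives $W_{F'}(\omega',\omega'')=0$ whenever $\operatorname{wt}(\omega')\le t$.

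\textbf{No nonzero linear structures.} I repeat the proof of Theorem~\ref{FFF}.
\begin{itemize}
\item If $\beta=\mathbf 0_N$, restrict to $Y=u$ with $u\in\Gamma_t$. There $F'$ equals $H_u$, which has no nonzero linear structure, so $\alpha=\mathbf 0_N$.
\item If $\beta\neq\mathbf 0_N$, choose $u\notin\Gamma_t\cup(\Gamma_t\oplus\beta)\cup\{u',u'\oplus\beta\}$. This set has at most $2|\Gamma_t|+2<2^N$ elements, so such a $u$ exists. On it the derivative is $\beta\cdot X\oplus\text{const}$, which is nonconstant.
\end{itemize}

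\textbf{Degree.} I use the indicator representation. The coefficient of $Y_0\cdots Y_{N-1}$ becomes the previous coefficient plus $M(X_{\bar S})$. The previous coefficient has degree at most $p+1$ by Proposition~\ref{cishu}, and $N-t-1>p+1$ because $m-t-1\ge1$. So this coefficient has degree exactly $N-t-1$, contributing a term of degree $2N-t-1$. Every other $Y$-monomial has degree at most $N-1$, and its coefficient is a sum of $F'_u$, each of degree at most $\max(p+1,N-t-1)=N-t-1$. Those terms therefore have degree at most $2N-t-2$ and cannot cancel. Combined with the resiliency bound, this gives $\deg(F')=2N-t-1$ exactly.

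\textbf{Nonlinearity.} Write
\[
W_{F'}(\omega',\omega'')=W_F(\omega',\omega'')+(-1)^{\omega''\cdot u'}\bigl(W_{Q_{u'}}(\omega')-W_{u'\cdot X}(\omega')\bigr).
\]
From the factorization above, $W_{Q_{u'}}$ is supported where $\omega'_S=\mathbf 1$. There it equals $2^{t+1}\sum_{X_{\bar S}}(-1)^{M\oplus\omega'_{\bar S}\cdot X_{\bar S}}$. This inner sum is $2^{N-t-1}\delta_{\omega'_{\bar S},0}-2(-1)^{\cdots}$, because $M$ differs from $0$ at a single point. Subtracting $W_{u'\cdot X}$ cancels the $2^N$ term at $\omega'=u'$, so the correction has absolute value at most $2^{t+2}$. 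This gives $|W_{F'}|\le 2^N+2^{m+1}+2^{t+2}$, hence
\[
N_{F'}\ge 2^{2N-1}-2^{N-1}-2^m-2^{t+1}.
\]

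\textbf{SAO.} The SAO property then needs $2^{N-1}+2^m+2^{t+1}<2^N$, i.e. $2^m+2^{t+1}<2^{N-1}=2^{m+p}$. This holds since $t<m$ and $p\ge2$.

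\textbf{Main obstacle.} I expect the main obstacle to be the careful Walsh computation of $Q_{u'}$, in particular confirming that the deviation from $W_{u'\cdot X}$ is exactly $\pm2^{t+2}$ on its support. I would handle it with the explicit two-factor sum above.
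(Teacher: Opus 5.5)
Your proposal is correct and takes essentially the same route as the paper: the same $u\notin\Gamma_t\cup(\Gamma_t\oplus\beta)\cup\{u',u'\oplus\beta\}$ to exclude linear structures, the same identity $W_{F'}=W_F+(-1)^{\omega''\cdot u'}\bigl(W_{Q_{u'}}-W_{u'\cdot X}\bigr)$ with the $2^{t+2}$ bound, and the same uncancellable monomial of degree $2N-t-1$. The differences are cosmetic: you compute $W_{Q_{u'}}$ by factoring the sum over $X_S$ and $X_{\bar S}$ rather than via $(-1)^M=1-2M$, and you read the degree off the coefficient of $Y_0\cdots Y_{N-1}$ rather than writing $F'\oplus F$ as the indicator of $Y=u'$ times $\prod_{i\notin\operatorname{supp}(u')}X_i$ and comparing with $\deg F\le N+p+1$.
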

\begin{proof}
We first prove that \(F'\) admits no nonzero linear structure. Suppose that
\((\alpha,\beta)\in\mathbb F_2^N\times\mathbb F_2^N\) is a linear
structure of \(F'\). If \(\beta=\boldsymbol{0}_N\), the argument used
above for \(F\) in Theorem \ref{FFF} applies unchanged and yields
\(\alpha=\boldsymbol{0}_N\).

Now suppose that $\beta\neq\boldsymbol{0}_N$. The set
$\Gamma_t\cup(\Gamma_t\oplus \beta)\cup\{u',u'\oplus \beta\}$ has
cardinality at most
$2|\Gamma_t|+2\leq2^{q+2}-2<2^N$. We can therefore choose
$u\in\mathbb F_2^N$ such that
$u,u\oplus \beta\notin\Gamma_t\cup\{u'\}$. For this $u$, both
corresponding constituent functions are linear, and
\[
\begin{aligned}
D_{(\alpha,\beta)}F'(X,u)=&
F'_u(X)\oplus F'_{u\oplus \beta}(X\oplus \alpha)
\\=&
u\cdot X\oplus(u\oplus \beta)\cdot(X\oplus \alpha)\\
=&
\beta\cdot X\oplus(u\oplus \beta)\cdot \alpha.
\end{aligned}
\]
This function is nonconstant because $\beta\neq\boldsymbol{0}_N$, a contradiction.
Thus $(\alpha,\beta)=(\boldsymbol{0}_N,\boldsymbol{0}_N)$, and $F'$ admits no nonzero linear structure.

We next show that \(Q_{u'}\) is also \(t\)-resilient. Since
\((-1)^{\prod_{i\notin\operatorname{supp}(u')}X_i}
=1-2\prod_{i\notin\operatorname{supp}(u')}X_i\), we have
\begin{align*}
W_{Q_{u'}}(\omega')
&=\sum_{X\in\mathbb F_2^N}
  (-1)^{(u'\oplus\omega')\cdot X}-2
  \sum_{X\in\mathbb F_2^N}\big(\prod_{i\notin\operatorname{supp}(u')}X_i
  (-1)^{(u'\oplus\omega')\cdot X}\big)\\=&
\sum_{X\in\mathbb F_2^N}
  (-1)^{(u'\oplus\omega')\cdot X}-2
  \sum_{\substack{X\in\mathbb F_2^N\\
  X_i=1\ \text{for all }i\notin\operatorname{supp}(u')}}
  (-1)^{(u'\oplus\omega')\cdot X}.
\end{align*}
The first sum is \(2^N\) when \(\omega'=u'\) and zero
otherwise. In the second sum, the \(t+1\) coordinates indexed
by \(\operatorname{supp}(u')\) are free. Thus, this sum vanishes
unless \(\omega'_i=1\) for every \(i\in\operatorname{supp}(u')\);
when this condition holds, it equals
\(2^{t+1}(-1)^{\sum_{i\notin\operatorname{supp}(u')}\omega'_i}\).
Consequently,
\[
W_{Q_{u'}}(\omega')=
\begin{cases}
2^N-2^{t+2},
  & \omega'=u',\\
-2^{t+2}
 (-1)^{\sum_{j\notin\operatorname{supp}(u')}\omega'_j},
  & \operatorname{supp}(u')
    \subsetneq\operatorname{supp}(\omega'),\\
0,
  & \text{otherwise},
\end{cases}
\]
which implies \(W_{Q_{u'}}(\omega')=0\) whenever
\(\operatorname{wt}(\omega')\leq t\).
Thus \(Q_{u'}\) is \(t\)-resilient. 
For
$(\omega',\omega'')\in\mathbb F_2^N\times\mathbb F_2^N$,
the Walsh transform of $F'$ is
\begin{align*}
W_{F'}(\omega',\omega'')
=&\sum_{u\in\mathbb F_2^N}
  (-1)^{\omega''\cdot u}W_{F'_u}(\omega')\\=&\sum_{u\in\Gamma_t}
  (-1)^{\omega''\cdot u}W_{H_u}(\omega')
  +(-1)^{\omega''\cdot u'}W_{Q_{u'}}(\omega')+
  \sum_{u\notin\Gamma_t\cup\{u'\}}
  (-1)^{\omega''\cdot u}W_{u\cdot X}(\omega').
\end{align*}
Since $Q_{u'}$ is $t$-resilient and $F'_u=F_u$ for
$u\ne u'$, every $F'_u$ is $t$-resilient. The same
Walsh-transform argument as in the proof of Theorem \ref{FFF}
then shows that $F'$ is $t$-resilient.

We now bound its nonlinearity. The Walsh transform of
$u'\cdot X$ equals $2^N$ at $\omega'=u'$ and zero
elsewhere. Comparing this with the formula for
$W_{Q_{u'}}$ gives
$\bigl|W_{Q_{u'}}(\omega')-W_{u'\cdot X}(\omega')\bigr|
\leq 2^{t+2}$
for every $\omega'\in\mathbb F_2^N$. Since $F'$ differs
from $F$ only in the constituent indexed by $u'$,
\[W_{F'}(\omega',\omega'')
=
W_F(\omega',\omega'')
+(-1)^{\omega''\cdot u'}
\bigl(W_{Q_{u'}}(\omega')-W_{u'\cdot X}(\omega')\bigr).\]
By Theorem~\ref{FFF},
$\max_{(\omega',\omega'')\in\F_2^N\times\F_2^N}|W_F(\omega',\omega'')|
=2^N+2^{m+1}$. Consequently,
\[
\max_{(\omega',\omega'')\in\F_2^N\times\F_2^N}|W_{F'}(\omega',\omega'')|
\leq 2^N+2^{m+1}+2^{t+2},
\]
and hence
$N_{F'}\geq
2^{2N-1}-2^{N-1}-2^m-2^{t+1}$.
Finally, $t<m$ and $p\geq2$ imply
$2^{N-1}+2^m+2^{t+1}<2^N$, so $F'$ is SAO.

Finally, we determine the algebraic degree of $F'$.
Since $\operatorname{wt}(u')=t+1$, we have
$u'\notin\Gamma_t$ and $F_{u'}(X)=u'\cdot X$.
Thus,
$F'_{u'}(X)\oplus F_{u'}(X)
=\prod_{i\notin\operatorname{supp}(u')}X_i$,
while $F'_u=F_u$ for every $u\ne u'$. It follows that
\[
F'(X,Y)=F(X,Y)\oplus
\Big(\prod_{i=0}^{N-1}(Y_i\oplus u'_i\oplus1)\Big)
\Big(\prod_{i\notin\operatorname{supp}(u')}X_i\Big).
\]
By Proposition \ref{cishu},
 $\deg(F)\leq N+p+1$. Moreover, $N+p+1<2N-t-1$, since
$(2N-t-1)-(N+p+1)=m-t-1\ge1$, as shown above.
The additional term in the expression for $F'$ has degree
$2N-t-1$, and
it cannot be cancelled by any term of $F$, 
Consequently, $\deg(F')=2N-t-1$, which attains the
algebraic-degree bound for a $2N$-variable $t$-resilient
Boolean function.
\end{proof}
\begin{Example}\label{resilientexm}
Let $(p,q,m,N)=(4,3,7,12)$ and $t=1$.
The parameter condition in Theorem~\ref{The-Set-withoutlinear} is
satisfied since
$3<2^4-4-1=11$. Let $\mathcal C$ be the binary $[7,3,4]$ simplex
code generated by
\[
G_{\mathcal{C}}=
\begin{pmatrix}
1&0&0&1&1&0&1\\
0&1&0&1&0&1&1\\
0&0&1&0&1&1&1
\end{pmatrix}.
\]
Thus, the minimum distance of $\mathcal C$ is $4\geq2t+1$. Then
$\mathbb{B}_7(1)=\{0,1,2,4,8,16,32,64\}$. Choose
\[
\mathcal{S}=\{0,1,2,3,4,5,6,8,9,10,11,12,15,16,32,64\}.
\]
Hence $\mathcal{S}$ is a complete set of coset representatives of
$\mathcal C$ and satisfies $\mathbb{B}_7(1)\subseteq S$. 

Choose $A\in\operatorname{GL}(7,2)$ such that
\begin{equation*}
    A^T=
\begin{pmatrix}
1&0&0&0&1&0&0\\
0&1&0&0&0&1&0\\
0&0&1&0&0&0&1\\
0&0&0&1&1&1&0\\
0&0&0&0&1&0&1\\
0&0&0&0&0&1&1\\
0&0&0&0&1&1&1
\end{pmatrix}.
\end{equation*}
Hence
$A^T\bigl(\{0_4\}\times\mathbb F_2^3\bigr)=\mathcal C$.
 Direct computation
gives
\[
\begin{aligned}
\mathcal{D}_0=(A^T)^{-1}\mathcal{S}
=\{&
(0,0),(1,0),(2,0),(3,0),
(4,0),(5,0),(6,0),(7,7),
(8,0),\\&(9,0),(10,0),(11,0),
(12,0),(13,5),(14,6),(15,0)
\}.
\end{aligned}
\]
Thus,
$\mathcal{D}_0=\{(y',h(y'))\mid y'\in\mathbb F_2^4\}$,
where
$h=(\phi_0,\phi_1,\phi_2):\mathbb F_2^4\to\mathbb F_2^3$ is given
by
$\phi_0(y')
=y_0y_2y_3\oplus y_0y_1y_2,
\phi_1(y')
=y_1y_2y_3\oplus y_0y_1y_2$ and
$\phi_2(y')
= y_0y_1y_2y_3\oplus y_1y_2y_3\oplus y_0y_2y_3
  \oplus y_0y_1y_2$,
for $y'=(y_0,y_1,y_2,y_3)\in\mathbb F_2^4$.
Use this function $h$ to construct $\mathcal F_0$ as in
Theorem~\ref{The-Set-withoutlinear}.

For every $0\leq k\leq7$, choose
$S_k=\mathbb F_2^4\times\{0\}$,
$\gamma_k=(\boldsymbol{0}_4,1)$,
and
$\mathcal L_k(y')=(y',0)$ for $y'\in\F_2^4$.
Thus, all the selector maps $\sigma_k$ coincide. Choose
$\psi_k(y_0,y_1,y_2,y_3)=\psi(y_0,y_1,y_2,y_3)=y_0y_1y_2y_3$
for every $0\leq k\leq7$.
For the function $h$ above, a direct verification gives
$\psi\notin U$, and clearly $\deg(\psi)=4$. Choose the offset
family $R$ by taking
$a_{kr}=b_{kr}=0$,
$a'_{kr}=\psi_k(v_r)$,
$b'_{kr}=\psi_k(v_r)\oplus1$.
The preceding construction then produces the family
$\Omega_R^{(\mathcal F_0^A)}$.

Since
$|\Gamma_1|
=
13
<
2^{q+1}-2
=
14$,
we take
$\mathcal H_1
=
\Omega_R^{(\mathcal F_0^A)}
\setminus
\{g_{0,A},g'_{0,A},g'_{7,A}\}$
and index its $13$ members by the vectors in
$\Gamma_1=\{0,1,2,4,8,16,32,64,128,256,512,1024,2048\}$.
Define $F$ by the concatenation framework above. Then $F$ is a
$24$-variable $1$-resilient Boolean function without nonzero linear
structures. Since $|\Gamma_1|=13$ is odd, Proposition \ref{cishu}
gives
$\deg(F)=N+p+1=17$.
Its nonlinearity is
$N_F
=
2^{23}-2^{11}-2^7$.

We now apply the algebraic-degree modification to the same function.
Choose
$u'=3\in\mathbb F_2^{12}$
and define
$Q_{u'}(X)
=
X_0\oplus X_1\oplus \prod_{i=2}^{11}X_i$.
Replacing only the linear
constituent indexed by $u'$ with $Q_{u'}$, as in \eqref{F'}, gives
the modified concatenation $F'$. By
Theorem~\ref{zuiyou}, $F'$ is a $24$-variable
$1$-resilient Boolean function without nonzero linear structures and
$\deg(F')=2N-t-1=22$,
which is optimal.
A direct computation gives
$N_{F'}=2^{23}-2^{11}-2^7-2^2$.
\end{Example}
Table~\ref{tab:comparison-24} compares the functions in Example~\ref{resilientexm} with existing $24$-variable SAO $1$-resilient Boolean functions.
\begin{table}[H]
\caption{Comparison of $24$-variable SAO $1$-resilient Boolean
functions.}
\label{tab:comparison-24}
\centering
\small
\setlength{\tabcolsep}{4pt}
\renewcommand{\arraystretch}{1.12}
\begin{tabular}{@{}lcccc@{}}
\toprule
\makecell[l]{Construction}
& \makecell[c]{Nonlinearity}
& \makecell[c]{Algebraic\\degree}
& \makecell[c]{Optimal algebraic\\degree attained}
& \makecell[c]{No nonzero\\linear structures}
\\
\midrule
\multicolumn{1}{c}{$F$ }
& $2^{23}-2^{11}-2^7$
& $17$
& No
& Yes
\\
\multicolumn{1}{c}{$F'$ }
& $2^{23}-2^{11}-2^7-2^2$
& $22$
& Yes
& Yes
\\
\addlinespace[2pt]
\multicolumn{1}{c}{\cite{2009disjoint}}
& $2^{23}-2^{11}-2^8-2^2$
& $22$
& Yes
& Not established
\\
\multicolumn{1}{c}{\cite{resi1}}
& $2^{23}-2^{11}-2^7-2^2$
& $22$
& Yes
& Not established
\\
\multicolumn{1}{c}{\cite{resi2}}
& $2^{23}-2^{11}-2^6$
& $18$
& No
& Yes
\\
\bottomrule
\end{tabular}
\end{table}
Compared with the construction of Li et al.~\cite{resi2},
the modified function $F'$ attains the optimal algebraic degree
$2N-2$, whereas the former has maximum attainable algebraic degree $3N/2$.  Thus, its
distance from the optimal degree is at least $N/2-2$ and
increases with $N$. Both constructions guarantee the absence
of nonzero linear structures. This degree advantage of $F'$
comes at the cost of nonlinearity: for $2N=24$, the nonlinearity
of the construction of Li et al. exceeds that of $F'$ by $68$.
More generally, subtracting our guaranteed lower bound
$2^{2N-1}-2^{N-1}-2^m-2^2$ from their nonlinearity
$2^{2N-1}-2^{N-1}-2^{N/2}$ gives
$2^m+2^2-2^{N/2}$. Nevertheless, their construction is restricted to
$1$-resilient functions. We next demonstrate the higher-order applicability of our method by
constructing $2$- and $3$-resilient SAO Boolean functions with optimal
algebraic degree and no nonzero linear structures.

More precisely, take $p=11$ and $t=2$ and use the binary
$[19,8,5]$ code specified in Appendix. Then $N=31$,
$5=2t+1$, and
$\sum_{i=0}^{2}\binom{31}{i}=497<2^9-2=510$.
The construction therefore yields a $62$-variable SAO
$2$-resilient Boolean function with optimal algebraic degree
$59$ and nonlinearity at least
$2^{61}-2^{30}-2^{19}-2^3$.
For the binary $[23,12,7]$ Golay code, take $p=11$ and $t=3$.
In this case, $N=35$, $7=2t+1$, and
$\sum_{i=0}^{3}\binom{35}{i}=7176<2^{13}-2=8190$.
We thus obtain a $70$-variable SAO $3$-resilient Boolean
function with optimal algebraic degree $66$ and nonlinearity
at least $2^{69}-2^{34}-2^{23}-2^4$.
Neither function admits a nonzero linear structure.
Table~\ref{tab:comparison-62} compares our $62$-variable
$2$-resilient function with previously known constructions.
\begin{table}[H]\small
\caption{Comparison of $62$-variable SAO $2$-resilient
Boolean functions.}
\label{tab:comparison-62}
\centering
\small
\renewcommand{\arraystretch}{1.15}
\begin{tabular}{@{}ccccc@{}}
\toprule
\makecell[c]{Construction}
&
\makecell[c]{Nonlinearity}
&
\makecell[c]{Algebraic\\degree}
&
\makecell[c]{Optimal algebraic\\degree attained}
&
\makecell[c]{No nonzero\\linear structures}
\\
\midrule
This work
&
$\geq2^{61}-2^{30}-2^{19}-2^3$
&
$59$
&
Yes
&
Yes
\\
\cite{2009disjoint}
&
$2^{61}-2^{30}-2^{19}-2^{18}-2^3$
&
$59$
&
Yes
&
Not established
\\
\cite{resi1}
&
$2^{61}-2^{30}-2^{19}-2^3$
&
$59$
&
Yes
&
Not established
\\
\bottomrule
\end{tabular}
\end{table}

\section{Conclusion}\label{con}
We have developed two explicit algebraic constructions of maximum-cardinality families of disjoint spectra plateaued Boolean functions without nonzero linear structures. Under the conditions established in Sections \ref{sec-plateauedwithoutlinear} and \ref{sec-partially-bent}, both constructions allow a prescribed common algebraic degree: all members of the first family can simultaneously attain the optimal degree, while those of the second can simultaneously attain the next-to-optimal degree. We give conditions excluding common EA-equivalence with the spectral construction of \cite[Theorem~4.4]{disjoint2}, as well as examples from both constructions that are not even memberwise EA-equivalent to any family obtained from it. As an application of the first construction, the
coding-theoretic condition in Section~\ref{sec-SAO} yields $t$-resilient
disjoint spectra plateaued subfamilies without nonzero linear
structures that are either maximum-cardinality or fall short
by only one member. Under the additional cardinality condition,
selected members serve as constituents in the framework
of~\cite{2009disjoint}, producing SAO $t$-resilient Boolean
functions without nonzero linear structures. Furthermore, modifying one linear constituent yields functions
with optimal algebraic degree while preserving SAO
$t$-resiliency and the absence of nonzero linear structures.
Instances are given for $t=1,2,3$.

Two problems remain open:
\begin{enumerate}
\item Does there exist a maximum-cardinality family of disjoint
spectra plateaued functions without nonzero linear structures
whose members all lie outside \(\mathcal{GMM}\)?
\item   For arbitrary admissible numbers of variables,
plateaued orders, and resiliency orders, determine the
maximum attainable cardinality of families of resilient
disjoint spectra plateaued Boolean functions without
nonzero linear structures, and give explicit constructions
attaining this maximum.
\end{enumerate}
\appendix[A Binary {$[19,8,5]$} Code for the $62$-Variable Construction]
\label{app:62-variable}
 $\mathcal C$ is the binary linear code generated by
$G_{\mathcal C}=[\,G_{16} \mid \boldsymbol{0}_{8\times3}\,]$,
where $\boldsymbol{0}_{8\times3}$ is the $8\times3$ zero matrix and
$G_{16}$ is given below.
\begin{equation*}\small
    G_{16}=
\begin{pmatrix}
1&0&0&1&1&1&0&0&1&0&0&0&0&0&0&0\\
0&1&0&0&1&1&1&0&0&1&0&0&0&0&0&0\\
0&0&1&0&0&1&1&1&0&0&1&0&0&0&0&0\\
0&0&0&1&0&0&1&1&1&0&0&1&0&0&0&0\\
0&0&0&0&1&0&0&1&1&1&0&0&1&0&0&0\\
0&0&0&0&0&1&0&0&1&1&1&0&0&1&0&0\\
0&0&0&0&0&0&1&0&0&1&1&1&0&0&1&0\\
0&0&0&0&0&0&0&1&0&0&1&1&1&0&0&1
\end{pmatrix}.
\end{equation*}

	\end{document}